\newtheorem{remark}{Remark}
\newtheorem{theorem}{Theorem}
\newtheorem{lemma}{Lemma} 
\newtheorem{corollary}{Corollary}
\newtheorem{definition}{Definition}
\def \cZ {\mathcal{Z}}
\def \bP {\mathbb{P}}
\def \bE {\mathbb{E}}
\def \cM {\mathcal{M}}
\begin{document}

\title{Maximum Likelihood Estimation of Functionals of Discrete Distributions}

\author{Jiantao~Jiao,~\IEEEmembership{Student Member,~IEEE},~Kartik~Venkat,~\IEEEmembership{Student Member,~IEEE},~Yanjun~Han,~\IEEEmembership{Student Member,~IEEE}, and Tsachy~Weissman,~\IEEEmembership{Fellow,~IEEE}

\thanks{Manuscript received Month 00, 0000; revised Month 00, 0000; accepted Month 00, 0000. Date of current version Month 00, 0000. This work was supported in part by the Center for Science of Information (CSoI) under grant agreement CCF-0939370. Materials of this paper were presented in part at the 2015 International Symposium on Information Theory, Hong Kong, China. Personal use of this material is permitted. However, permission to use this material for any other purposes must be obtained from the IEEE by sending a request to pubs-permissions@ieee.org. }
\thanks{Jiantao Jiao, Yanjun Han, and Tsachy Weissman are with the Department of Electrical Engineering, Stanford University, CA, USA. Email: \{jiantao,yjhan, tsachy\}@stanford.edu.} 
\thanks{Kartik Venkat was with the Department of Electrical Engineering, Stanford University. He is currently with PDT Partners. Email: kvenkat@alumni.stanford.edu. }
}

\date{\today}

\vspace{-10pt}

\maketitle

\begin{abstract}
We consider the problem of estimating functionals of discrete distributions, and focus on tight (up to universal multiplicative constants for each specific functional) nonasymptotic analysis of the worst case squared error risk of widely used estimators. We apply concentration inequalities to analyze the random fluctuation of these estimators around their expectations, and the theory of approximation using positive linear operators to analyze the deviation of their expectations from the true functional, namely their \emph{bias}.

We explicitly characterize the worst case squared error risk incurred by the Maximum Likelihood Estimator (MLE) in estimating the Shannon entropy $H(P) = \sum_{i = 1}^S -p_i \ln p_i$, and the power sum $F_\alpha(P) = \sum_{i = 1}^S p_i^\alpha,\alpha>0$, up to universal multiplicative constants for each fixed functional, for any alphabet size $S\leq \infty$ and sample size $n$ for which the risk may vanish. As a corollary, for Shannon entropy estimation, we show that it is necessary and sufficient to have $n  \gg S$ observations for the MLE to be consistent. In addition, we establish that it is necessary and sufficient to consider $n \gg S^{1/\alpha}$ samples for the MLE to consistently estimate $F_\alpha(P), 0<\alpha<1$. The minimax rate-optimal estimators for both problems require $S/\ln S$ and $S^{1/\alpha}/\ln S$ samples, which implies that the MLE has a strictly sub-optimal sample complexity. When $1<\alpha<3/2$, we show that the worst-case squared error rate of convergence for the MLE is $n^{-2(\alpha-1)}$ for infinite alphabet size, while the minimax squared error rate is $(n\ln n)^{-2(\alpha-1)}$. When $\alpha\geq 3/2$, the MLE achieves the minimax optimal rate $n^{-1}$ regardless of the alphabet size.

As an application of the general theory, we analyze the Dirichlet prior smoothing techniques for Shannon entropy estimation. In this context, one approach is to plug-in the Dirichlet prior smoothed distribution into the entropy functional, while the other one is to calculate the Bayes estimator for entropy under the Dirichlet prior for squared error, which is the conditional expectation. We show that in general such estimators do \emph{not} improve over the maximum likelihood estimator. No matter how we tune the parameters in the Dirichlet prior, this approach cannot achieve the minimax rates in entropy estimation. The performance of the minimax rate-optimal estimator with $n$ samples is essentially \emph{at least} as good as that of Dirichlet smoothed entropy estimators with $n\ln n$ samples. 
\end{abstract}

\begin{IEEEkeywords}
entropy estimation, maximum likelihood estimator, Dirichlet prior smoothing, approximation theory, high dimensional statistics, R\'enyi entropy, approximation using positive linear operators
\end{IEEEkeywords}

\section{Introduction}

Entropy and related information measures arise in information theory, statistics, machine learning, biology, neuroscience, image processing, linguistics, secrecy, ecology, physics, and finance, among other fields. Numerous inferential tasks rely on data driven procedures to estimate these quantities~(see, e.g. \cite{Olsen--Meyer--Bontempi2009impact,Pluim--Maintz--Viergever2003mutual,Viola--Wells1997alignment,Batina--Gierlichs--Prouff--Rivain--Standaert--Veyrat2011mutual,Hill1973diversity,Franchini--Its--Korepin2008Renyi}). We focus on two concrete and well-motivated examples of information measures, namely the Shannon entropy~\cite{Shannon1948}
\begin{equation}\label{eqn.entropy}
H(P) \triangleq \sum_{i = 1}^S - p_i \ln p_i,
\end{equation}
and the power sum $F_\alpha(P), \alpha > 0$:
\begin{equation}
F_\alpha (P) \triangleq \sum_{i = 1}^S p_i^\alpha, \alpha>0.
\end{equation}
The power sum $F_\alpha(P)$ functional often emerges in various operational problems~\cite{Breiman--Friedman--Stone--Olshen1984classification}. It also has connections to the R\'enyi entropy \cite{Renyi1961measures} $H_\alpha(P)$ via the formula $H_\alpha(P) = \frac{\ln F_\alpha(P)}{1-\alpha}$. 

Consider estimating the Shannon entropy $H(P)$ based on $n$ i.i.d. samples following unknown discrete distribution $P$ with \emph{unknown} alphabet size $S$. This problem has a rich history with extensive study in various fields ranging from information theory, statistics, neuroscience, physics, psychology, medicine, etc. We refer the reader to~\cite{Jiao--Venkat--Han--Weissman2015minimax} for a review. One of the most widely used estimators for this purpose is the Maximum Likelihood Estimator (MLE), which is simply the empirical entropy. The empirical entropy is an instantiation of the plug-in principle in functional estimation, where a point estimate of the parameter (distribution $P$ in this case) is used to construct an estimator for a {\em functional} of the parameter via the plug-in approach. The idea of using the MLE for estimating information measures of interest (in this case entropy), is not only intuitive, but has sound justification: {\em asymptotic efficiency}.

The beautiful theory of H\'ajek and Le Cam \cite{Hajek1970characterization,Hajek1972local,LeCam1986asymptotic} shows that, as the number of observed samples grows without bound while the finite parameter dimension (e.g., alphabet size) remains fixed, the MLE performs optimally in estimating any differentiable functional when the statistical model complies with the benign LAN (Local Asymptotic Normality) condition~\cite{LeCam1986asymptotic}. Thus, for finite dimensional problems, the problems of parameter and functional estimation are well understood in an asymptotic sense, and the MLE appears to be not only natural but also theoretically justified. But does it make sense to employ the MLE to estimate the entropy in most practical applications?

As it turns out, while asymptotically optimal in entropy estimation, the MLE is by no means sacrosanct in many real applications, especially in regimes where the alphabet size is comparable to, or even larger than the number of observations. It was shown that the MLE for entropy is strictly sub-optimal in the large alphabet regime~\cite{Paninski2003,Paninski2004}. Therefore, classical asymptotic theory does not satisfactorily address high dimensional settings, which are becoming increasingly important in the modern era of high dimensional statistics.

There has been a wave of recent research activities focusing on analyzing existing approaches of functional estimation, as well as proposing new estimators that are provably near optimal in the large alphabet regime. Paninski~\cite{Paninski2003} showed that the MLE needs $n \gg S$ samples to consistently estimate the Shannon entropy, and Paninski~\cite{Paninski2004} established the existence of a (non-explicit) estimator that only required $n \ll S$ samples. It implies that the MLE is strictly sub-optimal in terms of sample complexity. It was Valiant and Valiant~\cite{Valiant--Valiant2011} who first explicitly constructed a linear programming based estimator (later modified in~\cite{Valiant--Valiant2013estimating}) that achieves consistency in entropy estimation with $n \gg S/\ln S$ samples, which they also proved to be necessary. Valiant and Valiant~\cite{Valiant--Valiant2011power} constructed another approximation based estimator that achieved better theoretical properties than the linear programming ones, which was not yet shown to be minimax rate-optimal for all ranges of $S$ and $n$. The authors~\cite{Jiao--Venkat--Han--Weissman2015minimax} constructed the first minimax rate-optimal estimators for $H(P)$ and $F_\alpha(P),\alpha>0$ based on best polynomial approximation, which are agnostic to the alphabet size $S$. Utilizing the released MATLAB and Python packages of the estimators in~\cite{Jiao--Venkat--Han--Weissman2015minimax}, \cite{Jiao--Venkat--Han--Weissman2014beyond,jiao2016beyond} demonstrated that these minimax rate-optimal estimators can lead to significant performance boosts in various machine learning tasks. Wu and Yang~\cite{Wu--Yang2014minimax} independently applied the best polynomial approximation idea to entropy estimation and obtained the minimax rates. However, their estimator requires the knowledge of the alphabet size $S$. The approximation ideas proved to be very fruitful in Acharya \emph{et al.}~\cite{Acharya--Orlitsky--Suresh--Tyagi2014complexity}, Wu and Yang~\cite{wu2015chebyshev}, Han, Jiao, and Weissman~\cite{Han--Jiao--Weissman2016minimaxdivergence}, Jiao, Han, and Weissman~\cite{jiao2016minimaxl1distance}, Bu \emph{et al.}~\cite{bu2016estimation}, Orlitsky, Suresh, and Wu~\cite{orlitsky2016optimal}, Wu and Yang~\cite{wu2016sample}.  

The main contribution of this paper is an explicit characterization of the worst case squared error risk of estimating $H(P)$ and $F_\alpha(P)$ using the MLE up to a universal multiplicative constant for each specific functional, for all ranges of $S$ and $n$ in which the risk may vanish. Understanding the benefits and limitations of the MLE in a nonasymptotic setting serves two key purposes. First, the approach is a natural benchmark for comparing other more nuanced procedures for estimation of functionals. Second, performance analysis for the MLE reveals regimes where the problem is difficult, and motivates the development of improvements, which have been validated in~\cite{Paninski2003,Paninski2004,Valiant--Valiant2011,Valiant--Valiant2013estimating,Valiant--Valiant2011power,Jiao--Venkat--Han--Weissman2015minimax,Wu--Yang2014minimax,Acharya--Orlitsky--Suresh--Tyagi2014complexity}. As a byproduct of the analysis, we explicitly point out an equivalence between bias analysis of functional estimators using plug-in rules and approximation theory using positive linear operators. We believe these powerful tools introduced from approximation theory may have far reaching impacts in various applications in the information theory community. 

We mention that there exist numerous other approaches proposed in various disciplines to estimate entropy, many among which are difficult to analyze theoretically. Among them we mention the Miller--Madow bias-corrected estimator and its variants \cite{Miller1955,Carlton1969bias,Grassberger1988finite}, the jackknife estimator \cite{Zahl1977jackknifing}, the shrinkage estimator \cite{Hausser--Strimmer2009entropy}, the coverage adjusted estimator \cite{Chao--Shen2003nonparametric}, the Best Upper Bound (BUB) estimator \cite{Paninski2003}, the B-Splines estimator \cite{Daub--Steuer--Selbig--Kloska2004}, and~\cite{Grassberger2008entropy,Vinck--Battaglia--Balakirsky--Vinck--Pennartz2012estimation} etc. For a Bayesian statistician, a natural approach is to first impose a prior on the unknown discrete distribution before considering estimating entropy. The Dirichlet prior, being the conjugate prior to the multinomial distribution, appears to be particularly popular in the Bayesian approach to entropy estimation. Dirichlet smoothing may have two connotations in the context of entropy estimation: 
\begin{itemize}
\item \cite{Schurmann--Grassberger1996entropy,Schober2013some} One first obtains a Bayes estimate for the discrete distribution $P$, which we denote by $\hat{P}_B$, and then plugs it in the entropy functional to obtain the entropy estimate $H(\hat{P}_B)$.
\item \cite{Wolpert--Wolf1995}\cite{Holste--Grosse--Herzel1998bayes} One calculates the Bayes estimate for entropy $H(P)$ under Dirichlet prior for squared error. The estimator is the conditional expectation $\bE[H(P)|\mathbf{X}]$, where $\mathbf{X}$ represents the samples. 
\end{itemize}

Nemenman, Shafee, and Bialek~\cite{Nemenman--Shafee--Bialek2002entropy} argued in an intuitive way why Dirichlet prior is bad for entropy estimation and proposed to use mixtures of Dirichlet priors. Archer, Park, and Pillow~\cite{Archer--Park--Pillow2012bayesian} have come up with priors that perform better than the Dirichlet prior. Also see~\cite{Nemenman--Bialek--vanSteveninck2004entropy,Nemenman2011coincidences}. 

Another contribution of this paper is an explicit characterization of the worst case squared error risk of estimating $H(P)$ using the Dirichlet prior plug-in approach up to a universal multiplicative constant, for all ranges of $S$ and $n$ in which the risk may vanish. We show rigorously that neither of the two approaches utilizing the Dirichlet prior result in improvements over the MLE in the large alphabet regime. Specifically, these approaches require at least $n\gg S$ to be consistent, while the minimax rate-optimal estimators such as the ones in~\cite{Jiao--Venkat--Han--Weissman2015minimax}\cite{Wu--Yang2014minimax} only need $n \gg \frac{S}{\ln S}$ to achieve consistency.

The rest of the paper is organized as follows. We present the main results in Section~\ref{sec.mainresults}, discuss the fundamental ideas behind the proofs in Section~\ref{sec.ideas}, and detail the proofs in Section~\ref{sec.upperboundproof} and~\ref{sec.lowerboundproof}. Proofs of auxiliary lemmas are deferred to the appendices. 

\section{Preliminaries}\label{sec.preliminaries}

The Dirichlet distribution with order $S\geq 2$ with parameters $\alpha_1,\ldots,\alpha_S>0$ has a probability density function with respect to Lebesgue measure on the Euclidean space $\mathbb{R}^{S-1}$ given by
\begin{equation}
f \left(x_1,\cdots, x_{S}; \alpha_1,\cdots, \alpha_S \right) = \frac{1}{\mathrm{B}(\boldsymbol\alpha)} \prod_{i=1}^S x_i^{\alpha_i - 1}
\end{equation}
on the open $S-1$-dimensional simplex defined by:
\begin{align} 
&x_1, \cdots, x_{S-1} > 0 \\
&x_1 + \cdots + x_{S-1} < 1 \\ 
&x_S = 1 - x_1 - \cdots - x_{S-1} 
\end{align}
and zero elsewhere. The normalizing constant is the multinomial Beta function, which can be expressed in terms of the Gamma function:
\begin{equation}
\mathrm{B}(\boldsymbol\alpha) = \frac{\prod_{i=1}^S \Gamma(\alpha_i)}{\Gamma\left(\sum_{i=1}^S \alpha_i\right)},\qquad\boldsymbol{\alpha}=(\alpha_1,\cdots,\alpha_S).
\end{equation}

Assuming the unknown discrete distribution $P$ follows prior distribution $P \sim \text{Dir}(\boldsymbol\alpha)$, and we observe a vector $\mathbf{X} = (X_1,X_2,\ldots,X_S)$ with multinomial distribution $\mathsf{multi}(n;p_1,p_2,\ldots,p_S)$, then one can show that the posterior distribution $P_{P|\mathbf{X}}$ is also a Dirichlet distribution with parameters
\begin{equation}
{\boldsymbol\alpha} + \mathbf{X} = \left( \alpha_1 + X_1,\alpha_2 + X_2,\ldots,\alpha_S + X_S \right).
\end{equation}

Furthermore, the posterior mean (conditional expectation) of $p_i$ given $\mathbf{X}$ is given by~\cite[Example 5.4.4]{Lehmann--Casella1998theory}
\begin{equation}
\delta_i(\mathbf{X}) \triangleq \bE[p_i|\mathbf{X}] = \frac{\alpha_i + X_i}{n + \sum_{i = 1}^S \alpha_i}.
\end{equation}

The estimator $\delta_i(\mathbf{X})$ is widely used in practice for various choices of $\mathbf{\alpha}$. For example, if $\alpha_i = \frac{\sqrt{n}}{S}$, then the corresponding $\left( \delta_1(\mathbf{X}),\delta_2(\mathbf{X}),\ldots,\delta_S(\mathbf{X})\right)$ is the minimax estimator for $P$ under squared loss~\cite[Example 5.4.5]{Lehmann--Casella1998theory}. However, it is no longer minimax under other loss functions such as $\ell_1$ loss, which was investigated in~\cite{Han--Jiao--Weissman2015minimaxdistribution}. 

Note that the estimator $\delta_i(\mathbf{X})$ subsumes the MLE $\hat{p}_i = \frac{X_i}{n}$ as a special case, since we can take the limit $\boldsymbol\alpha\to 0$ for $\delta_i(\mathbf{X})$ to obtain MLE. We denote the empirical distribution by $P_n = (\hat{p}_1,\hat{p}_2,\ldots, \hat{p}_S)$. The Dirichlet prior smoothed distribution estimate is denoted as $\hat{P}_B$, where
\begin{equation}
\hat{P}_B = \frac{n}{n + \sum_{i = 1}^S \alpha_i} P_n + \frac{\sum_{i = 1}^S \alpha_i}{n+\sum_{i = 1}^S \alpha_i}  \frac{\boldsymbol\alpha}{\sum_{i = 1}^S \alpha_i}.    
\end{equation}
Note that the \emph{smoothed} distribution $\hat{P}_B$ can be viewed as a convex combination of the empirical distribution $P_n$ and the \emph{prior} distribution $\frac{\boldsymbol\alpha}{\sum_{i = 1}^S \alpha_i}$. We call the estimator $H(\hat{P}_B)$ the \emph{Dirichlet prior smoothed plug-in estimator}. 

Another way to apply Dirichlet prior in entropy estimation is to compute the Bayes estimator for $H(P)$ under squared error, given that $P$ follows Dirichlet prior. It is well known that the Bayes estimator under squared error is the conditional expectation. It was shown in Wolpert and Wolf~\cite{Wolpert--Wolf1995} that
\begin{align}
\hat{H}^{\mathsf{Bayes}} & \triangleq \bE[H(P)|\mathbf{X}] \nonumber \\
& = \psi\left( 1 + \sum_{i = 1}^S (\alpha_i + X_i)\right) \nonumber \\
& \quad  - \sum_{i = 1}^S \left( \frac{\alpha_i + X_i}{\sum_{i = 1}^S (\alpha_i + X_i)} \right) \psi(\alpha_i + X_i + 1),\label{eqn.bayesestimator}
\end{align}
where $\psi(z) \triangleq \frac{\Gamma'(z)}{\Gamma(z)}$ is the digamma function. We call the estimator $\hat{H}^{\mathsf{Bayes}}$ the \emph{Bayes estimator under Dirichlet prior}.

Throughout this paper, we observe $n$ i.i.d. samples from an unknown discrete distribution $P = (p_1, p_2, \ldots, p_S)$. We denote the $n$ samples as $n$ i.i.d. random variables $\{Z_i\}_{1\leq i\leq n}$ taking values in $\mathcal{Z} = \{1,2,\ldots,S\}$ with probability $(p_1,p_2,\ldots,p_S)$. Defining
\begin{equation}
X_i \triangleq \sum_{j = 1}^n \mathbbm{1}(Z_j = i),\quad 1\leq i \leq S,
\end{equation}
we know that $(X_1,X_2,\ldots,X_S)$ follows a multinomial distribution with parameter $(n;p_1,p_2,\ldots,p_S)$. Denote $h_j \triangleq \sum_{i = 1}^S \mathbbm{1}(X_i = j),\quad 0\leq j \leq n$. The Maximum Likelihood Estimator (MLE) for $H(P)$ and $F_\alpha(P)$ are defined, respectively, as $H(P_n)$ and $F_\alpha(P_n)$, with $P_n$ being the empirical distribution. We assume the functional $F(P)$ takes the form
\begin{align}\label{eqn.generalf}
F(P) = \sum_{i = 1}^S f(p_i).
\end{align} 
Then it is evident that the MLE $F(P_n)$ for estimating functional $F(P)$ in (\ref{eqn.generalf}) can be alternatively represented as the following linear function of $(h_0,h_1,\ldots,h_n)$:
\begin{equation}
F(P_n) = \sum_{j = 0}^n f\left( \frac{j}{n} \right) h_j.
\end{equation}

Recall that the risk function under squared error for any estimator $\hat{F}$ in estimating functional $F(P)$ may be decomposed as
\begin{align}
\mathbb{E}_{P} (F(P) - \hat{F})^2 & = (\mathbb{E}_P \hat{F} - F(P))^2 + \mathbb{E}_P \left( \hat{F} - \mathbb{E}_P \hat{F}\right)^2,
\end{align}
where $(\mathbb{E}_P \hat{F} - F(P))^2$ represents the squared bias, and $\mathbb{E}_P \left( \hat{F} - \mathbb{E}_P \hat{F}\right)^2$ represents the variance. The subscript $P$ means that the expectation is taken with respect to the distribution $P$ that generates the i.i.d.\ observations. We omit the subscript for the expectation operator $\mathbb{E}$ if the meaning of the expectation is clear from the context.

{\em Notation:} $a \wedge b$ denotes $\min\{a,b\}$, $a \vee b$ denotes $\max\{a,b\}$. For two non-negative series $\{a_n\},\{b_n\}$, notation $a_n \lesssim b_n$ means that there exists a positive universal constant $C<\infty$ such that $\frac{a_n}{b_n}\leq C$, for all $n$. The notation $a_n \asymp b_n$ is equivalent to $a_n \lesssim b_n$ and $b_n \lesssim a_n$. Notation $a_n \gg b_n$ means that $\liminf_{n\to \infty} \frac{a_n}{b_n} =\infty$. Throughout this paper, the notations $\lesssim, \gtrsim, \ll, \gg$ involve absolute constants that may only depend on $\alpha$ but not $S$ or $n$. We denote by $\mathcal{M}_S$ the space of discrete distributions with alphabet size $S$.

\section{Main results}\label{sec.mainresults}

\subsection{Estimating $F_\alpha(P)$}

We split the upper bounds and the lower bounds into two theorems, and present their succinct summaries in Corollary~\ref{cor.largerthanone} and~\ref{cor.l2ratesfalphalessone}.  

\begin{theorem}[Upper bounds]\label{thm.main}
We have the following upper bounds on the worst case squared error risk of MLE in estimating $F_\alpha(P)$:
\begin{enumerate}
\item $\alpha\geq 2$:
\begin{align}
\sup_{P \in \cM_S} \bE_P \left( F_\alpha(P_n) - F_\alpha(P) \right)^2  \leq \left( \frac{\alpha(\alpha-1)}{2n}\right)^2 + \frac{\alpha^2}{4n}.
\end{align}
\item $1<\alpha<2$:
\begin{align}
& \sup_{P \in \cM_S} \bE_P \left( F_\alpha(P_n) - F_\alpha(P) \right)^2 \nonumber \\
& \quad \leq \left ( \frac{4}{n^{\alpha-1}} \wedge  \frac{3S^{1-\alpha/2}}{n^{\alpha/2}} \wedge C_{\alpha,n} \frac{5S}{2n} \right)^2  + \frac{\alpha^2}{4n},
\end{align}
where $C_{\alpha,n} \triangleq n \omega_\varphi^2(x^\alpha,n^{-1/2})>0$ satisfies $\limsup_{n\to \infty}C_{\alpha,n}<\infty$ for $1<\alpha<2$, and $\omega_\varphi^2$ is the second-order Ditzian--Totik modulus of smoothness introduced in Section~\ref{sec.biasidea}. 
\item $1/2 \leq \alpha<1$:
\begin{align}
& \sup_{P \in \cM_S} \bE_P \left( F_\alpha(P_n) - F_\alpha(P) \right)^2 \nonumber \\
&\quad \leq \left(\frac{3S^{1-\alpha/2}}{2n^{\alpha/2}} \wedge  \frac{5S}{2n^{\alpha}} \right)^2 \nonumber \\
& \qquad + \left( \frac{10S^{2-2\alpha}}{n}+ \frac{120}{\alpha^2}\left(\frac{S}{n^{2\alpha}}\wedge \frac{1}{n^{2\alpha-1}}\right) \right).
\end{align}
\item $0<\alpha<1/2$:
\begin{align}
& \sup_{P \in \cM_S} \bE_P \left( F_\alpha(P_n) - F_\alpha(P) \right)^2 \nonumber \\
& \quad \leq \left(\frac{3S^{1-\alpha/2}}{2n^{\alpha/2}} \wedge  \frac{5S}{2n^{\alpha}} \right)^2 \nonumber \\
& \qquad + \left( \frac{10S}{n^{2\alpha}}+ \frac{120}{\alpha^2}\left(\frac{S}{n^{2\alpha}}\wedge \frac{1}{n^{2\alpha-1}}\right) \right).
\end{align}
\end{enumerate}
Moreover, in all the bounds presented above, the first term bounds the square of the bias, and the second term bounds the variance.
\end{theorem}

\begin{theorem}[Lower bounds]\label{thm.mainlower}
We have the following lower bounds on the worst case squared error risk of MLE in estimating $F_\alpha(P)$:
\begin{enumerate}
\item $\alpha\geq 3/2$: there exists a constant $C_\alpha>0$ such that for all $n$,
\begin{align}
\sup_{P \in \cM_S} \bE_P \left( F_\alpha(P_n) - F_\alpha(P) \right)^2  \geq \frac{C_\alpha}{n}.
\end{align}
\item $1<\alpha<3/2$: if $S = cn$, for any $c>0$, then
\begin{align}
\liminf_{n\to \infty} n^{2(\alpha-1)} \cdot \sup_{P \in \mathcal{M}_S} \bE_P(F_\alpha(P_n) - F_\alpha(P))^2>0.
\end{align}
\item $1/2 \leq \alpha<1$: if $n\geq S$, then
\begin{align}
& \sup_{P \in \mathcal{M}_S} \bE_P \left( F_\alpha(P_n) - F_\alpha(P) \right)^2  \nonumber \\
& \quad \geq \frac{\alpha^2(1-\alpha)^2}{72 n^{2\alpha}} (S-1)^2 \left(1-\frac{1}{n}\right)^2 \nonumber \\
& \quad \quad + \Bigg(\frac{\alpha^2}{64en}\Bigg[(2(S-1))^{1-\alpha}-2^{-\alpha} \nonumber \\
& \qquad -\frac{1-\alpha}{4n}\left((2(S-1))^{1-\alpha}+2^{-\alpha}\right)\Bigg]^2 \nonumber \\
& \qquad - \frac{1}{2} e^{-n/4}S^{2(1-\alpha)} \Bigg),
\end{align}
\item $0<\alpha<1/2$: if $n\geq S$, then
\begin{align}
& \sup_{P \in \mathcal{M}_S} \bE_P \left( F_\alpha(P_n) - F_\alpha(P) \right)^2 \nonumber \\
& \quad \geq \frac{\alpha^2(1-\alpha)^2}{36 n^{2\alpha}} (S-1)^2 \left(1-\frac{1}{n}\right)^2.
\end{align}
\end{enumerate}
\end{theorem}

There are several interesting implications of this result, highlighted in the following corollaries.
\begin{corollary}\label{cor.largerthanone}
For any fixed $\alpha>1$, there exist universal convergence rates for $F_\alpha(P)$:
\begin{align}
& \sup_{S \in \mathbb{N}_+} \sup_{P \in \cM_S} \bE_P \left( F_\alpha(P_n) - F_\alpha(P) \right)^2 \nonumber \\
& \quad \asymp \begin{cases} n^{-2(\alpha-1)} & 1< \alpha<3/2 \\ n^{-1} & \alpha\geq 3/2 \end{cases}
\end{align}
\end{corollary}

Corollary~\ref{cor.largerthanone} implies that, when $\alpha\geq 3/2$, estimation of $F_\alpha(P)$ is extremely simple in terms of convergence rate: plug-in estimation achieves the best possible rate $n^{-1}$ (as shown in the theory of regular statistical experiments of classical asymptotic theory, see~\cite[Chap. 1.7.]{Ibragimov--Hasminskii1981}). Results of this form have appeared in the literature, for example, Antos and Kontoyiannis \cite{Antos--Kontoyiannis2001convergence} showed that it suffices to take $n \gg 1$ samples to consistently estimate $F_\alpha(P), \alpha\geq 2, \alpha \in \mathbb{Z}$. However, when $1<\alpha<3/2$, the rate $n^{-2(\alpha-1)}$ is considerably slower. Interestingly, there exist estimators that demonstrate better convergence rates for estimating $F_\alpha(P), 1<\alpha<3/2$. Jiao \emph{et al.}~\cite{Jiao--Venkat--Han--Weissman2015minimax} showed that the minimax rate in estimating $F_\alpha(P), 1<\alpha<3/2$, is $(n\ln n)^{-2(\alpha-1)}$ as long as $S\gtrsim n\ln n$, which is achieved using the general methodology developed therein for constructing minimax rate-optimal estimators for nonsmooth functionals. 

Let us now examine the case $0 < \alpha <  1$, another interesting regime that has not been characterized before. In this regime, we observe significant increase in the difficulty of the estimation problem. In particular, the relative scaling between the number of observations $n$ and the alphabet size $S$ for consistent estimation of $F_\alpha(P)$ exhibits a phase transition, encapsulated in the following.
\begin{corollary}\label{cor.l2ratesfalphalessone}
Fix $\alpha \in (0,1)$. The worst case squared error risk of the MLE $F_\alpha(P_n)$ in estimating $F_\alpha(P)$ is characterized as follows when $n\geq S$:
\begin{align}
& \sup_{P \in \mathcal{M}_S} \bE_P \left( F_\alpha(P_n) - F_\alpha(P) \right)^2 \nonumber \\
& \quad \asymp \begin{cases} \frac{S^2}{n^{2\alpha}} + \frac{S^{2-2\alpha}}{n} & 1/2< \alpha<1 \\ \frac{S^2}{n^{2\alpha}} & 0<\alpha \leq 1/2 \end{cases}
\end{align}
\end{corollary}

Corollary~\ref{cor.l2ratesfalphalessone} follows directly from Theorem~\ref{thm.main} and Theorem~\ref{thm.mainlower}. In particular, it implies that it is necessary and sufficient to take $n \gg S^{1/\alpha}$ samples to consistently estimate $F_\alpha(P), 0<\alpha<1$ using MLE. Thus, as one might expect, the scale of the number of measurements required for consistent estimation increases as $\alpha$ decreases. When $\alpha \to 0$, the number of samples required for the MLE grows super-polynomially in $S$, which is consistent with the intuition that $F_\alpha(P),\alpha \to 0$ is essentially equivalent to the alphabet size of a distribution, whose estimation is known to be very hard when there may exist symbols with very small probabilities~\cite{Efron--Thisted1976}.

We exhibit some of our findings by plotting the value required of $\ln n/\ln S$ for consistent estimation of $F_\alpha(P)$ using the MLE $F_\alpha(P_n)$, as a function of $\alpha$,  in Figure~\ref{fig.falphaphase}.

\begin{center}
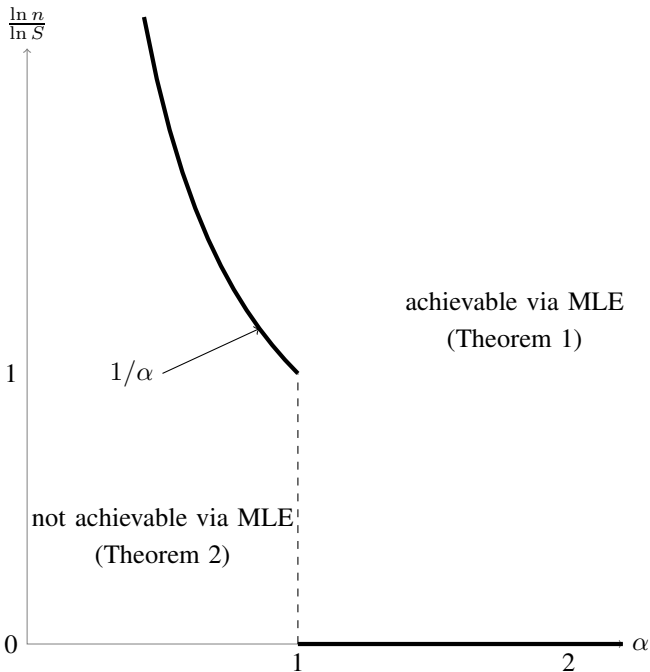

  \centering
\begin{tikzpicture}[xscale=3.6,yscale=3.6]
\draw [<->, help lines] (0.6,2.8) -- (0.6,0.6) -- (2.8,0.6);
\draw [ultra thick] (1.0316, 2.9171) -- (1.0789, 2.6879) -- (1.1263, 2.5000 ) -- (1.1737, 2.3431 )-- (1.2211, 2.2102) -- (1.2684,2.0961 ) -- (1.3158,1.9971 ) -- (1.3632,1.9103 ) -- (1.4105,1.8338 ) -- (1.4579,1.7656 ) -- (1.5053, 1.7047) -- (1.5526,1.6497 ) -- (1.6000, 1.6);
\draw [ultra thick](1.6, 0.6) -- (2.8, 0.6);
\draw [dashed] (1.6, 0.6) -- (1.6,1.6);
\node [above] at (1.1,1) {not achievable via MLE};
\node [below] at (1.1,1) {(Theorem~\ref{thm.mainlower})};
\node [left] at (0.6, 1.6) {1};
\node [left] at (0.6,0.6) {0};
\node [below] at (1.6, 0.6) {1};
\node [below] at (2.6,0.6) {2};
\node [right] at (2.8,0.6) {$\alpha$};
\node [above] at (0.6,2.8) {$\frac{\ln n}{\ln S}$};
\draw [->] (1.1,1.6) -- (1.4579,1.7656 );
\node [left] at (1.1, 1.6) {$1/\alpha$};
\node [above] at (2.4, 1.8) {achievable via MLE};
\node [below] at (2.4, 1.8) {(Theorem~\ref{thm.main})};
\end{tikzpicture}
\captionof{figure}{For any fixed point above the thick curve, consistent estimation of $F_\alpha(P)$ is achieved using MLE $F_\alpha(P_n)$ as shown in Theorem~\ref{thm.main}. For any fixed point below the thick curve in the regime $0<\alpha<1$, Theorem~\ref{thm.mainlower} shows that the MLE does not have vanishing maximum squared error risk. }
\label{fig.falphaphase}
\end{center}

It turns out that one can construct estimators that are better than the MLE in terms of required sample complexity for consistent estimation for the regime $0<\alpha<1$. Indeed, Jiao \emph{et al.}~\cite{Jiao--Venkat--Han--Weissman2015minimax} showed that the minimax rate-optimal estimator requires $n\gg \frac{S^{\frac{1}{\alpha}}}{\ln S}$ samples to achieve consistency, which attains a logarithmic improvement in the sample complexity over the MLE. 

\subsection{Estimating $H(P)$}

We not only consider $H(P_n)$, but also the so-called Miller--Madow bias-corrected estimator \cite{Miller1955} defined as
\begin{equation}
H^{\text{MM}}(P_n) = H(P_n) + \frac{S-1}{2n}. 
\end{equation}

\begin{theorem}\label{thm.entropyrisk}
The worst case squared error risk of $H(P_n)$ admits the following upper bound for all $S,n$: 
\begin{align}
& \sup_{P \in \cM_S} \bE_P \left( H(P_n) - H(P) \right)^2 \nonumber \\
& \quad \leq  \left( \ln \left( 1+ \frac{S-1}{n} \right) \right)^2 + \left(\frac{(\ln n)^2}{n}\wedge \frac{2(\ln S + 3)^2}{n}\right).
\end{align}
If $n \geq 15S$, then
\begin{align}
& \sup_{P \in \mathcal{M}_S} \bE_P \left( H(P_n) - H(P) \right)^2 \nonumber \\
& \quad \geq \frac{1}{2}\left(\frac{S-1}{2n} + \frac{S^2}{20n^2} - \frac{1}{12n^2}\right)^2 + c \frac{\ln^2 S}{n}.
\end{align}
Moreover, if $n \geq 15S$, the Miller--Madow bias-corrected estimator satisfies
\begin{align}
& \sup_{P \in \mathcal{M}_S} \bE_P \left( H^{\mathrm{MM}}(P_n) - H(P) \right)^2 \nonumber \\
& \quad \geq \frac{1}{2}\left(\frac{S^2}{20n^2} - \frac{1}{12n^2}\right)^2 + c \frac{\ln^2 S}{n},
\end{align}
where the positive constant $c>0$ in both expressions does not depend on $S$ or $n$.
\end{theorem}

Theorem~\ref{thm.entropyrisk} implies the following corollary.
\begin{corollary}\label{cor.entropyl2rate}
The worst case squared error risk of the MLE $H(P_n)$ in estimating $H(P)$ is characterized as follows when $n\geq 15S$:
\begin{equation}
\sup_{P \in \mathcal{M}_S} \bE_P \left( H(P_n) - H(P) \right)^2 \asymp \frac{S^2}{n^2} + \frac{\ln^2 S}{n}.
\end{equation}
Here the first term corresponds to the squared bias, and the second term corresponds to the variance. 
\end{corollary}

Paninski~\cite{Paninski2003} showed that if $n = cS$, where $c>0$ is a constant, the maximum squared error risk of $H(P_n)$, and the Miller--Madow bias-corrected estimator $H^{\mathrm{MM}}(P_n)$, would be bounded from zero. Paninski~\cite{Paninski2003} also showed that when $n \gg S, n \to \infty$, the MLE is consistent for estimating entropy. Corollary~\ref{cor.entropyl2rate} implies that it is necessary and sufficient to take $n\gg S$ samples for the MLE to be consistent for estimating entropy. Comparing the results for $H(P)$ with those for $F_\alpha(P)$, we see that the intuition that $H(P)$ being viewed close to $F_\alpha(P)$ when $\alpha \to 1^{-1}$ is indeed approximately correct as $H(P)$ coincides with $\alpha \to 1^{-}$ on the phase transition curve shown in Figure \ref{fig.falphaphase}.

Table~\ref{table.summary} summarizes the minimax squared error rates and the worst case squared error rates of the MLE in estimating $H(P)$ and $F_\alpha(P), \alpha>0$. It is clear that the MLE cannot achieve the minimax rates for estimation of $H(P)$, and $F_\alpha(P)$ when $0 < \alpha < 3/2$. In these cases, there exist strictly better estimators whose performance with $n$ samples is roughly the same as that of the MLE with $n \ln n$ samples. This phenomenon was termed \emph{effective sample size enlargement} in~\cite{Jiao--Venkat--Han--Weissman2015minimax}. 

\begin{table*}[t]
\begin{center}
    \begin{tabular}{| l | l | l |}
    \hline
    & Minimax squared error rates & Maximum squared error rates of MLE\\ \hline
$H(P)$    & $\frac{S^2}{(n \ln n)^2} + \frac{\ln^2 S}{n} \quad \left( n \gtrsim S/\ln S \right)$ (\cite{Jiao--Venkat--Han--Weissman2015minimax,Wu--Yang2014minimax,Valiant--Valiant2011power,Valiant--Valiant2011}) & $\frac{S^2}{n^2} + \frac{\ln^2 S}{n}\quad \left( n \gtrsim S \right)$ (Corollary~\ref{cor.entropyl2rate})  \\ \hline

$F_\alpha(P), 0<\alpha \leq \frac{1}{2}$ &  $\frac{S^2}{(n \ln n)^{2\alpha}} \quad \left( n \gtrsim S^{1/\alpha}/\ln S, \ln n \lesssim \ln S \right)$ (\cite{Jiao--Venkat--Han--Weissman2015minimax})  & $\frac{S^2}{n^{2\alpha}}\quad \left( n \gtrsim S^{1/\alpha} \right)$ (Corollary~\ref{cor.l2ratesfalphalessone})     \\
    \hline

$F_\alpha(P), \frac{1}{2}< \alpha<1$ &    $\frac{S^2}{(n \ln n)^{2\alpha}} + \frac{S^{2-2\alpha}}{n}\quad \left( n \gtrsim S^{1/\alpha}/\ln S \right)$ (\cite{Jiao--Venkat--Han--Weissman2015minimax}) & $\frac{S^2}{n^{2\alpha}} + \frac{S^{2-2\alpha}}{n}\quad \left( n \gtrsim S^{1/\alpha} \right)$ (Corollary~\ref{cor.l2ratesfalphalessone})  \\ \hline

$F_\alpha(P), 1< \alpha<\frac{3}{2}$ & $(n \ln n)^{-2(\alpha-1)}\quad \left(S \gtrsim n\ln n \right)$ (\cite{Jiao--Venkat--Han--Weissman2015minimax})  & $n^{-2(\alpha-1)}\quad \left(S \gtrsim n \right)$ (Corollary~\ref{cor.largerthanone})  \\ \hline

$F_\alpha(P), \alpha\geq \frac{3}{2}$ & $n^{-1}$ (Theorem~\ref{thm.main})  & $n^{-1}$  \\ \hline
    \end{tabular}

    \caption{Summary of results in this paper and the companion~\cite{Jiao--Venkat--Han--Weissman2015minimax}}
     \label{table.summary}
\end{center}
\end{table*}

\subsection{Dirichlet prior techniques applying to entropy estimation}

For symmetry, we restrict attention to the case where the parameter $\boldsymbol\alpha$ in the Dirichlet distribution takes the form $(a,a,\ldots,a)$. 

In comparison to MLE $H(P_n)$, where $P_n$ is the empirical distribution, the Dirichlet smoothing scheme $H(\hat{P}_B)$ has a disadvantage: it requires the knowledge of the alphabet size $S$ in general. We define
\begin{equation}
\hat{p}_{B,i} = \frac{n \hat{p}_i + a}{n + Sa},
\end{equation}
and
\begin{equation}
p_{B,i} = \bE[\hat{p}_{B,i}] = \frac{np_i + a}{n+Sa}.
\end{equation}
It is clear that 
\begin{align}
\hat{P}_B & = \frac{n}{n+Sa} P_n + \frac{Sa}{n+Sa} U_S
\label{eqn.Pbconstruction} \\
P_B & = \frac{n}{n+Sa} P + \frac{Sa}{n+Sa} U_S,
\end{align}
where $P_n$ stands for the empirical distribution, $P$ is the true distribution, and $U_S$ denotes the uniform distribution on the same alphabet with size $S$. 

\begin{theorem}\label{thm.upperbound}
If $n\geq \max\{Sa,2ea\}$, then the maximum squared error risk of $H(\hat{P}_B)$ in estimating $H(P)$ is upper bounded as
\begin{align}
& \sup_{P \in \mathcal{M}_S} \bE_P \left( H(\hat{P}_B) - H(P)\right)^2 \nonumber \\
& \quad \leq \left( \ln\left(1+\frac{S-1}{n+Sa}\right) \vee \frac{2Sa}{n+Sa} \ln \left( \frac{n+Sa}{2a}\right) \right)^2 \nonumber \\
& \qquad + \frac{2n}{(n+Sa)^2}\left[3+\ln\left(\frac{n+Sa}{a+1}\wedge S\right)\right]^2. 
\end{align}
Here the first term bounds the squared bias, and the second term bounds the variance. 
\end{theorem}

\begin{theorem}\label{thm.lowerbound}
If $n\ge\max\{15S,Sa, 2ea\}$, then the maximum $L_2$ risk of $H(\hat{P}_B)$ in estimating $H(P)$ is lower bounded as
\begin{align}
& \sup_{P \in \mathcal{M}_S} \bE_P \left( H(\hat{P}_B) - H(P)\right)^2 \nonumber \\
& \quad \geq \frac{1}{2}\left[\frac{(S-1)a}{4(n+Sa)}\ln \left( \frac{n+Sa}{a}\right) + \frac{S-1}{8n} + \frac{S^2}{80n^2} - \frac{1}{48n^2}\right]^2 \nonumber \\
& \qquad + c \frac{\ln^2 S}{n},
\end{align}
where $c>0$ is a universal constant that does not depend on $a, S$, or $n$.  

If $n<Sa$, then we have
\begin{equation}
\sup_{P \in \mathcal{M}_S} \bE_P \left( H(\hat{P}_B) - H(P)\right)^2 \geq \left( \frac{S-1}{2S} \right)^2 \ln^2 S.
\end{equation}

If $n<2ea$, then we have
\begin{equation}
\sup_{P \in \mathcal{M}_S} \bE_P \left( H(\hat{P}_B) - H(P)\right)^2 \geq  \left( \frac{S-1}{S+2e} \right)^2 \ln^2 S.
\end{equation}

If $n<15S, n\geq 2ea$, then we have
\begin{align}
& \sup_{P \in \mathcal{M}_S} \bE_P \left( H(\hat{P}_B) - H(P)\right)^2 \nonumber \\
& \quad \geq \left[\left( \frac{(S-1)a}{4(n+Sa)}\ln  \left( \frac{n+Sa}{a}\right) + \frac{\lfloor n/15 \rfloor}{8n} - \frac{1}{16n} \right)_+ \right]^2,
\end{align}
  where $\lfloor x \rfloor$ is the largest integer that does not exceed $x$, and $(x)_+ = \max\{x,0\}$ represents the positive part of $x$. 
\end{theorem}

The following corollary immediately follows from Theorem~\ref{thm.upperbound} and Theorem~\ref{thm.lowerbound}. 
\begin{corollary}
If $n\gg S$ and $a$ is upper bounded by a constant, then the maximum squared error risk of $H(\hat{P}_B)$ vanishes. Conversely, if $n \lesssim S$, then the maximum squared error risk of $H(\hat{P}_B)$ is bounded away from zero. 
\end{corollary}

The next theorem presents a lower bound on the maximum risk of the Bayes estimator under Dirichlet prior. Since we have assumed that all $\alpha_i = a,1\leq i\leq S$, the Bayes estimator under Dirichlet prior is
\begin{equation}
\hat{H}^{\mathsf{Bayes}} = \psi(Sa +n + 1) - \sum_{i = 1}^S \frac{a + X_i}{Sa + n} \psi(a + X_i + 1). 
\end{equation}

\begin{theorem}\label{thm.bayeslowerbound}
If $S \geq 2(n+1)$, then
\begin{equation}
\sup_{P\in \mathcal{M}_S} \bE_P \left( \hat{H}^{\mathsf{Bayes}} - H(P) \right)^2 \geq \left(  \ln \left( \frac{Sa + S/2}{Sa + n + e^{-\gamma}} \right) \right)^2,
\end{equation}
where $\gamma \approx 0.5772$ is the Euler-–Mascheroni constant. 
\end{theorem}

Evident from Theorem~\ref{thm.upperbound},~\ref{thm.lowerbound}, and~\ref{thm.bayeslowerbound} is the fact that in the best situation (i.e. $a$ not too large), both the Dirichlet prior smoothed plug-in estimator and the Bayes estimator under Dirichlet prior still require at least $n\gg S$ samples to be consistent, which is the same as MLE. In contrast, the estimators in Valiant and Valiant~\cite{Valiant--Valiant2011,Valiant--Valiant2011power,Valiant--Valiant2013estimating}, Jiao \emph{et al.}~\cite{Jiao--Venkat--Han--Weissman2015minimax}, Wu and Yang~\cite{Wu--Yang2014minimax} are consistent if $n\gg \frac{S}{\ln S}$, which is the optimal sample complexity. Thus, we can conclude that the Dirichlet smoothing technique does \emph{not} solve the entropy estimation problem. 

\section{Fundamental ideas of our analysis}\label{sec.ideas}

In this section, we discuss the fundamental tools we employed to obtain the results in Section~\ref{sec.mainresults}, as well as general recipes we suggest for analyzing performances of functional estimators. 

\subsection{Variance}

The variance characterizes the degree to which the random variable $F(\hat{P})$ is fluctuating around its expectation, and the field of concentration inequalities perfectly fits our glove to give the desired results. For all the functionals we consider, it turns out that the Efron--Stein inequality~\cite{Efron--Stein1981jackknife} and the bounded differences inequality give very tight bounds. For completeness we state them below. 

\begin{lemma}\label{lemma.esnew}\cite[Efron--Stein inequality, Theorem 3.1]{Boucheron--Lugosi--Massart2013}
Let $Z_1,\ldots,Z_n$ be independent random variables and let $f(Z_1,Z_2,\ldots,Z_n)$ be a square integrable function. Moreover, if $Z_1',Z_2',\ldots,Z_n'$ are independent copies of $Z_1,Z_2,\ldots,Z_n$ and if we define, for every $i = 1,2,\ldots,n$,
\begin{equation}
f_i' = f(Z_1,Z_2,\ldots,Z_{i-1},Z_i',Z_{i+1},\ldots,Z_n),
\end{equation}
then
\begin{equation}
\mathsf{Var}(f) \leq \frac{1}{2} \sum_{i = 1}^n \bE \left[ (f - f_i')^2 \right].
\end{equation}
\end{lemma}

The following inequality, which is called the bounded differences inequality, is a useful corollary of the Efron--Stein inequality.
\begin{lemma}\label{lemma.es}\cite[Bounded differences inequality, Corollary 3.2]{Boucheron--Lugosi--Massart2013}
If function $f \colon \mathcal{Z}^n \to \mathbb{R}$ has the \emph{bounded differences property}, i.e., for some nonnegative constants $c_1,c_2,\ldots,c_n$,
\begin{align}
& \sup_{z_1,\ldots,z_n,z_i'\in \mathcal{Z}} |f(z_1,\ldots,z_n) - f(z_1,\ldots,z_{i-1},z_i',z_{i+1},\ldots,z_n)| \nonumber \\
& \quad \leq c_i,
\end{align}
for every $1\leq i \leq n$, then
\begin{equation}
\mathsf{Var}(f(Z_1,Z_2,\ldots,Z_n)) \leq \frac{1}{4}\sum_{i=1}^n c_i^2,
\end{equation}
given that $Z_1,Z_2,\ldots,Z_n$ are independent random variables.
\end{lemma}

We refer the readers to Boucheron \emph{et al.}~\cite{Boucheron--Lugosi--Massart2013} for a modern exposition of the concentration inequality toolbox. 

\subsection{Bias} \label{sec.biasidea}

It turns out that the bias analysis in estimation, albeit widely studied in statistics, seems to still largely bear an asymptotic and expansion nature in the mainstream statistical literature~\cite{Barndorff--Cox1989asymptotic,Small2010expansions}. In particular, the bootstrap~\cite{Efron1979bootstrap} as a method for estimating functionals was essentially only analyzed in an asymptotic setting~\cite{Hall1992bootstrap}. Among asymptotic analysis techniques, probably the most popular one is the Taylor expansion. We will show that the Taylor expansion may encounter great difficulties in analyzing the bias of MLE in information measure estimation. Then, we will introduce the field of \emph{approximation theory using positive linear operators} and demonstrate that it is essentially equivalent to \emph{nonasymptotic} bias analysis for plug-in functional estimators. In doing so, we present the readers with abundant handy tools from approximation theory, which could be readily applicable to many problems that may seem highly intractable with standard expansion methods. 

We start from entropy estimation. In the literature, considerable effort has been devoted to understanding the non-asymptotic performance of the MLE $H(P_n)$ in estimating $H(P)$. One of the earliest investigations in this direction is due to Miller~\cite{Miller1955} in 1955, who showed that, for any fixed distribution $P$,
\begin{equation}\label{eqn.miller}
\bE H(P_n) = H(P) - \frac{S-1}{2n} + O\left(\frac{1}{n^2}\right).
\end{equation}
Equation~(\ref{eqn.miller}) was later refined by Harris~\cite{Harris1975} using higher order Taylor series expansions to yield
\begin{equation}\label{eqn.harris}
\bE H(P_n) = H(P) - \frac{S-1}{2n} + \frac{1}{12 n^2} \left( 1- \sum_{i = 1}^S \frac{1}{p_i} \right) + O\left( \frac{1}{n^3} \right).
\end{equation}
Harris's result reveals an undesirable consequence of the Taylor expansion method: one cannot obtain uniform bounds on the bias of the MLE. Indeed, the term $\sum_{i = 1}^S \frac{1}{p_i}$ can be arbitrarily large for some distribution $P$. However, it is evident that both $H(P_n)$ and $H(P)$ are bounded above by $\ln S$, since the maximum entropy of any distribution supported on $S$ elements is $\ln S$. Conceivably, for such a distribution $P$ that would make $\sum_{i = 1}^S \frac{1}{p_i}$ very large, we need to compute even higher order Taylor expansions to obtain more accuracy, but even with such efforts we cannot obtain a uniform bias bound for all $P$.

We gain one of our key insights into the bias of the MLE by relating it to the approximation error induced by the {\em Bernstein polynomial approximation} of the function $f$, which was first observed in Paninski~\cite{Paninski2003}. To see this, we first compute the bias of $F(P_n)$ in estimating the functional $F(P)$ in (\ref{eqn.generalf}). 

\begin{lemma}\label{lemma.biasgeneralf}
The bias of the estimator $F(P_n)$ is given by
\begin{align}\label{eqn.biasgeneralfequation}
\mathsf{Bias}(F(P_n)) &\triangleq \bE F(P_n) - F(P) \nonumber \\
&= \sum_{i = 1}^S \left( \sum_{j = 0}^n f \left( \frac{j}{n} \right) \binom{n}{j} p_i^j(1-p_i)^{n-j} - f(p_i) \right).
\end{align}
\end{lemma}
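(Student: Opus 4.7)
The plan is to compute the expectation directly, using linearity and the fact that each coordinate of the empirical distribution follows a binomial law. First I would fix $i \in \{1, \ldots, S\}$ and observe that the count $N_i$ of symbol $i$ among the $n$ i.i.d.\ samples drawn from $P$ satisfies $N_i \sim \mathrm{Binomial}(n, p_i)$, so $P_n(i) = N_i/n$ takes values in $\{0, 1/n, \ldots, 1\}$ with $\bP(P_n(i) = j/n) = \binom{n}{j} p_i^j (1-p_i)^{n-j}$.

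By linearity of expectation applied to $F(P_n) = \sum_{i=1}^S f(P_n(i))$,
\begin{equation*}
\bE F(P_n) = \sum_{i=1}^S \bE f(P_n(i)) = \sum_{i=1}^S \sum_{j=0}^n f\!\left(\frac{j}{n}\right) \binom{n}{j} p_i^j (1-p_i)^{n-j},
\end{equation*}
and subtracting $F(P) = \sum_{i=1}^S f(p_i)$ yields the claimed identity. The convention $f(0) = 0$ ensures the summand vanishes at any coordinate with $p_i = 0$, so the formula is unaffected by whether $P$ is fully supported on $\{1, \ldots, S\}$.

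There is no real technical obstacle; the statement is a one-line calculation, and the main ``work'' is simply recognizing what the right-hand side represents. The conceptually important point, and the reason the lemma is worth isolating, is that the inner sum is exactly the degree-$n$ \emph{Bernstein polynomial} of $f$ evaluated at $p_i$, namely $B_n(f, p_i) = \sum_{j=0}^n f(j/n) \binom{n}{j} p_i^j (1-p_i)^{n-j}$. Consequently the bias admits the clean rewriting $\mathsf{Bias}(F(P_n)) = \sum_{i=1}^S \bigl( B_n(f, p_i) - f(p_i) \bigr)$, which recasts bias analysis of the MLE as a sum of pointwise approximation errors of the classical Bernstein operator. This reformulation is what licenses, in subsequent sections, the use of quantitative approximation-theoretic tools (moduli of smoothness and estimates for positive linear operators) to produce uniform bias bounds over $P \in \cM_S$, circumventing the failure of the Taylor-expansion approach evidenced by the Miller--Harris identities.
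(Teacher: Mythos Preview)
Your proof is correct and essentially the same as the paper's: both use linearity of expectation together with the fact that each count $N_i = nP_n(i)$ is marginally $\mathsf{Binomial}(n,p_i)$. The only cosmetic difference is that the paper routes the computation through the histogram variables $h_j = \sum_i \mathbbm{1}(X_i=j)$ (summing over $j$ first), whereas you sum over $i$ first; the resulting double sum is identical.
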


The bias term in (\ref{eqn.biasgeneralfequation}) can be equivalently expressed as\footnote{In the literature of combinatorics, the sum $\sum_{j = 0}^n a_{j,n}B_{j,n}(x)$ is called the Bernoulli sum, and various approaches have been proposed to evaluate its asymptotics \cite{Jacquet--Szpankowski1999entropy}, \cite{Flajolet1999singularity}, \cite{Cichon--Golkbiewski--Kardas--Klonowski}. }
\begin{align}
\mathsf{Bias}(F(P_n)) & = \sum_{i = 1}^S  \left( \sum_{j = 0}^n f\left( \frac{j}{n} \right) B_{j,n}(p_i) - f(p_i) \right) \\
& = \sum_{i = 1}^S \left( B_n[f](p_i) - f(p_i)\right),
\end{align}
where $B_{j,n}(x) \triangleq \binom{n}{j} x^j(1-x)^{n-j}$ is the well-known Bernstein polynomial basis, and $B_n[f](x)$ is the so-called Bernstein polynomial for function $f(x)$.  Bernstein in 1912~\cite{Bernstein1958collected} provided an insightful constructive proof of the Weierstrass theorem on approximation of continuous functions using polynomials, by showing that the Bernstein polynomial of any continuous function converges uniformly to that function. From a functional analytic viewpoint, the Bernstein polynomial is an operator that maps a continuous function $f\in C[0,1]$ to another continuous function $B_n[f] \in C[0,1]$. This operator is linear in $f$, and is \emph{positive} because $B_n[f]$ is also pointwise non-negative if $f$ is pointwise non-negative. Evidently, bounding the approximation error incurred by the Bernstein polynomial is equivalent to bounding the bias of the MLE $f(X/n)$, where $X \sim \mathsf{B}(n,x)$. Fortunately, the theory of \emph{approximation using positive linear operators}~\cite{Paltanea2004} provides us with advanced tools that are very effective for the bias analysis our problem calls for. A century ago, probability theory served Bernstein in breaking new ground in function approximation. It is therefore very satisfying that advancements in the latter have come full circle to help us better understand probability theory and statistics. We briefly review the general theory of approximation using positive linear operators below.

\subsubsection{Approximation theory using positive linear operators}

Generally speaking, for any estimator $\hat{\theta}$ of a parametric model indexed by $\theta$, the expectation $f \mapsto \bE_\theta f(\hat{\theta})$ is a positive linear operator for $f$, and analyzing the bias $\bE_\theta f(\hat{\theta}) -f(\theta)$ is equivalent to analyzing the approximation properties of the positive linear operator $\bE_\theta f(\hat{\theta})$ in approximating $f(\theta)$. Hence, analyzing the bias of \emph{any} plug-in estimator for functionals of parameters from \emph{any} parametric families can be recast as a problem of approximation theory using positive linear operators~\cite{Paltanea2004}.

Conversely, given a positive linear operator $L(f)(x)$ that operates on the space of continuous functions, the Riesz--Markov--Kakutani theorem implies that under mild conditions the operator may be written as 
\begin{align}
L(f)(x) = \int_I f d\mu_x = \mathbb{E}_{\mu_x} f(Z), Z\sim \mu_x,
\end{align}
where $\{\mu_x\}$ is a set of probability measures parametrized by $x$, which may be viewed as a parameter. If we view the random variable $Z$ as a summary statistics to plug-in the functional $f(\cdot)$, the positive linear operator $L(f)(x)$ is nothing but the expectation of the plug-in estimator $f(Z)$. In this sense, there exists a one-to-one correspondence between essentially the most general bias analysis problem in statistics, and the most general positive linear operator approximation problem in approximation theory. 

After more than a century's active research on approximation using positive linear operators, we now have highly non-trivial tools for positive linear operators of functions on one dimensional compact sets, but the general theory for vector valued multivariate functions on non-compact sets is still far from complete~\cite{Paltanea2004}. In the next subsection, we present a sample of existing results in approximation using positive linear operators, corollaries of which will be used to analyze the bias of the MLE for two examples: $F_\alpha(P)$ and $H(P)$. 

\subsubsection{Some general results in bias analysis}\label{sec.generalbiasanalysis}

First, some elementary approximation theoretic concepts need to be introduced in order to characterize the degree of \emph{smoothness} of functions. For $I\subset \mathbb{R}$ an interval, the first-order modulus of smoothness $\omega^1(f,t),t\geq 0$ is defined as~\cite{Paltanea2004}
\begin{equation}
\omega^1(f,t) \triangleq \sup \{ |f(u) - f(v)|: u,v \in I, |u-v|\leq t \}.
\end{equation}

The second-order modulus of smoothness $\omega^2(f,t),t\geq 0$ \cite{Paltanea2004} is defined as
\begin{align}
\omega^2(f,t) & \triangleq \sup \Bigg \{ \left| f(u) - 2f\left(\frac{u+v}{2}\right) + f(v) \right|\colon \nonumber \\
& \qquad \qquad u, v\in I, |u-v|\leq 2t \Bigg \}.
\end{align}

Ditzian and Totik~\cite{Ditzian--Totik1987} introduced a class of moduli of smoothness, which proves to be extremely useful in characterizing the incurred approximation errors. For simplicity, for functions defined on $[0,1]$, $\varphi(x) = \sqrt{x(1-x)}$, the first-order Ditzian--Totik modulus of smoothness is defined as
\begin{align}
\omega^1_\varphi(f,t) & \triangleq \sup \Bigg \{ |f(u) - f(v)|\colon \nonumber \\
& \qquad \qquad u,v\in [0,1], |u-v| \leq t \varphi\left( \frac{u+v}{2}\right) \Bigg \},
\end{align}
and the second-order Ditzian--Totik modulus of smoothness is defined as
\begin{align}
\omega_\varphi^2(f,t) &  \triangleq \sup \Bigg \{ \left | f(u) - 2f \left( \frac{u+v}{2}\right) +f(v) \right|\colon \nonumber \\
&\qquad \qquad  u, v\in [0,1], |u-v| \leq 2t \varphi\left( \frac{u+v}{2} \right) \Bigg \}  .
\end{align}

Recall that we denote by $e_j,j\in \mathbb{N}_+ \cup \{0\}$, the monomial functions $e_j(y) = y^j,y\in I$. The first estimate for general positive linear operators, using modulus $\omega^2$ and with precise constants, was given by Gonska~\cite{Gonska1979quantitative}. We rephrase Paltanea~\cite[Cor. 2.2.1.]{Paltanea2004} as follows. Note that notation $e_1 - x e_0$ denotes a continuous function on $I$ which is the difference of a linear function $y$ and a constant function with constant value $x$ over $I$. In other words, it is an abbreviation of $e_1(y) - x e_0(y),y\in I$, which is a function of $y$ rather than $x$. 

For a positive linear functional $F$, we adopt the following notation
  \begin{align} \label{eqn.functionalbiasandvariance}
    B_F(x) = \left|F(e_1)-xF(e_0)\right|,\quad V_F = F\left((e_1-F(e_1)e_0)^2\right),
  \end{align}
  which represent the ``bias'' and ``variance'' of a positive linear functional $F$.

\begin{lemma}\cite[Cor. 2.2.1.]{Paltanea2004}\label{lemma.paltaneacor221}
Let $F\colon C(I) \to \mathbb{R}$ be a positive linear functional, where $I \subset \mathbb{R}$ is an interval. Suppose that $F(e_0) = 1, t>0, \mathrm{length}(I)\geq 2t, s\geq 2$. Then,
\begin{align}
|F(f) - f(x)| & \leq B_F(x)\frac{\omega^1(f,t)}{t} \nonumber \\
& \qquad + \left( 1 + \frac{F(|e_1 - xe_0|^s)}{2t^s} \right)\omega^2(f,t).
\end{align}
\end{lemma}

We remark that Lemma~\ref{lemma.paltaneacor221} can be applied to bound the bias of plug-in estimators in very general models. For example, consider an arbitrary statistical experiment $\{P_\theta, \theta \in I\}$, from which we obtain $n$ i.i.d. samples $X_1,X_2,\ldots,X_n \sim P_\theta$. For any estimator $\hat{\theta}_n$, we would like to analyze the bias of the plug-in estimator $f(\hat{\theta}_n)$ for functional $f(\theta)$.

Suppose $\mathrm{length}(I)\geq 2t, s\geq 2$, then Lemma~\ref{lemma.paltaneacor221} implies that
\begin{align}
|\bE_\theta f(\hat{\theta}_n) - f(\theta)| & \leq |\bE_\theta \hat{\theta}_n - \theta| \frac{\omega^1(f,t)}{t} \nonumber \\
& \qquad + \left( 1+ \frac{\bE |\hat{\theta}_n - \theta|^s}{2t^s} \right)\omega^2(f,t).
\end{align}

If we further assume that $\hat{\theta}_n$ is an unbiased estimator for $\theta$, i.e., $\bE_\theta \hat{\theta}_n = \theta$ holds for all $\theta \in I$, then we have
\begin{equation}
|\bE_\theta f(\hat{\theta}_n) -f(\theta)| \leq \left( 1+ \frac{\bE |\hat{\theta}_n - \theta|^s}{2t^s} \right)\omega^2(f,t).
\end{equation}

Taking $s = 2$ and assuming $\mathsf{Var}(\hat{\theta}_n) \leq \mathrm{length}(I)/2$, we have
\begin{equation}
|\bE_\theta f(\hat{\theta}_n) - f(\theta)| \leq \frac{3}{2} \omega^2(f, \sqrt{\mathsf{Var}(\hat{\theta}_n)}),
\end{equation}
after we take $t = \sqrt{\bE |\hat{\theta}_n - \theta|^2}$.

We remark that Lemma~\ref{lemma.paltaneacor221} is only one way to analyze the bias, which is by no means always tight. For example, the following estimate using Ditzian--Totik modulus is significantly better than Lemma~\ref{lemma.paltaneacor221} for certain functions such as the entropy.

\begin{lemma}\cite[Thm. 2.5.1.]{Paltanea2004}\label{lemma.paltaneathm251}
  If $F\colon C[0,1]\to \mathbb{R}$ is a linear positive functional and $F(e_0)=1$, then we have
    \begin{align}\label{eq:inequality1}
      |F(f)-f(x)| \le \frac{B_F(x)}{2h_1\varphi(x)}\cdot \omega^1_\varphi(f,2h_1) + \frac{5}{2}\omega^2_\varphi(f,h_1),
    \end{align}
    for all $f\in C[0,1]$ and $0<h_1\le \frac{1}{2}$, where $\varphi(x)=\sqrt{x(1-x)}$ and $h_1=\sqrt{F\left((e_1-xe_0)^2\right)}/\varphi(x)=\sqrt{V_F+(B_F(x))^2}/\varphi(x)$. The ``bias'' $B_F(x)$ and ``variance'' $V_F(x)$ are defined in~(\ref{eqn.functionalbiasandvariance}). 
  \end{lemma}

Considering the same statistical experiment $\{P_\theta, \theta \in I\}$, and the plug-in estimator $f(\hat{\theta}_n)$ for $f(\theta)$, if $\hat{\theta}_n$ is unbiased for $\theta$ and $\mathsf{Var}(\hat{\theta}_n) \leq \frac{\varphi(\theta)^2}{4}$, then it follows from Lemma~\ref{lemma.paltaneathm251} that
\begin{equation}
|\bE_\theta f(\hat{\theta}_n) - f(\theta)| \leq \frac{5}{2} \omega_\varphi^2 \left( f, \frac{\sqrt{\mathsf{Var}(\hat{\theta}_n)}}{\varphi(\theta)} \right),
\end{equation}
after we take $t = \frac{\sqrt{\mathsf{Var}(\hat{\theta}_n)}}{\varphi(\theta)}$.

For certain functions $f(\theta)$ and statistical models Lemma~\ref{lemma.paltaneathm251} is stronger than Lemma~\ref{lemma.paltaneacor221}. For example, if $f(\theta) = -\theta \ln \theta, \theta \in [0,1]$, and we have $n\cdot\hat{\theta}_n \sim \mathsf{B}(n,\theta)$. We will show in Lemma~\ref{lemma.dtmoduluscomputation} that $\omega_\varphi^2(f,t) = \frac{t^2\ln 4}{1+t^2}$, and $\omega^2(f,t) = t\ln 4$. We also have $\mathsf{Var}(\hat{\theta}_n) = \frac{\theta(1-\theta)}{n}$. Hence, Lemma~\ref{lemma.paltaneacor221} gives the upper bound
\begin{equation}\label{eqn.pointwiseboundforentropyf}
|\bE_\theta f(\hat{\theta}_n) - f(\theta)| \leq \frac{3\ln 4}{2} \sqrt{\frac{\theta(1-\theta)}{n}},
\end{equation}
whereas Lemma~\ref{lemma.paltaneathm251} gives
\begin{equation}\label{eqn.normboundentropyf}
|\bE_\theta f(\hat{\theta}_n) - f(\theta)| \leq \frac{5\ln 4}{2n} \cdot \frac{1}{1+1/n},
\end{equation}
which is much stronger when $n$ is large and $\theta$ not too close to the endpoints of $[0,1]$.

There also exist various estimates for the bias when the parameter lies in sets other than an interval in $\mathbb{R}$. However, the bounds we presented are in general \emph{not} optimal for specific functionals, thereby leaving ample room for future development. For example, note that (\ref{eqn.pointwiseboundforentropyf}) is stronger than (\ref{eqn.normboundentropyf}) when $\theta \leq 1/n$, but Han, Jiao, and Weissman~\cite{Han--Jiao--Weissman2015adaptive} showed that when $\theta \leq 1/n$ the pointwise bound in~(\ref{eqn.pointwiseboundforentropyf}) is still strictly sub-optimal for the entropy functional. Unsurprisingly, to obtain the results in Section~\ref{sec.mainresults}, we need to go beyond the general results in approximation theory, and incorporate the structure of specific functions.

{\em Note:} In approximation theory literature, researchers have explored the interactions between general positive linear operator approximation and its probabilistic counterpart decades ago~\cite{Strukov--Timan1977mathematical,Walk1980probabilistic,Hahn1981note}. However, in statistics literature related to positive linear approximation, usually only specific operators are used, such as the Bernstein operator~\cite{Braess--Sauer2004}, and the focus may not be on obtaining the tightest bound on bias~\cite{Diaconis--Zabell1991closed,Feller2008introduction}. 

\subsection{Lower bounds}

To lower bound the worst case performance of a specific estimator, we have essentially two approaches: first, to analyze the bias or the variance of the specific estimator carefully; second, to prove a lower bound that is satisfied by all the estimators, which naturally include the specific estimator we need to analyze. These two approaches have different relative advantages and disadvantages, so we utilize them together in the lower bound construction. 

We refer the readers to Tsybakov~\cite{Tsybakov2008} for a nice collection of techniques to prove minimax lower bounds. One specific approach we use is the van Trees inequality, which we quote below.

Let $(\mathcal{X},\mathcal{F},P_\theta; \theta \in \Theta)$ be a dominated family of distributions on some sample space $\mathcal{X}$; denote the dominating measure by $\mu$. Assume $\Theta$ is a closed interval on the real line. Let $f(x|\theta)$ denote the density of $P_\theta$ with respect to $\mu$. Let $\pi$ be some probability distribution on $\Theta$ with a density $\lambda(\theta)$ with respect to Lebesgue measure. Suppose that $\lambda$ and $f(x|\cdot)$ are both absolutely continuous ($\mu$-almost surely), and that $\lambda$ converges to zero at the endpoints of the interval $\Theta$. We define 
\begin{align}
\mathcal{I}(\theta) & = \mathbb{E}_\theta \left( \frac{\partial \log f(X|\theta)}{\partial \theta} \right)^2 \\
\mathcal{I}(\lambda) & = \mathbb{E}\left( \frac{\mathrm{d \log \lambda(\theta)}}{\mathrm{d}\theta} \right)^2
\end{align}
the Fisher information for $\theta$ and for a location parameter in $\lambda$, respectively. We assume $\mathcal{I}(\theta)$ is continuous in $\theta$. We have the following inequality.
\begin{lemma}[van Trees inequality]\cite{Gill--Levit1995applications} \label{lemma.vantrees}
Under assumptions above, the average risk of an arbitrary estimator $\hat{\psi}(X)$ in estimating an absolutely continuous functional $\psi(\theta)$ under squared error loss satisfies the following inequality:
\begin{align}
\mathbb{E} \left( \hat{\psi}(X) - \psi(\theta) \right)^2 \geq \frac{(\mathbb{E} \psi'(\theta))^2}{\mathbb{E}[\mathcal{I}(\theta)] + \mathcal{I}(\lambda)} 
\end{align}
\end{lemma}

\section{Proofs of the upper bounds}\label{sec.upperboundproof}

In order to upper bound the maximum squared error risk of any estimator, a natural approach would be to analyze the squared bias term and the variance term separately. Then, it suffices to find proper tools to give \emph{nonasymptotic} analysis of the bias and variance. 

\subsection{Bounding the bias}
We first work to bound the bias. Lemma~\ref{lemma.biasgeneralf} shows that the bias of $F(P_n)$ could be represented as
\begin{equation}
\mathsf{Bias}(F(P_n)) = \sum_{i = 1}^S \left( B_n[f](p_i) - f(p_i) \right),
\end{equation}
where $B_n[f](x)$ is the Bernstein polynomial corresponding to $f(x)$. The following lemma summarizes some state-of-the-art bounds for approximation error of Bernstein polynomials. Lemma~\ref{lemma.bernsteinerror} can be derived easily from the general theory we presented in Section~\ref{sec.generalbiasanalysis}. We emphasize that one cannot expect the bounds in Lemma~\ref{lemma.bernsteinerror} to be tight for any $f\in C[0,1]$, since the Bernstein approximation error itself could be a very complicated function in $C[0,1]$, and Lemma~\ref{lemma.bernsteinerror} is using relatively simple functions to upper bound it. 

\begin{lemma}\label{lemma.bernsteinerror}
The following bounds are valid for function approximation error incurred by Bernstein polynomials:
\begin{enumerate}
\item \emph{Pointwise estimate:} \cite[Cor. 2.2.1]{Paltanea2004}\cite{Paltanea2008some} for all continuous functions $f$ on $[0,1]$,
\begin{equation}\label{eqn.bern1}
|f(x)-B_n[f](x)| \leq \frac{3}{2} \omega^2\Bigg(f, \sqrt{\frac{x(1-x)}{n}}\Bigg),
\end{equation}
and the constant $3/2$ is shown by \cite{Paltanea2008some} to be the best constant;
\item \emph{Norm estimate:} \cite[Cor. 4.1.10]{Paltanea2004} for $\varphi(x) = \sqrt{x(1-x)}$ and all continuous functions $f$ on $[0,1]$, we have
\begin{equation}\label{eqn.bern3}
\| B_n[f] - f \|_\infty \leq \frac{5}{2}\omega_\varphi^2(f,n^{-1/2});
\end{equation}
\item \cite[Eqn. 10.3.4]{Devore--Lorentz1993} for $f\in C^2[0,1]$, i.e., twice continuously differentiable,
\begin{equation}\label{eqn.bern2}
 | f(x) - B_n[f](x) | \leq \| f'' \|_\infty \frac{x(1-x)}{2n};
\end{equation}
\end{enumerate}
\end{lemma}

\begin{proof}
The pointwise estimate of Lemma~\ref{lemma.bernsteinerror} follows from Lemma~\ref{lemma.paltaneacor221}. The norm estimate of Lemma~\ref{lemma.bernsteinerror} follows from Lemma~\ref{lemma.paltaneathm251}. Regarding the third part, suppose random variable $X \sim \mathsf{B}(n,x)$. We have
\begin{align}
& |f(x) - B_n[f](x)| \nonumber \\
& \quad = |\bE_x f(X/n) - f(x)| \\
& \quad = |\bE_x [f'(x)(X/n-x) + \frac{1}{2} f''(\xi_X) (X/n-x)^2 ]| \\
&  \quad = \frac{1}{2} |\bE_x f''(\xi_X) (X/n-x)^2| \\
& \quad \leq \frac{\| f''\|_\infty}{2} |\bE_x (X/n-x)^2| \\
& \quad = \frac{\| f''\|_\infty}{2} \frac{x(1-x)}{n},
\end{align}
where we used Taylor expansion for $f(X/n)$ at point $x$ with the Lagrange remainder. The proof is complete. 
\end{proof}

\begin{remark}
Note that although (\ref{eqn.bern3}) is in the form of an upper bound, it has been shown to be a lower bound as well. Totik~\cite{Totik1994approximation} showed the following equivalence property on the norm estimate of Bernstein approximation errors
\begin{equation}\label{eqn.normtotikbound}
\| B_n[f](x) - f(x) \|_\infty \asymp \omega_\varphi^2(f,n^{-1/2}). \footnote{Note that it is a remarkable fact that (\ref{eqn.normtotikbound}) holds for any continuous function $f(x)$. The lower bound proof of (\ref{eqn.normtotikbound}) is considered one of the remarkable results in approximation theory, and currently there are no ``short'' proofs of this fact. Indeed, Ditzian~\cite[Section 8]{Ditzian2007} mentioned that \emph{``I still would like to see a new simple proof of (8.4) (Equation~(\ref{eqn.normtotikbound})) which I am sure will have implications for other operators.''} }
\end{equation}
\end{remark}

It is easy to calculate the second-order modulus of smoothness and the Ditzian--Totik second-order modulus of smoothness for functions $x^\alpha$ and $-x \ln x$. The results are presented in the following lemma.

\begin{lemma}\label{lemma.dtmoduluscomputation}
We have
\begin{center}
    \begin{tabular}{| l | l | l | l |}
    \hline
    & $x^\alpha, 0<\alpha<1$ & $x^\alpha, 1<\alpha< 2$ & $-x \ln x $ \\ \hline
    $\omega^2(f,t)$ & $|2-2^\alpha|t^\alpha$ &  $|2-2^\alpha|t^\alpha$ & $t \ln 4$  \\ \hline
    $\omega^2_\varphi(f,t)$ &  $|2-2^\alpha| \frac{t^{2\alpha}}{(1+t^2)^{\alpha}}$ & $ \asymp t^2$ & $\frac{t^2 \ln 4}{1+t^2}$ \\
    \hline
    \end{tabular}
\end{center}
where the second-order modulus results hold for $0<t\leq 1/2$, and the Ditizan--Totik second-order modulus results hold for $0<t\leq 1$.
\end{lemma}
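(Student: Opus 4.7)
The plan is to set $u = x+h$, $v = x-h$, so that $(u+v)/2 = x$ and $|u-v| = 2h$; each modulus becomes the supremum of the symmetric second difference $\Delta_h f(x) := f(x+h) - 2f(x) + f(x-h)$ over a two-dimensional region in $(x,h)$. The $\omega^2$-constraint is $h \leq t$; the $\omega^2_\varphi$-constraint is $h \leq t\varphi(x) = t\sqrt{x(1-x)}$; both require $0 \leq x-h$ and $x+h \leq 1$. There are six table entries, and the arguments share a common monotonicity principle.

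For each function $f$ in the table, $f''$ has constant sign on $(0,1)$, and both $|f''|$ and $|f'''|$ are strictly decreasing. From the integral representation
\[\Delta_h f(x) = \int_0^h (h-s)\bigl[f''(x+s) + f''(x-s)\bigr]\,ds,\]
differentiating in $x$ shows that $|\Delta_h f(x)|$, with $h$ fixed, is strictly decreasing on $[h, 1-h]$. Hence for fixed $h$ the supremum over admissible $x$ is attained at $x = h$ (i.e., $v = 0$). For the $\omega^2$-row this immediately gives $x = h = t$, and a direct substitution into $\Delta_t f(t) = f(2t) - 2f(t) + f(0)$ yields $(2^\alpha - 2)t^\alpha$ for $f = x^\alpha$ and $-t\ln 4$ for $f = -x\ln x$, matching the claimed absolute values.

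For the $\omega^2_\varphi$-row in the concave cases ($0 < \alpha < 1$ and $-x\ln x$): the diagonal $\{x = h\}$ meets the admissible region in the segment $h \leq t^2/(1+t^2)$ (solve $h \leq t\sqrt{h(1-h)}$). Along this segment, $|\Delta_h f(h)| = |f(2h) - 2f(h)|$ equals $(2-2^\alpha)h^\alpha$ or $2h\ln 2$ respectively, strictly increasing in $h$; its maximum at $h = t^2/(1+t^2)$ is exactly $|2-2^\alpha|t^{2\alpha}/(1+t^2)^\alpha$ and $t^2\ln 4/(1+t^2)$. Global optimality over the full DT region then follows by combining the $x$-monotonicity above with a Taylor-based bound $|\Delta_h f(x)| \lesssim h^2|f''(x)| \leq t^2\varphi(x)^2|f''(x)|$ on the off-diagonal portion of the DT boundary, the right-hand side staying below the diagonal-endpoint value by the specific form of $\varphi^2 f''$ for these $f$.

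The case $\omega^2_\varphi(x^\alpha, t)$ with $1 < \alpha < 2$ is the one entry for which the extremum is \emph{not} at the diagonal endpoint (there the value would be $O(t^{2\alpha}) = o(t^2)$, dominated by interior contributions). I would prove $\omega^2_\varphi(x^\alpha, t) \lesssim t^2$ via $0 \leq \Delta_h f(x) \leq h^2 f''(x-h) \leq t^2 \varphi(x)^2 f''(x-h)$, using that $\varphi(x)^2 f''(x) = \alpha(\alpha-1)(1-x)x^{\alpha-1}$ is bounded on $[0,1]$ (with a short separate argument handling the regime $h > x/2$, which forces $x \lesssim t^2$); and the matching lower bound $\omega^2_\varphi(x^\alpha, t) \gtrsim t^2$ by evaluating at the interior critical point $x = (\alpha-1)/\alpha$, $h = t\varphi(x)$, where Taylor expansion gives $\Delta_h f(x) \asymp t^2$. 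The main obstacle is the concave-case global optimality claim: the $x$-monotonicity bounds $|\Delta_h f(x)|$ at $(x, h)$ with $h > t^2/(1+t^2)$ by the value at the \emph{inadmissible} point $(h, h)$, which actually exceeds the target diagonal-endpoint value, so the $x$-monotonicity alone is insufficient and a separate, function-specific estimate along the DT boundary curve is required to close the argument.
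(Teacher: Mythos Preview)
Your treatment of $\omega^2(f,t)$ is essentially the paper's: fixing $h$ and showing the symmetric second difference has its largest magnitude at the leftmost admissible center $x=h$ is exactly the monotonicity argument the paper writes as $g'(M)\ge 0$ (via convexity of $x^{\alpha-1}$), and the values $|2-2^\alpha|t^\alpha$, $t\ln 4$ drop out identically.

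For $\omega^2_\varphi$ the paper takes a different route that directly closes the gap you flag. Instead of combining $x$-monotonicity with a Taylor-type estimate along the DT boundary, the paper splits the admissible region into three regimes according to which constraint binds: (A) $u=0$ (your ``diagonal'' $x=h$, forcing $x\le t^2/(1+t^2)$); (B) the DT curve $h=t\varphi(x)$, i.e.\ $x\in[t^2/(1+t^2),\,1/(1+t^2)]$; and (C) $v=1$. Regimes A and C are handled by direct computation. In regime B the key move is to reparametrize by $\Delta=h/x=t\sqrt{(1-x)/x}$: for $f(x)=-x\ln x$ one has the exact identity
\[
|\Delta_h f(x)| \;=\; x\bigl[(1+\Delta)\ln(1+\Delta)+(1-\Delta)\ln(1-\Delta)\bigr],\qquad x=\frac{t^2}{t^2+\Delta^2},
\]
so the whole problem becomes the one-variable maximization of $\frac{t^2}{t^2+\Delta^2}\phi(\Delta)$ over $\Delta\in[t^2,1]$, with maximum at $\Delta=1$ equal to $\frac{t^2\ln 4}{1+t^2}$. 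The same reparametrization handles $x^\alpha$, $0<\alpha<1$.

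About your proposed Taylor bound: for these concave $f$ with $|f''|$ decreasing, the inequality actually goes the \emph{wrong} way, $|\Delta_h f(x)|\ge h^2|f''(x)|$ (the second difference averages $f''$ over $[x-h,x+h]$ and picks up the larger values near $x-h$). One can rescue the idea with the sharp constant --- for $-x\ln x$ the inequality $\phi(\Delta)\le(\ln 4)\Delta^2$ on $(0,1]$ gives $|\Delta_h f(x)|\le(\ln 4)h^2/x=(\ln 4)t^2(1-x)\le\frac{t^2\ln 4}{1+t^2}$ on regime B --- but proving that sharp constant is effectively the same work as the paper's $\Delta$-maximization, so nothing is gained. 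A plain $\lesssim$ would only yield $\omega_\varphi^2\asymp$ the claimed value, not equality. Also, you do not mention regime C (the constraint $x+h\le 1$ binding near $x=1$); it needs its own short check.

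For the $1<\alpha<2$ Ditzian--Totik entry the paper gives no details beyond ``easy to show it is of order $t^2$''; your plan (interior lower bound at a fixed $x$, upper bound via $h^2 f''(x-h)$ split into $h\le x/2$ and $h>x/2$) is sound and in fact more explicit than what the paper provides.
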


\subsubsection{Bias of $F_\alpha(P_n)$}
We first bound the bias incurred by $F_\alpha(P_n)$. 
\begin{enumerate}
\item $\alpha \geq 2$:

In this case, $f \in C^2[0,1]$, applying the third part of Lemma~\ref{lemma.bernsteinerror}, 
\begin{equation}
|f(x) - B_n[f](x)| \leq \frac{\alpha (\alpha-1) x(1-x)}{2n}.
\end{equation}
Thus, we have
\begin{equation}
|\mathsf{Bias}(F_{\alpha}(P_n))| \leq \sum_{i =1}^S \alpha (\alpha-1) \frac{p_i(1-p_i)}{2n} \leq \frac{\alpha (\alpha-1)}{2n}.
\end{equation}

\item $1<\alpha<2$

The following lemma presents a bound on the bias of $F_\alpha(P_n)$, which does not depend on the alphabet size $S$. We note that the proof of Lemma~\ref{lemma.biasalphaonetwo} heavily utilizes the special properties of function $x^\alpha$ and the fact that $\sum_{i= 1}^S p_i =1$. 

\begin{lemma}\label{lemma.biasalphaonetwo}
The bias of $F_\alpha(P_n)$ for estimating $F_\alpha(P), 1<\alpha<2$, is upper bounded by the following:
\begin{equation}
|\mathsf{Bias}(F_{\alpha}(P_n))| \leq \frac{4}{n^{\alpha-1}}.
\end{equation}
\end{lemma}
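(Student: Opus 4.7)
The plan is to establish the per-coordinate bound $B_n[f](p_i) - f(p_i) \leq p_i/n^{\alpha-1}$ and then sum using $\sum_i p_i = 1$, which immediately yields $|\mathsf{Bias}(F_\alpha(P_n))| \leq 1/n^{\alpha-1} \leq 4/n^{\alpha-1}$. The obstruction to simply invoking the tools of Section~\ref{sec.generalbiasanalysis} is that the generic bounds from Lemma~\ref{lemma.bernsteinerror} produce pointwise estimates of order $(p_i(1-p_i)/n)^{\alpha/2}$, whose sum scales as $n^{-\alpha/2}\sum_i p_i^{\alpha/2}$. Since $\alpha/2<1$, the quantity $\sum_i p_i^{\alpha/2}$ can be as large as $S^{1-\alpha/2}$, so these bounds cannot yield an $S$-free estimate. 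What is needed is a pointwise bound that is \emph{linear} in $p_i$.

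The heart of the argument is an algebraic identity. Using $j\binom{n}{j} = n\binom{n-1}{j-1}$ and changing variables $k=j-1$, one can rewrite
\begin{equation*}
B_n[f](p) = \sum_{j=1}^n \left(\frac{j}{n}\right)^{\alpha} \binom{n}{j} p^j(1-p)^{n-j} = \frac{p}{n^{\alpha-1}}\, \mathbb{E}\bigl[(Y+1)^{\alpha-1}\bigr],
\end{equation*}
where $Y \sim \mathsf{Binomial}(n-1,p)$. This representation exposes exactly the factor $p/n^{\alpha-1}$ that we need, reducing the problem to bounding $\mathbb{E}[(Y+1)^{\alpha-1}]$ by $1+(np)^{\alpha-1}$.

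The final step exploits the fact that $\alpha-1\in(0,1)$, so the map $y\mapsto y^{\alpha-1}$ is both subadditive and concave on $[0,\infty)$. Subadditivity gives $(Y+1)^{\alpha-1}\leq Y^{\alpha-1}+1$, and Jensen applied via concavity gives $\mathbb{E}[Y^{\alpha-1}]\leq (\mathbb{E} Y)^{\alpha-1}=((n-1)p)^{\alpha-1}\leq (np)^{\alpha-1}$. Together these produce $B_n[f](p) \leq p^\alpha + p/n^{\alpha-1}$. Since $x^\alpha$ is convex for $\alpha>1$, Jensen applied in the other direction to the binomial expectation also gives $B_n[f](p)\geq f(p)$, so the per-coordinate bias is non-negative and bounded above by $p/n^{\alpha-1}$; summing over $i$ completes the proof.

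The main obstacle, and what I think makes the lemma interesting, is identifying the right decomposition: the approximation-theoretic route through moduli of smoothness gives bounds depending on $\sqrt{p(1-p)/n}$ and thus loses the multiplicative $p$-structure needed for the sum to telescope against $\sum_i p_i=1$. Once one spots the reindexing $j\binom{n}{j}=n\binom{n-1}{j-1}$ that pulls out a factor of $p$, the rest of the argument reduces to two elementary inequalities for the concave exponent $\alpha-1$.
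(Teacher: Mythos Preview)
Your proof is correct and in fact yields the sharper bound $|\mathsf{Bias}(F_\alpha(P_n))|\leq 1/n^{\alpha-1}$, but it proceeds along a genuinely different route from the paper's argument.

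The paper splits the index set according to whether $p_i\leq 1/n$ or $p_i>1/n$. For small $p_i$ it exploits integrality: since $nP_n(i)\in\{0,1,\ldots,n\}$ and $np_i\leq 1$, one has $(nP_n(i))^\alpha\leq (nP_n(i))^2$ and $(np_i)^\alpha\geq (np_i)^2$, reducing the bias to a second-moment computation that sums to $1/n^{\alpha-1}$. For large $p_i$ it invokes the pointwise Bernstein bound $|B_n[f](p)-f(p)|\leq \tfrac32\omega^2(f,\sqrt{p(1-p)/n})$ from Lemma~\ref{lemma.bernsteinerror}, giving a contribution proportional to $n^{-\alpha/2}\sum_{i:p_i>1/n}p_i^{\alpha/2}$; the key observation is that at most $n$ indices satisfy $p_i>1/n$, so this sum is at most $n^{1-\alpha/2}$ and the contribution is again $O(n^{-(\alpha-1)})$. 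The two pieces combine to the constant $4$.

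Your argument sidesteps the case split entirely by the reindexing identity $B_n[x^\alpha](p)=\tfrac{p}{n^{\alpha-1}}\,\mathbb{E}[(Y+1)^{\alpha-1}]$ with $Y\sim\mathsf{B}(n-1,p)$, which directly isolates the linear-in-$p$ factor. Subadditivity and concavity of $y\mapsto y^{\alpha-1}$ then give the per-coordinate bound $0\leq B_n[f](p)-p^\alpha\leq p/n^{\alpha-1}$ with no threshold and no appeal to moduli of smoothness. This is more elementary, unified, and yields a strictly smaller constant. The paper's proof, by contrast, illustrates how the general approximation-theoretic toolbox it develops (Lemma~\ref{lemma.bernsteinerror}) can be brought to bear on this functional, at the cost of a less tight constant and an auxiliary splitting.
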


We also present two additional bounds involving the alphabet size $S$. Using the pointwise estimate in Lemma~\ref{lemma.bernsteinerror}, the bias term of the MLE is upper bounded as follows for all $0<\alpha<2, \alpha \neq 1$:

\begin{align}
& \sum_{i = 1}^S \frac{3}{2}|2-2^\alpha| \left( \frac{p_i(1-p_i)}{n} \right)^{\alpha/2} \nonumber \\
& \quad \leq  \frac{3}{2}|2-2^\alpha|\frac{1}{n^{\alpha/2}} \sum_{i =1}^S p_i^{\alpha/2} \\
& \quad \leq \frac{3}{2}|2-2^\alpha|\frac{1}{n^{\alpha/2}} S \frac{1}{S^{\alpha/2}}  \\
& \quad = \frac{3}{2}|2-2^\alpha|\frac{S^{1-\alpha/2}}{n^{\alpha/2}}. \label{eqn.biasoneandtwopoint}
\end{align}

Using the norm estimate in Lemma~\ref{lemma.bernsteinerror}, when $1<\alpha<2$, the bias would be upper bounded by $C_{\alpha,n} \frac{5S}{2n}$, where $C_{\alpha,n} = n \omega_\varphi^2(x^\alpha,n^{-1/2})$ is a finite positive constant such that $\limsup_{n\to \infty} C_{\alpha,n}<\infty$ for $1<\alpha<2$. Combining Lemma~\ref{lemma.biasalphaonetwo}, the pointwise estimate, and the norm estimate in Lemma~\ref{lemma.bernsteinerror}, we know that the bias of $F_\alpha(P_n)$ for $1<\alpha<2$ is upper bounded as
\begin{equation}
|\mathsf{Bias}(F_\alpha(P_n))| \leq  \frac{4}{n^{\alpha-1}} \wedge  \frac{3}{2}|2-2^\alpha|\frac{S^{1-\alpha/2}}{n^{\alpha/2}} \wedge C_{\alpha,n}\frac{5S}{2n}.
\end{equation}

\item $0<\alpha<1$:

The pointwise estimate from Lemma~\ref{lemma.bernsteinerror} is worked out in~(\ref{eqn.biasoneandtwopoint}). Using the norm estimate in Lemma~\ref{lemma.bernsteinerror}, the bias would be upper bounded by $|2-2^\alpha| \frac{5S}{2n^\alpha}$. Combining the pointwise estimate and the norm estimate, we know that the bias of $F_\alpha(P_n)$ for $0<\alpha<1$ is upper bounded as
\begin{equation}
|\mathsf{Bias}(F_\alpha(P_n))| \leq   \frac{3}{2}|2-2^\alpha|\frac{S^{1-\alpha/2}}{n^{\alpha/2}} \wedge |2-2^\alpha|\frac{5S}{2n^\alpha}.
\end{equation}
\end{enumerate}

\subsubsection{Bias of $H(P_n)$}

We then bound the bias incurred by $H(P_n)$. Using the norm estimate in Lemma~\ref{lemma.bernsteinerror}, we know
\begin{equation}
|\mathsf{Bias}(H(P_n))| \leq \frac{5S \ln 4}{2n}.
\end{equation}
Using the pointwise estimate in Lemma~\ref{lemma.bernsteinerror}, we obtain
\begin{equation}
|\mathsf{Bias}(H(P_n))| \leq \frac{3}{2}\sqrt{\frac{S}{n}} \ln 4.
\end{equation}

It was shown by Paninski~\cite[Prop. 1]{Paninski2003} that the squared bias of MLE $H(P_n)$ is upper bounded as
\begin{equation}\label{eqn.paninskientropy}
(\mathsf{Bias}(H(P_n)))^2 \leq \left( \ln \left( 1+ \frac{S-1}{n} \right) \right)^2,
\end{equation}
which is better than the two bounds we obtained using Bernstein polynomial results. However, we remark that (\ref{eqn.paninskientropy}) is obtained using special properties of the entropy function and connections between KL-divergence and $\chi^2$-divergence \cite{Tsybakov2008}, which cannot be applied to general functions. Strukov and Timan~\cite{Strukov--Timan1977mathematical} also heavily exploited the structure of function $x^\alpha$ and $-x\ln x$ in order to analyze the Bernstein approximation error for these functions, and obtained tight-in-order results. 

\subsubsection{Bias of $H(\hat{P}_B)$}

We apply the general theory of positive linear operator approximation. The following lemma is a strengthened version of Lemma~\ref{lemma.paltaneathm251}. 
  \begin{lemma}\label{lemma_mod}
    If $F\colon C[0,1]\to \mathbb{R}$ is a linear positive functional and $F(e_0)=1$, then
    \begin{align}\label{eq:inequality2}
      |F(f)-f(x)| \le \omega^1(f,B_F(x);x) + \frac{5}{2}\omega^2_\varphi(f,h_2)
    \end{align}
    for all $f\in C[0,1]$ and $0<h_2\le \frac{1}{2}$, where $\varphi(x)=\sqrt{x(1-x)}$ and $h_2=\sqrt{V_F}/\varphi(x)$, and
    \begin{align}
      \omega^1(f,h;x) \triangleq \sup\left\{|f(u)-f(x)|:u\in[0,1],|u-x|\le h\right\}.
    \end{align}
    The ``bias'' $B_F(x)$ and ``variance'' $V_F(x)$ are defined in~(\ref{eqn.functionalbiasandvariance}). 
  \end{lemma}
  \begin{proof}
    Applying Lemma \ref{lemma.paltaneathm251} to $x=F(e_1)$ we have
    \begin{align}
      |F(f)-f(F(e_1))| \le \frac{5}{2}\omega^2_\varphi(f,h_2)
    \end{align}
    and then (\ref{eq:inequality2}) is the direct result of the triangle inequality $|F(f)-f(x)|\le |F(f)-f(F(e_1))| + |f(F(e_1))-f(x)|$.
  \end{proof}
  We show that Lemma \ref{lemma_mod} is indeed stronger than Lemma \ref{lemma.paltaneathm251}. Firstly, due to $h_1\ge h_2$, we have $\omega^2_\varphi(f,h_2)\le \omega^2_\varphi(f,h_1)$. Second, for $x\le 1/2$, we have
  \begin{align}
    \frac{B_F(x)}{2h_1\varphi(x)}\cdot \omega^1_\varphi(f,2h_1)
    &\approx \frac{B_F(x)}{2h_1\varphi(x)}\cdot \sup_{0\le s\le1}2h_1\varphi(s)f'(s)\\
    &\ge B_F(x)\cdot  \sup_{x\le s\le1-x}f'(s)\\
    &\approx \sup_{x\le s\le 1-x}\omega^1(f,B_F(x);s)
  \end{align}
  which is almost the supremum of $\omega^1(f,|F(e_1-xe_0)|;s)$ over $s\in[x,1-x]$ and is no less than the pointwise result $\omega^1(f,|F(e_1-xe_0)|;x)$, and here we have used the inequality $\varphi(s)\ge\varphi(x)$ for $x\le s\le 1-x$. A similar argument also holds for $x>1/2$. Hence, Lemma \ref{lemma_mod} transforms the first order term from the norm result in Lemma~\ref{lemma.paltaneathm251} to a pointwise result. 
  
Applying Lemma~\ref{lemma_mod} to the function $f(p) = -p\ln p$ and $F(f) = \mathbb{E} \left[ f \left( \frac{n\hat{p}+a }{n+Sa} \right) \right]$, where $n \cdot \hat{p} \sim \mathsf{B}(n,p)$, we have the following lemma. 
\begin{lemma}\label{lemma.hbbiasgeneral}
If $n\ge \max\{ Sa, 2ea, 4\}$, then
  \begin{align}
  & \sup_{P\in\mathcal{M}_S}| \bE_P H(\hat{P}_B) - H(P)| \nonumber \\
  & \quad \le  \frac{5nS\ln2}{(n+Sa)^2} + \frac{2Sa}{n+Sa}\ln\left(\frac{n+Sa}{2a}\right).
\end{align}
\end{lemma}  

Note that Lemma~\ref{lemma.hbbiasgeneral} implies a slightly weaker bias bound than Theorem~\ref{thm.upperbound}, but it is only sub-optimal up to a multiplicative constant. The bias bound in Theorem~\ref{thm.upperbound} is obtained using the following lemma, whose proof only applies to the entropy function. 
\begin{lemma}\label{lemma.hbbiasentropy}
If $n\ge \max\{2ea,Sa\}$,
\begin{align}
 &  \sup_{P\in\mathcal{M}_S}| \bE_PH(\hat{P}_B) - H(P)| \nonumber \\
 & \quad \le \ln\left(1+\frac{S-1}{n+Sa}\right) \vee \frac{2Sa}{n+Sa} \ln \left( \frac{n+Sa}{2a}\right).
\end{align}
\end{lemma}

\subsection{Bounding the variance}

The next lemma follows from an application of bounded difference inequality presented in Lemma~\ref{lemma.es}. 
\begin{lemma}\label{cor.vargeneralf}
The variance of $F(P_n)$ satisfies the following upper bound:
\begin{equation}
\mathsf{Var}(F(P_n)) \leq n\cdot\max_{0\leq j <n}(f((j+1)/n)-f(j/n))^2.
\end{equation}
If $f$ is monotone, then we can strengthen the bound to be
\begin{equation}
\mathsf{Var}(F(P_n)) \leq \frac{n}{4}\cdot\max_{0\leq j <n}(f((j+1)/n)-f(j/n))^2.
\end{equation}
\end{lemma}

We first bound the variance for $F_\alpha(P_n),\alpha>1$. We have
\begin{align}
\max_{0\leq j <n} (((j+1)/n)^\alpha - (j/n)^\alpha)^2 & \leq \left( 1- \left( 1- \frac{1}{n} \right)^\alpha \right)^2 \\ & \leq\left( \frac{\alpha}{n} \right)^2,
\end{align}
where in the last step we used Bernoulli's inequality: $(1+x)^r \geq 1+rx,\forall r \geq 1, x>-1, x\in \mathbb{R}$. Using Lemma~\ref{lemma.es}, we know the variance is upper bounded by
\begin{equation}
\mathsf{Var}(F_\alpha(P_n)) \leq \frac{\alpha^2}{4n}.
\end{equation}

We bound the variance of $F_\alpha(P_n), 0<\alpha<1$ in the following lemma.

\begin{lemma}\label{lemma.varianceboundfalphabetween01}
For $0<\alpha<1/2$, we have
\begin{align}
&    \sup_{P\in\mathcal{M}_S}\mathsf{Var}(F_\alpha(P_n)) \nonumber \\ 
& \quad \leq  \frac{10S}{n^{2\alpha}} \nonumber \\
& \qquad + \left(\frac{3\alpha\cdot2^{3+2\alpha}+1}{8\alpha^2}\left(\frac{8\alpha}{e}\right)^{2\alpha}+4\right)\left(\frac{S}{n^{2\alpha}}\wedge \frac{1}{n^{2\alpha-1}}\right) \\
& \quad \lesssim  \frac{S}{n^{2\alpha}} .
\end{align}
For $1/2\leq \alpha <1$, we have
\begin{align}
&    \sup_{P\in\mathcal{M}_S}\mathsf{Var}(F_\alpha(P_n)) \nonumber \\  
& \quad \leq \frac{10S^{2-2\alpha}}{n} \nonumber \\
& \qquad + \left(\frac{3\alpha\cdot2^{3+2\alpha}+1}{8\alpha^2}\left(\frac{8\alpha}{e}\right)^{2\alpha}+4\right)\left(\frac{S}{n^{2\alpha}}\wedge \frac{1}{n^{2\alpha-1}}\right) \\
& \quad \lesssim \frac{S^{2-2\alpha}}{n} + \left(\frac{S}{n^{2\alpha}}\wedge \frac{1}{n^{2\alpha-1}}\right). 
\end{align}
\end{lemma}

Further, one can show that for all $\alpha\in (0,1)$,
\begin{equation}
\frac{3\alpha\cdot2^{3+2\alpha}+1}{8\alpha^2}\left(\frac{8\alpha}{e}\right)^{2\alpha}+4 \leq \frac{120}{\alpha^2},
\end{equation}
which is used in Theorem~\ref{thm.main}.

Regarding the variance of $H(P_n)$, we have
\begin{lemma}\label{lemma.varianceboundentropymle}
\begin{align}
\sup_{P \in \mathcal{M}_S} \mathsf{Var}(H(P_n)) & \leq \frac{(\ln n)^2}{n}\wedge \frac{2(\ln S + 3)^2}{n} \\
& \lesssim \frac{(\ln S)^2 \wedge (\ln n)^2}{n}.
\end{align}
\end{lemma}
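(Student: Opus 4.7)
The lemma states two separate upper bounds on $\mathsf{Var}(H(P_n))$; I would prove each in turn and take the minimum. For the bound $(\ln n)^2/n$, apply Corollary~\ref{cor.vargeneralf} with $f(x) = -x\ln x$. A direct calculation gives
\[
f\!\left(\tfrac{j+1}{n}\right) - f\!\left(\tfrac{j}{n}\right) = \frac{1}{n}\bigl[\ln n - \psi(j)\bigr], \qquad \psi(j) \triangleq (j+1)\ln(j+1) - j \ln j,
\]
with $\psi(0) \triangleq 0$. Rewriting $\psi(j) = \ln(j+1) + j\ln(1 + 1/j)$ and using $j\ln(1+1/j) \leq 1$, we get $0 \leq \psi(j) \leq \ln(j+1) + 1 \leq \ln n + 1$ for $0 \leq j \leq n-1$, so $|\ln n - \psi(j)| \leq \ln n$. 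Corollary~\ref{cor.vargeneralf} then yields $\mathsf{Var}(H(P_n)) \leq n\cdot(\ln n/n)^2 = (\ln n)^2/n$.

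For the bound $2(\ln S + 2)^2/n$, the bounded-differences Lemma~\ref{lemma.es} is too lossy, so I would apply the Efron--Stein inequality (Lemma~\ref{lemma.esnew}) directly. Let $\tilde{X}_i \triangleq \sum_{j=2}^n \mathbbm{1}(Z_j = i) \sim \mathsf{Bin}(n-1,p_i)$, set $g(x) \triangleq -(x/n)\ln(x/n)$, and define $D_i \triangleq g(\tilde{X}_i + 1) - g(\tilde{X}_i) = (\ln n - \psi(\tilde{X}_i))/n$. A short case analysis on $(Z_1,Z_1')$ shows $H(P_n) - H(P_n^{(1)}) = D_{Z_1} - D_{Z_1'}$. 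Since $Z_1,Z_1'$ are i.i.d.\ with law $P$ and independent of $\tilde X$, conditionally on $\tilde X$ they remain i.i.d., so
\[
\bE\bigl[(D_{Z_1} - D_{Z_1'})^2 \bigm| \tilde X\bigr] = 2\,\mathsf{Var}(D_{Z_1}\mid \tilde X) \leq 2\,\bE\bigl[D_{Z_1}^2 \bigm| \tilde X\bigr].
\]
Symmetrizing Lemma~\ref{lemma.esnew} then gives
\[
\mathsf{Var}(H(P_n)) \leq n\,\bE[D_{Z_1}^2] = \frac{1}{n}\sum_{i=1}^S p_i \,\bE\bigl[(\ln n - \psi(\tilde{X}_i))^2\bigr],
\]
so the task reduces to bounding this weighted sum by $2(\ln S + 2)^2$.

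For the final estimate, the inclusions $\psi(M)\in[\ln(M+1),\ln(M+1)+1]$ for $M\geq 1$ and $\psi(0)=0$ give $|\ln n - \psi(M)| \leq \ln(n/(M+1)) + 1$, hence $(\ln n - \psi(M))^2 \leq 2\ln^2(n/(M+1)) + 2$. I would split each expectation on the Chernoff event $\{\tilde X_i \geq (n-1)p_i/2\}$: on the typical event, $\ln(n/(\tilde X_i + 1)) \leq \ln 2 + \ln(1/p_i)$; on the atypical event, of probability $\leq e^{-(n-1)p_i/8}$, the crude bound $\ln n$ suffices. After summing in $i$, the dominant contribution is $\sum_i p_i\ln^2(1/p_i)$, which I would control through the identity $\sum_i p_i(\ln p_i)^2 = H(P)^2 + \mathsf{Var}_P(\ln(1/p_I))$ together with $H(P)\leq \ln S$ and standard varentropy estimates.

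The main obstacle is this last estimate: one needs $\sum_i p_i \bE[(\ln n - \psi(\tilde X_i))^2] \leq 2(\ln S+2)^2$ with explicit constants, and the key inequality $\sum_i p_i\ln^2(1/p_i) \lesssim (\ln S)^2$ is delicate because $\ln^2$ is convex (so Jensen applied to $\ln$ yields only a lower bound on the relevant quantity). I would handle it by separating small-mass symbols (where $p\ln^2(1/p) \leq 4/e^2$ is bounded pointwise) from large-mass ones (where rewriting the logarithm and applying Jensen to a concave transform works), and then absorb the $O(1)$ varentropy and exponentially small Chernoff contributions into the slack provided by the ``$+2$'' inside $2(\ln S+2)^2$.
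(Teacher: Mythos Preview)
Your first bound via Corollary~\ref{cor.vargeneralf} and your Efron--Stein reduction are both sound; the decomposition $H(P_n)-H(P_n^{(1)})=D_{Z_1}-D_{Z_1'}$ is a clean reparametrization of the paper's $D_-+D_+$ argument (the paper conditions instead on the full histogram $X_1^S$ and uses Lemma~\ref{lemma.conditionalpaninski} to obtain the empirical weight $P_n(i)$ in place of your $p_i$).

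The gap is in your final estimate. The varentropy $\mathsf{Var}_P(\ln(1/p_I))$ is \emph{not} $O(1)$: for $P=\bigl(\tfrac12,\,\tfrac{1}{2(S-1)},\ldots,\tfrac{1}{2(S-1)}\bigr)$ it equals $\tfrac14\ln^2(S-1)$, so the decomposition $\sum_i p_i\ln^2(1/p_i)=H(P)^2+\mathsf{Var}_P(\ln(1/p_I))$ buys nothing by itself. Your fallback case split also fails as written: summing the pointwise bound $p\ln^2(1/p)\le 4/e^2$ over up to $S$ small-mass symbols yields $O(S)$, not $O(\ln^2 S)$. The inequality $\sum_i p_i\ln^2(1/p_i)\le(\ln S+2)^2$ is true, but neither of your two proposed arguments establishes it.

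The paper's route avoids all of this via a one-line Jensen argument: the map $x\mapsto x(\ln x-2)^2$ is \emph{concave} on $[0,1]$ (its second derivative is $2(\ln x-1)/x\le 0$ there). After bounding the increment by $\bigl|r(P_n(i))-r(P_n(i)-\tfrac1n)\bigr|\le(|\ln P_n(i)|+O(1))/n$, two applications of Jensen --- first over the randomness in $P_n(i)$, then over the index $i$ with uniform weights --- give
\[
\sum_{i=1}^S\bE\bigl[P_n(i)(\ln P_n(i)-2)^2\bigr]\;\le\;\sum_{i=1}^S p_i(\ln p_i-2)^2\;\le\;S\cdot\tfrac1S(\ln S+2)^2=(\ln S+2)^2.
\]
The shift by $2$ is precisely what extends concavity from $[0,1/e]$ (where $x\ln^2 x$ is concave) to all of $[0,1]$, and this single observation replaces your Chernoff-and-varentropy program.
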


The variance of $H(\hat{P}_B)$ is upper bounded by the following lemma. 
\begin{lemma}\label{lemma.hbvariance}
The variance of $H(\hat{P}_B)$ is upper bounded as follows:
\begin{align}
  \mathsf{Var}\left(H(\hat{P}_B)\right) \le \frac{2n}{(n+Sa)^2}\left[3+\ln\left(\frac{n+Sa}{a+1}\wedge S\right)\right]^2.
\end{align}
\end{lemma}

\section{Proofs of the lower bounds}\label{sec.lowerboundproof}

\subsection{Lower bounds for estimation of $F_\alpha(P)$ when $\alpha\geq 3/2$}

We apply the van Trees inequality as presented in Lemma~\ref{lemma.vantrees}. 

It suffices to consider the restricted case of $S = 2$ and prove the $n^{-1}$ lower bound. Thus, the model is equivalent to observing a Binomial random variable $X \sim \mathsf{B}(n,p)$, and one aims to estimate the functional $\psi_\alpha(p) = p^\alpha + (1-p)^\alpha$. We have
\begin{align}
\psi'_\alpha(p) = \alpha p^{\alpha -1} -\alpha (1-p)^{\alpha-1}. 
\end{align}
The Fisher information for parameter $p$ under the Binomial model is $\mathcal{I}(p) = \frac{n}{p(1-p)}$. Suppose we impose prior $\lambda(p)$ on parameter $p$. The van Trees inequality implies 
\begin{align}
& \sup_{P \in \mathcal{M}_S} \mathbb{E}_P \left( F_\alpha(P_n) - F_\alpha(P) \right)^2 \nonumber \\
& \quad \geq \inf_{\hat{F_\alpha}} \sup_{P \in \mathcal{M}_S} \mathbb{E}_P \left( \hat{F_\alpha} - F_\alpha(P) \right)^2 \\
&  \quad \geq \mathbb{E} \left( \mathbb{E}[F_\alpha(P)|X_1^S] - F_\alpha(P) \right)^2 \quad (\text{Bayes risk}) \\
& \quad \geq \frac{( \int \left[ \alpha p^{\alpha -1} - \alpha (1-p)^{\alpha-1}  \right]\lambda(p) \mathrm{d}p  )^2}{ \mathbb{E}_\lambda\left[ \frac{n}{p(1-p)} \right] + \mathcal{I}(\lambda)} \\
& \quad = \frac{( \int \left[ \alpha p^{\alpha -1} - \alpha (1-p)^{\alpha-1}  \right]\lambda(p) \mathrm{d}p  )^2}{ n \cdot \mathbb{E}_\lambda\left[ \frac{1}{p(1-p)} \right] + \mathcal{I}(\lambda)}
\end{align}
where the second inequality follows from the fact that the Bayes risk under any prior is upper bounded by the minimax risk~\cite{Wald1950statistical}.  

Taking $\lambda(p)$ to be the Dirichlet prior with parameter $(a,b)$, i.e., 
\begin{align}
\lambda(p) = \frac{1}{B(a,b)} p^{a-1} (1-p)^{b-1}, a>2,b>2,
\end{align}
we can explicitly evaluate the integrals above. Here $B(a,b)$ is the Beta function.

Taking $a = 4,b = 3$, we have
\begin{align}
& \sup_{P \in \mathcal{M}_S} \mathbb{E}_P \left( F_\alpha(P_n) - F_\alpha(P) \right)^2 \nonumber \\
& \quad \geq \frac{\left( 60 \alpha (B(\alpha+3,3) - B(\alpha+2,4)) \right)^2}{5n + 45}. 
\end{align}
Taking $C_\alpha = 72 \alpha^2\left(  B(\alpha+3,3) - B(\alpha+2,4) \right)^2$, we have
\begin{align}
\sup_{P \in \mathcal{M}_S} \mathbb{E}_P \left( F_\alpha(P_n) - F_\alpha(P) \right)^2 \geq \frac{C_\alpha}{n},\quad \text{for all }n\geq 1.
\end{align}

Note that $C_\alpha>0$ for all $\alpha\geq 3/2$.

\subsection{Lower bounds for estimation of $F_\alpha(P)$ when $1<\alpha<3/2$}

The following lemma was proved in \cite{Braess--Sauer2004}.
\begin{lemma}\label{lemma.qn1lower}
Let $k\geq 4$ be an even number. Suppose that the $k$-th derivative of $f$ satisfies $f^{(k)} \leq 0$ in $(0,1)$, $Q_{k-1}$ is the Taylor polynomial of order $k-1$ to $f$ at some $x_1$ in $(0,1)$. Then for $x\in [0,1]$,
\begin{equation}
f(x) - B_n[f](x) \geq Q_{k-1} - B_n[Q_{k-1}](x).
\end{equation}
\end{lemma}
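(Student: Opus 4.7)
The plan is to exploit linearity of $B_n$ to reduce the inequality to a cleaner statement: writing $g := f - Q_{k-1}$, since $B_n$ is linear the claim $f - B_n[f] \geq Q_{k-1} - B_n[Q_{k-1}]$ becomes $g(x) \geq B_n[g](x)$ for all $x \in [0,1]$. By construction $g^{(j)}(x_1) = 0$ for $j = 0, 1, \ldots, k-1$ and $g^{(k)} = f^{(k)} \leq 0$ on $(0,1)$. My first step would be to invoke the integral form of Taylor's remainder at $x_1$ to represent
\begin{equation}
g(y) = \frac{1}{(k-1)!}\int_{x_1}^{y}(y-t)^{k-1}\,g^{(k)}(t)\,dt,
\end{equation}
and then, using that $k-1$ is odd (so $(y-t)^{k-1}$ flips sign through $t = y$), reformulate this as a sum of two one-sided integrals against truncated-power kernels,
\begin{equation}
g(y) = \frac{1}{(k-1)!}\Bigl[\int_{x_1}^{1}(y-t)_+^{k-1}\,g^{(k)}(t)\,dt + \int_{0}^{x_1}(t-y)_+^{k-1}\,g^{(k)}(t)\,dt\Bigr],
\end{equation}
which covers uniformly the cases $y \geq x_1$ and $y < x_1$.

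Next I would apply $B_n$ by writing $B_n[g](x) = \mathbb{E}_x\,g(X/n)$ with $X \sim \mathsf{B}(n,x)$ and interchanging the expectation with the $t$-integrals (Fubini is justified since all quantities are bounded on $[0,1]$). This produces the same representation for $B_n[g](x)$ but with each kernel replaced by its Bernstein transform evaluated at $x$. The key observation is that for $k \geq 4$ each map $y \mapsto (y-t)_+^{k-1}$ (and analogously $y \mapsto (t-y)_+^{k-1}$) has second derivative $(k-1)(k-2)(y-t)_+^{k-3} \geq 0$ and is therefore \emph{convex}. Jensen's inequality applied to $X/n$ with $\mathbb{E}[X/n] = x$ then yields, pointwise in $t$,
\begin{equation}
B_n\bigl[(\cdot - t)_+^{k-1}\bigr](x) \;\geq\; (x-t)_+^{k-1}, \qquad B_n\bigl[(t - \cdot)_+^{k-1}\bigr](x) \;\geq\; (t-x)_+^{k-1}.
\end{equation}

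Finally, since $g^{(k)}(t) \leq 0$ reverses the direction of each inequality upon multiplication, integrating in $t$ over the respective ranges gives $B_n[g](x) - g(x) \leq 0$, which is exactly what is needed. The main obstacle I anticipate is the bookkeeping in the first step --- rewriting the two-sided Taylor remainder as a clean sum of one-sided truncated-power integrals --- together with a careful verification that the hypothesis ``$k \geq 4$ even'' is exactly what the argument consumes: evenness of $k$ (equivalently, oddness of $k-1$) is precisely what lets the two one-sided pieces be weighted by the single non-positive density $g^{(k)}$, while $k - 1 \geq 3$ is precisely the range in which the truncated power $(y-t)_+^{k-1}$ is twice differentiable and convex, enabling the Jensen step. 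No other refined approximation-theoretic machinery beyond Taylor remainder plus Jensen's inequality should be required.
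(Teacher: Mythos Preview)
Your argument is correct: the reduction to $g=f-Q_{k-1}$, the Peano-kernel / truncated-power representation of the Taylor remainder, and the observation that $(y-t)_+^{k-1}$ and $(t-y)_+^{k-1}$ are convex for $k\ge 4$ so that Jensen (i.e.\ $B_n[\text{convex}]\ge \text{convex}$) combined with $g^{(k)}\le 0$ yields $B_n[g]\le g$, is exactly the standard route. The paper does not give its own proof of this lemma at all---it simply quotes the result from Braess and Sauer (2004)---and your reconstruction is essentially the argument found in that reference.
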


Consider $f_\alpha(x) = -x^\alpha, 1<\alpha<2, x\in [0,1]$. Applying Lemma~\ref{lemma.qn1lower} to $f_\alpha$, taking $k = 6$, we have the following result.

\begin{lemma}\label{lemma.biasalphalarge}
Suppose $f_\alpha(x) = -x^\alpha, 1<\alpha<2$ on $[0,1]$. For all $x\in (0,1)$, we have
\begin{align}
& f_\alpha(x) -B_n[f_\alpha](x) \nonumber \\
\quad & \geq \frac{\alpha(\alpha-1)x^{\alpha-2}(1-x)}{2n} \Bigg( x + \frac{(2-\alpha)(3\alpha-1)x}{12n} \nonumber \\
& \qquad + \frac{(2-\alpha)(5-3\alpha)}{12n} \Bigg)  + \frac{R_1(x)}{n^3} + \frac{R_2(x)}{n^4},
\end{align}
where
\begin{align}
R_1(x) & = \frac{\alpha(\alpha-1)(\alpha-2)(\alpha-3) x^{\alpha-3} (1-x)}{24} \nonumber \\
& \qquad \times  \Bigg( 1+2(1-x)((5-2\alpha)x + \alpha-4) \Bigg ), \\
R_2(x) & = \frac{\alpha(\alpha-1)(\alpha-2)(\alpha-3)(\alpha-4) }{120} \nonumber \\
& \qquad \times x^{\alpha-4} (1-x)(1-2x)(1-12x(1-x)). 
\end{align}
\end{lemma}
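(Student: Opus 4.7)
The plan is to invoke Lemma~\ref{lemma.qn1lower} with $k=6$ and the Taylor expansion point $x_1 = x$, and then compute the resulting lower bound exactly by expanding the Bernstein image of a degree-5 polynomial using the central moments of the binomial distribution.

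First I would check the hypothesis on $f_\alpha^{(6)}$. Direct differentiation gives
\begin{equation}
f_\alpha^{(6)}(y) = -\alpha(\alpha-1)(\alpha-2)(\alpha-3)(\alpha-4)(\alpha-5)\, y^{\alpha-6},
\end{equation}
and for $1<\alpha<2$ the six factors in the product have sign pattern $(+,+,-,-,-,-)$, so the product is positive and $f_\alpha^{(6)}(y)<0$ on $(0,1)$. Thus Lemma~\ref{lemma.qn1lower} applies. Taking $x_1 = x$, the Taylor polynomial $Q_5$ satisfies $Q_5(x) = f_\alpha(x)$, so
\begin{equation}
f_\alpha(x) - B_n[f_\alpha](x) \;\ge\; Q_5(x) - B_n[Q_5](x) \;=\; -\sum_{j=2}^{5}\frac{f_\alpha^{(j)}(x)}{j!}\, M_j(x,n),
\end{equation}
where $M_j(x,n) \triangleq \bE[(X/n-x)^j]$ for $X \sim \mathsf{B}(n,x)$.

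Next I would substitute the standard central moments of the binomial, namely $M_2 = x(1-x)/n$, $M_3 = x(1-x)(1-2x)/n^2$, $M_4 = 3x^2(1-x)^2/n^2 + x(1-x)(1-6x(1-x))/n^3$, and the analogous expression for $M_5$ (which has a leading $10 x^2(1-x)^2(1-2x)/n^3$ piece and a subleading $x(1-x)(1-2x)(1-12x(1-x))/n^4$ piece), together with $f_\alpha^{(j)}(x) = -\alpha(\alpha-1)\cdots(\alpha-j+1)\, x^{\alpha-j}$. The terms then organize themselves by powers of $1/n$. The $j=2$ contribution produces the main term $\alpha(\alpha-1)x^{\alpha-1}(1-x)/(2n)$. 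The $1/n^2$ contributions come from $j=3$ and the leading piece of $M_4$; after using $\alpha-2 = -(2-\alpha)$ and $\alpha-3 = -(3-\alpha)$ and collecting, the bracket simplifies to $[(5-3\alpha)+(3\alpha-1)x]/24$, producing exactly the inner factor displayed in the lemma. The $1/n^3$ contributions come from the subleading piece of $M_4$ and the leading piece of $M_5$; combining them requires only the elementary identity $1-6x(1-x) + 2(\alpha-4)(1-x)(1-2x) = 1 + 2(1-x)[(5-2\alpha)x + (\alpha-4)]$, which yields the stated $R_1(x)$. The $1/n^4$ contribution comes solely from the subleading piece of $M_5$ and matches $R_2(x)$ verbatim.

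Finally I would justify the outer $\max\{\cdot,0\}$ by noting that $f_\alpha$ is concave on $(0,1)$ (since $f_\alpha''(x) = -\alpha(\alpha-1) x^{\alpha-2}<0$), so Jensen's inequality gives $B_n[f_\alpha](x) = \bE f_\alpha(X/n) \le f_\alpha(\bE X/n) = f_\alpha(x)$, hence the left-hand side is automatically nonnegative and we may floor the lower bound at $0$. I expect the main obstacle to be purely the bookkeeping in the algebraic reduction: one has to keep the two-piece decompositions of $M_4$ and $M_5$ straight and verify the cancellation identity that collapses the $1/n^3$ block into $R_1(x)$. Every step is otherwise routine once the central moment expansion is written out.
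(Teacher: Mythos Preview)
Your proposal is correct and follows essentially the same route as the paper: apply Lemma~\ref{lemma.qn1lower} with $k=6$ at the Taylor point $x_1=x$, evaluate $Q_5(x)-B_n[Q_5](x)$ via the binomial central moments (the paper packages these as Lemma~\ref{lemma.brastaylor}), and obtain the nonnegativity from concavity of $f_\alpha$. Your explicit check of the sign of $f_\alpha^{(6)}$ and the bookkeeping that collapses the $1/n^3$ block into $R_1(x)$ are exactly the details the paper suppresses under ``after some algebra.''
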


Note that we have assumed $S = cn,c>0$. If $c\leq 1$, we take a uniform distribution on $S$ elements $P = (1/S,1/S,\ldots,1/S)$, otherwise we take distribution $P = (n^{-1}-\epsilon,n^{-1} - \epsilon,\ldots,n^{-1} -\epsilon, \frac{n\epsilon}{S-n},\ldots,\frac{n\epsilon}{S-n})$, where $\epsilon$ will be taken to be arbitrarily small. We first analyze the $c\leq 1$ case. Applying Lemma~\ref{lemma.biasalphalarge}, we have
\begin{align}
& \sum_{i = 1}^S f_\alpha(1/S) - B_n[f_\alpha](1/S)\quad (\text{Note that } f_\alpha(x) = -x^\alpha) \nonumber \\
& = \bE F_\alpha(P_n) - F_\alpha(P)   \nonumber \\
& \geq S \cdot \Bigg( \frac{\alpha(\alpha-1)}{2 S^{\alpha-2}n} \left( \frac{1}{S} + \frac{(2-\alpha)(5-3\alpha)}{12n} \right) \nonumber \\
&  \quad + \frac{\alpha(\alpha-1)(\alpha-2)(\alpha-3)}{24S^{\alpha-3}n^3}(1+2(\alpha-4)) \nonumber \\
& \quad  + \frac{\alpha(\alpha-1)(\alpha-2)(\alpha-3)(\alpha-4)}{120S^{\alpha-4} n^4} + o(n^{-\alpha})\Bigg ) \nonumber \\
& = \frac{\alpha(\alpha-1)}{n^{\alpha-1}} \Bigg( \frac{1}{2c^{\alpha-3}}\left(\frac{1}{c}+\frac{(2-\alpha)(5-3\alpha)}{12}\right) \nonumber \\
& \quad + \frac{(\alpha-2)(\alpha-3)(1+2(\alpha-4))}{24 c^{\alpha-4}} \nonumber\\
& \quad + \frac{(\alpha-2)(\alpha-3)(\alpha-4)}{120c^{\alpha-5}} \Bigg) + o(n^{-(\alpha-1)}) \nonumber\\
& = \frac{\alpha(\alpha-1)c^{2-\alpha}}{n^{\alpha-1}} \Bigg(\frac{1}{2}+ \frac{(2-\alpha)(5-3\alpha)c}{24} \nonumber \\
& \quad +  \frac{(\alpha-2)(\alpha-3)(1+2(\alpha-4))c^{2}}{24 } \nonumber\\
& \quad + \frac{(\alpha-2)(\alpha-3)(\alpha-4)c^{3}}{120} \Bigg) + o(n^{-(\alpha-1)}) \nonumber \\
& \geq \frac{\alpha c^{2-\alpha}(124-330\alpha + 285 \alpha^2-90 \alpha^3 + 11 \alpha^4)}{120 n^{\alpha-1}} + o(n^{-(\alpha-1)}), \nonumber
\end{align}
where the first inequality follows from Lemma~\ref{lemma.biasalphalarge}, and in the last step we have taken $c = 1$ in the following expression
\begin{align}
& \frac{1}{2}+ \frac{(2-\alpha)(5-3\alpha)c}{24} +  \frac{(\alpha-2)(\alpha-3)(1+2(\alpha-4))c^{2}}{24 } \nonumber \\
& \quad + \frac{(\alpha-2)(\alpha-3)(\alpha-4)c^{3}}{120},
\end{align}
and considered the fact that it is a monotonically decreasing function with respect to $c$ on $(0,1]$ for any $\alpha\in (1,3/2)$.

For cases when $c>1$, since we take $P = (n^{-1}-\epsilon,n^{-1} - \epsilon,\ldots,n^{-1} -\epsilon, \frac{n\epsilon}{S-n},\ldots,\frac{n\epsilon}{S-n})$, by a continuity argument, the analysis is exactly the same as that above when we set $c=1$ as we can take $\epsilon$ as small as possible. One can verify that the function $\alpha(124-330\alpha + 285 \alpha^2-90 \alpha^3 + 11 \alpha^4)/120$ is positive on interval $(1,3/2)$. Defining $\sqrt{c_\alpha} = \alpha c^{2-\alpha}(124-330\alpha + 285 \alpha^2-90 \alpha^3 + 11 \alpha^4)/120>0$ when $c\leq 1$, and $\sqrt{c_\alpha} = \alpha (124-330\alpha + 285 \alpha^2-90 \alpha^3 + 11 \alpha^4)/120>0$ when $c>1$, the proof is completed.

\subsection{Lower bounds for estimation of $F_\alpha(P)$ when $0<\alpha<1$}

Applying Lemma~\ref{lemma.qn1lower} to function $f_\alpha(x) = x^\alpha, \alpha \in (0,1)$, taking $k = 4$, we have the following result:
\begin{lemma}\label{lemma.alphalower}
For $f_\alpha(x) = x^\alpha$ on $[0,1]$, $\alpha\in (0,1),  x\in (0,1)$, we have
\begin{equation}
f_\alpha(x) - B_n[f_\alpha](x) \geq  \frac{\alpha(1-\alpha)}{2n} x^{\alpha-2} (1-x) \left( x - \frac{2-\alpha}{3n} \right). 
\end{equation}
\end{lemma}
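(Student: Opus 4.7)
The plan is to invoke Lemma~\ref{lemma.qn1lower} with $k = 4$, choosing the Taylor center $x_1$ to be the point $x$ at which the bound is being proved. This choice makes the order-$3$ Taylor polynomial satisfy $Q_3(x) = f_\alpha(x)$, so the quantity $Q_3(x) - B_n[Q_3](x)$ collapses to an explicit combination of the low-order central moments of $X/n$, where $X \sim \mathrm{Bin}(n,x)$.

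The first step is to verify the sign hypothesis. For $\alpha \in (0,1)$ the factors $\alpha, \alpha-1, \alpha-2, \alpha-3$ have signs $+,-,-,-$, so $f_\alpha^{(4)}(x) = \alpha(\alpha-1)(\alpha-2)(\alpha-3)x^{\alpha-4} \le 0$ on $(0,1)$, and Lemma~\ref{lemma.qn1lower} applies, giving $f_\alpha(x) - B_n[f_\alpha](x) \ge Q_3(x) - B_n[Q_3](x)$.

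The second step is the explicit computation. Using $\bE[X/n - x] = 0$, $\bE[(X/n - x)^2] = x(1-x)/n$, and $\bE[(X/n - x)^3] = x(1-x)(1-2x)/n^2$, together with $f_\alpha''(x) = \alpha(\alpha-1)x^{\alpha-2}$ and $f_\alpha'''(x) = \alpha(\alpha-1)(\alpha-2)x^{\alpha-3}$, and factoring out $\frac{\alpha(1-\alpha)x^{\alpha-2}(1-x)}{2n}$, I would obtain
\begin{equation}
Q_3(x) - B_n[Q_3](x) = \frac{\alpha(1-\alpha)x^{\alpha-2}(1-x)}{2n}\left[x - \frac{(2-\alpha)(1-2x)}{3n}\right].
\end{equation}

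The third step is an elementary monotonicity bookkeeping: since $1 - 2x \le 1$ for $x \ge 0$ and the prefactor is positive, replacing the bracket by $x - (2-\alpha)/(3n)$ only decreases the expression, delivering the lower bound stated in the lemma. The outer $\max\{\cdot,0\}$ is automatic because $f_\alpha$ is concave on $[0,1]$ for $\alpha \in (0,1)$, so Jensen's inequality yields $B_n[f_\alpha](x) = \bE[f_\alpha(X/n)] \le f_\alpha(x)$. Once Lemma~\ref{lemma.qn1lower} is granted, none of these steps presents a real obstacle; the main subtlety is simply recognizing that taking $x_1 = x$ is what makes the computation so clean and eliminates the cross term involving $\bE[X/n - x]$.
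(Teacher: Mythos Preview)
Your proposal is correct and matches the paper's proof essentially line for line: both invoke Lemma~\ref{lemma.qn1lower} with $k=4$ at the center $x_1=x$, compute $Q_3(x)-B_n[Q_3](x)$ via the second and third binomial central moments to obtain $\frac{\alpha(1-\alpha)x^{\alpha-2}(1-x)}{2n}\bigl[x-\frac{(2-\alpha)(1-2x)}{3n}\bigr]$, then use $1-2x\le 1$ for the stated bound and concavity for the nonnegativity. The only cosmetic difference is that the paper quotes the moments from a separate lemma (Lemma~\ref{lemma.brastaylor}) rather than writing them inline.
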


Suppose $n\geq S$. Define distribution $W = (w_1,w_2,\ldots,w_S) \in \mathcal{M}_S$ such that
\begin{equation}
1\leq i \leq S-1, w_i = \frac{1}{n}; \quad w_S = 1- \frac{S-1}{n}. 
\end{equation}
Note that $w_i \geq n^{-1}, 1\leq i\leq S$. It follows from Lemma~\ref{lemma.alphalower} that
\begin{align}
F_\alpha(W) - \bE_W F_\alpha(P_n) &  \geq \sum_{i = 1}^{S-1} \frac{\alpha(1-\alpha)}{6n^2} \left(\frac{1}{n} \right)^{\alpha-2} \left( 1-\frac{1}{n} \right) \\
& = \frac{\alpha(1-\alpha)(S-1)}{6n^\alpha}\frac{n-1}{n}.
\end{align}

Thus, we know for all $0<\alpha<1$,
\begin{align}
& \sup_{P \in \mathcal{M}_S} \bE_P \left( F_\alpha(P) - F_\alpha(P_n) \right)^2 \nonumber \\
& \quad \geq \frac{\alpha^2(1-\alpha)^2 (S-1)^2}{36 n^{2\alpha}} \left( 1-\frac{1}{n}\right)^2.
\end{align}

It is shown in \cite{Jiao--Venkat--Han--Weissman2015minimax} that the following minimax lower bound holds for estimation of $F_\alpha(P), 1/2\leq \alpha<1$.
\begin{lemma}\label{lemma.minimaxlowerfalpha12to1}
 For $\frac{1}{2}\le\alpha<1$, we have
  \begin{align}
  &  \inf_{\hat{F}}\sup_{P\in\mathcal{M}_S} \bE_P\left(\hat{F}-F_\alpha(P)\right)^2 \nonumber \\
  & \quad\ge \frac{\alpha^2}{32en}\Bigg [(2(S-1))^{1-\alpha}-2^{-\alpha} \nonumber \\
  & \qquad -\frac{1-\alpha}{4n}\left((2(S-1))^{1-\alpha}+2^{-\alpha}\right)\Bigg]^2 \nonumber \\
    & \qquad - e^{-n/4}S^{2(1-\alpha)}\nonumber \\
    & \quad \gtrsim \frac{S^{2-2\alpha}}{n},
  \end{align}
  where the infimum is taken over all possible estimators.
  \end{lemma}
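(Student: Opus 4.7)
The plan is to use Le Cam's two-point method, exploiting the non-smoothness of $x \mapsto x^\alpha$ at the origin (its derivative blows up for $\alpha<1$). The main task is to construct a pair $P_0,P_1\in\mathcal{M}_S$ whose $F_\alpha$-values differ by at least the claimed amount while $P_0^{\otimes n}$ and $P_1^{\otimes n}$ are close in total variation. The additive term $-e^{-n/4}S^{2(1-\alpha)}$ signals that the argument is naturally carried out in a Poissonized sampling model and then lifted to the multinomial setting, with the transfer costing a Chernoff tail bound for $\bP(N \le n/2)$ where $N \sim \mathrm{Poi}(n)$.

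Concretely, I would take $P_0, P_1$ to be $\delta$-perturbations of a common two-level distribution having $S-1$ small atoms at scale $\tfrac{1}{2(S-1)}$ and one large atom at scale $\tfrac{1}{2}$, choosing $\delta \asymp n^{-1/2}$. A first-order Taylor expansion of $x^\alpha$ at the two probability levels yields
\[
|F_\alpha(P_0) - F_\alpha(P_1)| \;\gtrsim\; \alpha \delta \bigl[(2(S-1))^{1-\alpha} - 2^{-\alpha}\bigr],
\]
while the Lagrange remainder associated with $f''(x) = \alpha(\alpha-1) x^{\alpha-2}$ (which has definite sign for $\alpha\in[1/2,1)$) produces the explicit second-order correction $\tfrac{1-\alpha}{4n}\bigl[(2(S-1))^{1-\alpha} + 2^{-\alpha}\bigr]$ appearing inside the brackets of the lemma.

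For the information-theoretic side I would bound $\chi^2(P_1 \| P_0)$ by a direct calculation, which for this two-level family turns out to be of order $\delta^2$. Tensorization gives $\mathrm{TV}(P_0^{\otimes n}, P_1^{\otimes n})^2 \le \tfrac{1}{2} n \chi^2(P_1 \| P_0)$, which is bounded by a constant strictly less than $1$ once $\delta = c/\sqrt{n}$ for small enough $c$. Applying Le Cam's inequality
\[
\inf_{\hat F} \max_{i = 0,1} \bE_i (\hat F - F_\alpha(P_i))^2 \;\ge\; \tfrac{1}{8}\bigl(F_\alpha(P_0) - F_\alpha(P_1)\bigr)^2 \bigl(1 - \mathrm{TV}(P_0^{\otimes n}, P_1^{\otimes n})\bigr)
\]
and optimizing $\delta$ produces the prefactor $\tfrac{\alpha^2}{32en}$, with the factor $1/e$ reflecting the sharp cut at $n\chi^2 \asymp 1$.

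The main obstacle I anticipate is arranging the construction so that the constants $1/(32e)$ and the precise second-order correction $\tfrac{1-\alpha}{4n}\bigl[(2(S-1))^{1-\alpha}+2^{-\alpha}\bigr]$ match the statement exactly: the natural symmetric two-point pair comes close but may be off by constants, and one may have to allow $P_0, P_1$ to deviate slightly from the simplest symmetric choice, or Poissonize and place independent priors on each coordinate, so that the two-point gap and the chi-squared calculation can be done one coordinate at a time and the constants line up. The final additive $-e^{-n/4}S^{2(1-\alpha)}$ slack is then the standard Poissonization penalty: transferring a lower bound from the independent-Poisson sampling model back to the multinomial model costs a factor $\bP(N \le n/2) \le e^{-n/4}$, while the trivial worst-case bound $F_\alpha(P)^2 \le S^{2(1-\alpha)}$ controls the damage on the exceptional event.
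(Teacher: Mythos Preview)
The paper does not actually prove this lemma: it is quoted verbatim from the companion paper \cite{Jiao--Venkat--Han--Weissman2014minimax} and used as a black box in the proof of Theorem~\ref{thm.lowerfalpha}. So there is no in-paper argument to compare against.

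That said, your plan is the right one and almost certainly coincides with what the companion paper does. The fingerprints in the statement --- the pair of scales $1/(2(S-1))$ and $1/2$, the first-order gap $\alpha\bigl[(2(S-1))^{1-\alpha}-2^{-\alpha}\bigr]$, the explicit second-order Lagrange correction with coefficient $\tfrac{1-\alpha}{4n}$, the prefactor $1/(32en)$, and the Poissonization penalty $e^{-n/4}S^{2(1-\alpha)}$ --- all point unambiguously to a Le Cam two-point construction on a two-level distribution with perturbation $\delta\asymp n^{-1/2}$, carried out under Poisson sampling and then transferred back to the multinomial model via the Chernoff bound $\bP(\mathrm{Poi}(n)\le n/2)\le e^{-n/8}$ (or a variant). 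Your identification of each piece is accurate, including the role of the trivial bound $F_\alpha(P)\le S^{1-\alpha}$ on the exceptional event.

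One small caution: the exact constant $2^{-\alpha}$ (rather than $2^{1-\alpha}$) in the bracket, and the precise $1/(32e)$, depend on exactly how the perturbation is parametrized and where the $\chi^2$ bound is cut. Your own caveat about possibly needing to ``allow $P_0,P_1$ to deviate slightly from the simplest symmetric choice'' is well placed; getting the displayed constants to match will require reproducing the companion paper's specific parametrization rather than the most symmetric one. But the method is correct and the $\gtrsim S^{2-2\alpha}/n$ conclusion follows regardless of these constant-level choices.
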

Since this lower bound holds for all possible estimators, it also holds for the MLE $F_\alpha(P_n)$. Since $\max\{a,b\}\geq \frac{1}{2}(a+b)$, we have the desired lower bound. 

\subsection{Lower bounds for estimation of $H(P)$}

Braess and Sauer \cite{Braess--Sauer2004} derived the following lower bound for the approximation error of Bernstein polynomials for the function $g(x) = -x \ln x$:
\begin{lemma}\label{lemma.glower}
Define $g(x) = -x\ln x$ on $[0,1]$. For $x \geq \frac{15}{n}, x\in [0,1]$, we have
\begin{equation}
g(x) - B_n[g](x) \geq \frac{1-x}{2n} + \frac{1}{20n^2 x} - \frac{x}{12 n^2}.
\end{equation}
\end{lemma}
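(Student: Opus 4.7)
The plan is to invoke Lemma~\ref{lemma.qn1lower} with $k=6$ applied to $g(x) = -x\ln x$ and expansion point $x_1 = x$. A direct differentiation gives $g^{(j)}(x) = (-1)^{j-1}(j-2)!/x^{j-1}$ for $j \ge 2$, so in particular $g^{(6)}(x) = -24/x^5 < 0$ on $(0,1)$, validating the hypothesis. Writing $Q_5$ for the order-$5$ Taylor polynomial of $g$ at $x$ and noting $Q_5(x) = g(x)$, the lemma delivers
\[
g(x) - B_n[g](x) \;\geq\; Q_5(x) - B_n[Q_5](x) \;=\; g(x) - \bE\bigl[Q_5(X/n)\bigr], \qquad X \sim \mathsf{B}(n,x).
\]

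Next I would expand the right-hand side by linearity of expectation together with the Taylor coefficients of $g$ at $x$,
\[
\bE[Q_5(X/n)] - g(x) \;=\; \sum_{j=2}^{5} \frac{g^{(j)}(x)}{j!}\,\mu_j(x),
\]
where $\mu_j(x) = \bE[(X/n - x)^j]$ is the $j$-th central moment of $X/n$, for which standard closed forms via cumulants are available: $\mu_2 = x(1-x)/n$, $\mu_3 = x(1-x)(1-2x)/n^2$, $\mu_4 = \frac{3x^2(1-x)^2}{n^2} + \frac{x(1-x)(1-6x(1-x))}{n^3}$, and $\mu_5 = \frac{10x^2(1-x)^2(1-2x)}{n^4} + \frac{x(1-x)(1-2x)(1-12x(1-x))}{n^5}$. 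Substituting the derivatives $g''(x) = -1/x$, $g'''(x) = 1/x^2$, $g^{(4)}(x) = -2/x^3$, $g^{(5)}(x) = 6/x^4$, the $O(1/n)$ piece is $-\frac{1-x}{2n}$, contributing $\frac{1-x}{2n}$ to $g(x)-B_n[Q_5](x)$; the $O(1/n^2)$ piece, obtained by combining $\frac{g'''(x)}{6}\mu_3$ with the leading $3x^2(1-x)^2/n^2$ part of $\frac{g^{(4)}(x)}{24}\mu_4$, simplifies to
\[
-\frac{(1-x)(1-2x)}{6\,n^2 x} + \frac{(1-x)^2}{4\,n^2 x} \;=\; \frac{(1-x)(1+x)}{12\,n^2 x} \;=\; \frac{1}{12 n^2 x} - \frac{x}{12 n^2},
\]
whose coefficient $1/12$ in front of $1/(n^2 x)$ is already strictly stronger than the claimed $1/20$.

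The main technical obstacle, and the only place the hypothesis $x \ge 15/n$ enters, is to control the residual coming from the subleading piece of $\mu_4$ paired with $g^{(4)}$ and from both pieces of $\mu_5$ paired with $g^{(5)}$. These residuals take the form $\frac{c_k(x)}{n^{2+k} x^{1+k}}$ for $k=1,2,3$, where $|c_k(x)|$ is bounded by an absolute constant coming from the product of the explicit derivative coefficient and the polynomial factors $(1-x)$, $(1-2x)$, $(1-6x(1-x)) \in [-\tfrac12, 1]$, and $(1-12x(1-x)) \in [-2,1]$. Under $nx \ge 15$ each such term is at most $15^{-k}$ times $|c_k|/(n^2 x)$, and a straightforward numerical tally (the dominant contribution being $\tfrac{1}{12 n^3 x^2} \le \tfrac{1}{180 n^2 x}$) shows that the total residual magnitude stays below $\bigl(\tfrac{1}{12} - \tfrac{1}{20}\bigr)\tfrac{1}{n^2 x} = \tfrac{1}{30 n^2 x}$. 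Subtracting this worst-case residual from the $O(1/n^2)$ term collected above recovers exactly $\frac{1}{20 n^2 x} - \frac{x}{12 n^2}$, which combined with the leading $\frac{1-x}{2n}$ yields the claimed inequality. The only thing demanding genuine care is sign-tracking in the polynomial factors of $\mu_4$ and $\mu_5$; once one works with absolute values the constant $15$ in the hypothesis comes out as a comfortable, not sharp, threshold.
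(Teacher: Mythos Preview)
Your overall strategy---applying Lemma~\ref{lemma.qn1lower} with $k=6$ to $g(x)=-x\ln x$ and then using the exact binomial central moments from Lemma~\ref{lemma.brastaylor}---is precisely the Braess--Sauer argument from which the paper quotes this result, and your $O(1/n)$ and $O(1/n^2)$ computations are correct.

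There is, however, a concrete error in your $\mu_5$. By Lemma~\ref{lemma.brastaylor},
\[
\mu_5(x)=\Bigl(10\,\tfrac{x^2(1-x)^2}{n^{3}}+\tfrac{x(1-x)\bigl(1-12x(1-x)\bigr)}{n^{4}}\Bigr)(1-2x),
\]
with leading order $n^{-3}$, not $n^{-4}$ as you wrote. Consequently the $j=5$ term contributes an $O(n^{-3}x^{-2})$ residual
\[
-\frac{g^{(5)}(x)}{120}\cdot \frac{10x^2(1-x)^2(1-2x)}{n^3}
\;=\;-\frac{(1-x)^2(1-2x)}{2n^3x^2},
\]
which is \emph{six times larger} than the $j=4$ residual $\tfrac{(1-x)(1-6x(1-x))}{12n^3x^2}$ that you identified as dominant. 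With this correction your ``straightforward numerical tally'' based on separate absolute-value bounds no longer closes: crudely, $\tfrac{1}{12n^3x^2}+\tfrac{1}{2n^3x^2}=\tfrac{7}{12n^3x^2}\le \tfrac{7}{180\,n^2x}$, and $\tfrac{7}{180}>\tfrac{1}{30}$.

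The fix is to combine the two $O(n^{-3})$ pieces \emph{before} bounding. A short calculation gives
\[
\frac{(1-x)(1-6x(1-x))}{12n^3x^2}-\frac{(1-x)^2(1-2x)}{2n^3x^2}
=-\frac{(1-x)(5-12x+6x^2)}{12n^3x^2}\;\ge\;-\frac{5}{12n^3x^2}\;\ge\;-\frac{1}{36\,n^2x}
\]
for $nx\ge 15$, and the genuine $O(n^{-4})$ remainder is bounded below by $-\tfrac{1}{2250\,n^2x}$. Since $\tfrac{1}{36}+\tfrac{1}{2250}<\tfrac{1}{30}$, the slack $\tfrac{1}{12}-\tfrac{1}{20}=\tfrac{1}{30}$ is indeed enough and the claimed inequality follows. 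So your plan is sound, but the residual accounting must be redone with the correct $\mu_5$.
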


Applying Lemma~\ref{lemma.glower} to the estimation of $H(P)$, we know that if $\forall 1\leq i\leq S, p_i \geq \frac{15}{n}$,
\begin{equation}
H(P) - \bE H(P_n) \geq \frac{S-1}{2n} + \frac{1}{20n^2} \left( \sum_{i = 1}^S \frac{1}{p_i} \right) - \frac{1}{12n^2}.
\end{equation}

Consider the uniform distribution $P$ with $n\geq 15S$, which guarantees $p_i \geq \frac{15}{n}$. Since
\begin{equation}
\sum_{i = 1}^S \frac{1}{p_i} \geq S^2,
\end{equation}
we have
\begin{equation}
\sup_{P\in \mathcal{M}_S} \left(  H(P) - \bE H(P_n) \right)  \geq \frac{S-1}{2n} + \frac{S^2}{20n^2} - \frac{1}{12n^2}.
\end{equation}
Thus, when $n\geq 15S$,
\begin{equation}
\sup_{P\in \mathcal{M}_S} \bE_P \left( H(P) - H(P_n) \right)^2 \geq \left( \frac{S-1}{2n} + \frac{S^2}{20n^2} - \frac{1}{12n^2} \right)^2.
\end{equation}

It was shown in \cite[Prop. 1]{Wu--Yang2014minimax} that the following minimax lower bound holds.
\begin{lemma}\label{lemma.minimaxlowerentropyvariance}
There exists a universal constant $c>0$ such that
\begin{equation}
\inf_{\hat{H}} \sup_{P \in \mathcal{M}_S} \bE_P \left( \hat{H} - H(P) \right)^2 \geq c \frac{\ln^2 S}{n},
\end{equation}
where the infimum is taken over all possible estimators $\hat{H}$.
\end{lemma}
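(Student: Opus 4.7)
The plan is to apply Le~Cam's two-point method. For $\epsilon \in (0, 1/(S-1))$, let $P_\epsilon \in \cM_S$ be the distribution with $p_1 = 1 - (S-1)\epsilon$ and $p_2 = \cdots = p_S = \epsilon$. I would take the base point $\epsilon_0 = 1/(2S)$ and the perturbation $\epsilon_1 = \epsilon_0 + \delta$ with $\delta = c/(S\sqrt n)$ for a small enough positive constant $c$. The goal is to show that the entropies $H(P_{\epsilon_0})$ and $H(P_{\epsilon_1})$ differ by $\asymp \ln S/\sqrt n$, while the $n$-sample product laws stay close in total variation; the standard Le~Cam two-point bound then delivers a lower bound of order $\ln^2 S/n$.

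For the entropy gap, differentiating gives
\[
\frac{d}{d\epsilon}H(P_\epsilon) = (S-1)\bigl[\ln(1-(S-1)\epsilon) - \ln \epsilon\bigr],
\]
which at $\epsilon_0 = 1/(2S)$ equals $(S-1)\ln(S+1) \asymp S\ln S$. Since $\delta \ll \epsilon_0$ for large $n$, this derivative remains of order $S\ln S$ throughout $[\epsilon_0, \epsilon_1]$, so the mean value theorem yields $|H(P_{\epsilon_1}) - H(P_{\epsilon_0})| \asymp S\delta \ln S = c\ln S/\sqrt n$.

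For the indistinguishability part, I would compute the chi-square divergence directly:
\[
\chi^2(P_{\epsilon_0} \| P_{\epsilon_1}) = \frac{(S-1)^2 \delta^2}{1 - (S-1)\epsilon_1} + \frac{(S-1)\delta^2}{\epsilon_1} \lesssim S^2 \delta^2.
\]
Tensorizing and using $\mathrm{KL} \le \chi^2$ gives $\mathrm{KL}(P_{\epsilon_0}^{\otimes n} \| P_{\epsilon_1}^{\otimes n}) \le n\,\chi^2(P_{\epsilon_0}\|P_{\epsilon_1}) \lesssim n S^2 \delta^2 = c^2$. Choosing $c$ small, Pinsker's inequality yields $\mathrm{TV}(P_{\epsilon_0}^{\otimes n},P_{\epsilon_1}^{\otimes n}) \le 1/2$, and Le~Cam's two-point bound
\[
\inf_{\hat H}\sup_{P\in\cM_S} \bE_P(\hat H - H(P))^2 \ge \tfrac{1}{4}\bigl(H(P_{\epsilon_1}) - H(P_{\epsilon_0})\bigr)^2 \bigl(1 - \mathrm{TV}(P_{\epsilon_0}^{\otimes n},P_{\epsilon_1}^{\otimes n})\bigr)
\]
then produces the desired $c' \ln^2 S / n$.

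The main delicate point is the choice of base point $\epsilon_0$. Had one perturbed around the uniform distribution $\epsilon = 1/S$, the derivative $\tfrac{d}{d\epsilon}H(P_\epsilon)$ would vanish and the $\ln S$ factor would be lost entirely. Taking $\epsilon_0 = 1/(2S)$ keeps $p_1$ of constant order while $\ln(1-(S-1)\epsilon_0) - \ln \epsilon_0 = \ln(S+1)$ supplies the $\ln S$ magnitude that converts the $1/\sqrt n$ statistical perturbation budget into a $\ln S/\sqrt n$ functional gap. The remaining chi-square and derivative computations are routine once this base point is fixed.
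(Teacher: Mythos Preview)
Your argument is correct: the two-point Le~Cam construction with base point $\epsilon_0 = 1/(2S)$ and perturbation of size $c/(S\sqrt{n})$ gives exactly the entropy gap $\asymp \ln S/\sqrt{n}$ and bounded KL after tensorization, and your observation that the derivative vanishes at the uniform point is the key reason $\epsilon_0$ must be chosen off-center. Note, however, that the paper does not supply its own proof of this lemma; it simply quotes the result as \cite[Prop.~1]{Wu--Yang2014minimax}, so there is no in-paper argument to compare against---your two-point method is in fact the standard route (and essentially the one Wu and Yang use) to obtain the $\ln^2 S/n$ variance-type lower bound.
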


Hence, we have
\begin{align}
& \sup_{P\in \mathcal{M}_S} \bE_P \left( H(P) - H(P_n) \right)^2 \nonumber \\
& \quad \geq \max\left\{ \left( \frac{S-1}{2n} + \frac{S^2}{20n^2} - \frac{1}{12n^2} \right)^2 , c \frac{\ln^2 S}{n} \right\}  \\
& \quad \geq \frac{1}{2}\left( \frac{S-1}{2n} + \frac{S^2}{20n^2} - \frac{1}{12n^2} \right)^2 + \frac{c}{2} \frac{\ln^2 S}{n}.
\end{align}

Similar arguments can be applied to the Miller--Madow estimator.

\subsection{Lower bounds for entropy estimation using $H(\hat{P}_B)$}

Since $H(\hat{P}_B)$ is a specific estimator for entropy, the following lemma is proved via considering several specific distributions. 

\begin{lemma}\label{lemma.hbbiaslowerbound}
If $n\ge\max\{15S,Sa, 2ea\}$,
  \begin{align}
    & \sup_{P\in\mathcal{M}_S}\left |\bE_P H(\hat{P}_B) - H(P) \right| \nonumber \\
    & \quad \ge \frac{(S-1)a}{4(n+Sa)}\ln \left( \frac{n+Sa}{a}\right) + \frac{S-1}{8n} + \frac{S^2}{80n^2} - \frac{1}{48n^2}
  \end{align}
  If $n<Sa$, then
  \begin{equation}
    \sup_{P\in\mathcal{M}_S}\left |\bE_P H(\hat{P}_B) - H(P) \right| \geq  \frac{S-1}{2S} \ln S.
  \end{equation}
    If $n<2ea$, then
  \begin{equation}
    \sup_{P\in\mathcal{M}_S}\left |\bE_P H(\hat{P}_B) - H(P) \right| \geq \frac{S-1}{2e+S} \ln S.
  \end{equation}
  
  If $n<15S,n\geq 2ea$, then
  \begin{align}
  &  \sup_{P\in\mathcal{M}_S}\left |\bE_P H(\hat{P}_B) - H(P) \right|  \nonumber \\
  & \quad \geq \frac{(S-1)a}{4(n+Sa)}\ln \left( \frac{n+Sa}{a}\right) + \frac{\lfloor n/15 \rfloor}{8n} - \frac{1}{16n}. 
  \end{align}
\end{lemma}

The corresponding results in Theorem~\ref{thm.lowerbound} follow from Lemma~\ref{lemma.hbbiaslowerbound}, Lemma~\ref{lemma.minimaxlowerentropyvariance}, and the inequality $\max\{a,b\}\ge \frac{a+b}{2}$.

\subsection{Lower bounds for entropy estimation using $\hat{H}^{\mathsf{Bayes}}$}

We prove Theorem~\ref{thm.bayeslowerbound} below. Applying Lemma~\ref{lemma.digammafunctionbound}, we have
\begin{align}
\hat{H}^{\mathsf{Bayes}} & \leq \psi(Sa + n + 1) - \sum_{i = 1}^S \frac{a+X_i}{Sa + n}\psi(a+1) \\
& = \psi(Sa + n+1) - \psi(a+1) \\
& \leq \ln \left( \frac{Sa + n + e^{-\gamma}}{a + \frac{1}{2}} \right). 
\end{align}

Since $\hat{H}^{\mathsf{Bayes}}$ is upper bounded by $\ln \left( \frac{Sa + n + e^{-\gamma}}{a + \frac{1}{2}} \right)$ for any empirical observations, the squared error it incurs in Shannon entropy estimation when the true distribution is the uniform distribution is at least
\begin{align}
\left(  \ln \left( \frac{Sa + S/2}{Sa + n + e^{-\gamma}} \right) \right)^2
\end{align}
if $S\geq 2(n +1)$. 

\section*{Acknowledgments}

We thank Dany Leviatan, Gancho Tachev, and Radu Paltanea for very helpful discussions regarding the literature on approximation theory using positive linear operators. We thank Jayadev Acharya, Alon Orlitsky, Ananda Theertha Suresh, and Himanshu Tyagi for communicating to us the independent discovery that it suffices to take $n\gg 1$ samples to consistently estimate $F_\alpha(P)$, when $\alpha>1$. We thank Maya Gupta for raising the question on the optimality of the Dirichlet prior smoothing techniques applying to entropy estimation. We thank the anonymous reviewers and the associate editor for very helpful comments that significantly improved the presentation of the paper.  

\appendices

\section{Auxiliary lemmas}
We begin with the definition of the negative association property, which allows us to upper bound the variance by treating each component of the empirical distribution $P_n(i)$ as ``independent'' random variables.
\begin{definition}
  \cite[Def. 2.1]{joag-dev1983} Random variables $X_1,X_2,\cdots,X_S$ are said to be negatively associated if for any pair of disjoint subsets $A_1,A_2$ of $\{1,2,\cdots,S\}$, and any component-wise increasing functions $f_1,f_2$,
  \begin{align}
   \mathsf{Cov}\left(f_1(X_i,i\in A_1), f_2(X_j,j\in A_2)\right) \le 0.
  \end{align}
\end{definition}

To verify whether random variables $X_1,X_2,\cdots,X_S$ are negatively associated or not, the following lemma presents a useful criterion.
\begin{lemma}\label{lem_NA}
  \cite[Thm. 2.9]{joag-dev1983} Let $X_1,X_2,\cdots,X_S$ be $S$ independent random variables with log-concave densities. Then the joint conditional distribution of $X_1,X_2,\cdots,X_S$ given $\sum_{i=1}^S X_i$ is negatively associated.
\end{lemma}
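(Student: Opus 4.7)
The plan is to reduce this negative association statement to two classical ingredients: a conditional independence decomposition for block subvectors given their block sums, and Efron's monotonicity theorem for log-concave families. Below I outline the steps in the order I would carry them out.

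\textbf{Step 1 (Reduction to a two-block partition).} First I would reduce to the case $A_1 \cup A_2 = \{1,\ldots,S\}$. If $A_3 := \{1,\ldots,S\}\setminus (A_1\cup A_2)$ is nonempty, set $Y_3 = \sum_{k\in A_3} X_k$ and condition additionally on $Y_3$. By independence of the $X_i$'s and convolution-stability of log-concavity, conditional on $\{T = s, Y_3 = y_3\}$ the vector $(X_i)_{i\in A_1\cup A_2}$ is distributed as independent log-concave variables constrained to have sum $s - y_3$. Averaging over $Y_3$ and using that the conditional covariance remains nonpositive if it is nonpositive for each value of $y_3$ reduces everything to the two-block case.

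\textbf{Step 2 (Conditional independence via a single partial sum).} With $A_1\sqcup A_2 = \{1,\ldots,S\}$ and $T = \sum_{i=1}^S X_i$, let $Y := \sum_{i\in A_1} X_i$, so $\sum_{j\in A_2} X_j = T - Y$. By independence of $(X_i)_{i\in A_1}$ and $(X_j)_{j\in A_2}$, conditional on $(T, Y) = (s, y)$ the two subvectors $X_{A_1}$ and $X_{A_2}$ are independent. The law of total covariance therefore yields
\begin{equation*}
\mathsf{Cov}\bigl(f_1(X_{A_1}),\, f_2(X_{A_2}) \,\big|\, T=s\bigr) = \mathsf{Cov}\bigl(\phi_1(Y),\, \phi_2(Y) \,\big|\, T=s\bigr),
\end{equation*}
where $\phi_1(y) := \mathbb{E}[f_1(X_{A_1}) \mid Y=y]$ and $\phi_2(y) := \mathbb{E}[f_2(X_{A_2}) \mid \sum_{A_2} X_j = s - y]$.

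\textbf{Step 3 (Efron's monotonicity theorem).} The key analytic input is Efron's theorem: if $U_1,\ldots,U_k$ are independent with log-concave densities and $g$ is coordinatewise nondecreasing, then $u \mapsto \mathbb{E}[g(U_1,\ldots,U_k) \mid \sum_i U_i = u]$ is nondecreasing. Apply this to the first block to conclude $\phi_1$ is nondecreasing in $y$, and to the second block (with sum $s-y$) to conclude $\phi_2$ is nonincreasing in $y$. Since $\phi_1(Y)$ and $\phi_2(Y)$ are monotone in opposite directions in the single random variable $Y$, the univariate Chebyshev correlation inequality gives $\mathsf{Cov}(\phi_1(Y), \phi_2(Y) \mid T=s) \leq 0$, which combined with Step~2 proves the claim.

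\textbf{Main obstacle.} The substantive content lives entirely in Efron's monotonicity theorem; Steps 1, 2, and the final Chebyshev-type inequality are soft. Efron's result in turn rests on the fact that log-concavity of independent densities translates, via convolution identities, into total positivity of order two ($\mathrm{TP}_2$) for the conditional density of any coordinate given the sum, whereupon the monotone likelihood ratio property delivers monotone conditional expectations of monotone functions. This $\mathrm{TP}_2$/MLR step is where all the work is, and I would either cite it directly from the classical literature or reprove it by verifying $\mathrm{TP}_2$ for two-variable conditionals and then bootstrapping to $k$ variables by iterated conditioning.
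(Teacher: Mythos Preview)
The paper does not prove this lemma; it simply cites \cite[Thm.~2.9]{joag-dev1983}. Your outline is essentially the original Joag-Dev--Proschan argument, and Steps~2 and~3 are correct and carry the entire weight of the proof: the conditional-independence decomposition plus Efron's monotonicity plus the one-variable Chebyshev association inequality is exactly how the result is established.

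There is, however, a genuine gap in Step~1 as you have written it. You claim that if $\mathsf{Cov}(f_1,f_2\mid T=s,Y_3=y_3)\le 0$ for every $y_3$, then $\mathsf{Cov}(f_1,f_2\mid T=s)\le 0$ follows by ``averaging over $Y_3$''. This is not true in general: the law of total covariance has a second term $\mathsf{Cov}\bigl(\mathbb{E}[f_1\mid T,Y_3],\,\mathbb{E}[f_2\mid T,Y_3]\,\big|\,T=s\bigr)$, and here both conditional expectations are \emph{nonincreasing} in $y_3$ (by Efron applied to the block $A_1\cup A_2$ with sum $s-y_3$), so Chebyshev gives that second term the wrong sign. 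The fix is much simpler than the route you attempted: just enlarge $A_2$ to $A_2\cup A_3$ and regard $f_2$ as a function of $X_{A_2\cup A_3}$ that is constant in the $A_3$ coordinates. It remains coordinatewise nondecreasing, and now $A_1$ and the enlarged $A_2$ partition $\{1,\ldots,S\}$, so you are immediately in the two-block setting. With this trivial correction your Steps~2--3 go through verbatim.
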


In light of the preceding lemma, we can obtain the following corollary.
\begin{corollary}\label{cor_NA_Multinomial}
  For any discrete probability distribution vector $P\in\mathcal{M}_S$, the random variables $\mathbf{X}=(X_1,X_2,\cdots,X_S)$ drawn from the multinomial distribution $\mathbf{X}\sim\mathsf{multi}(n;P)$ are negatively associated.
\end{corollary}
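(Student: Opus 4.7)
The plan is to realize the Multinomial distribution as independent Poisson random variables conditioned on their sum, and then invoke Lemma~\ref{lem_NA}. Specifically, I would introduce auxiliary independent random variables $Y_1, Y_2, \ldots, Y_S$ with $Y_i \sim \mathsf{Poi}(np_i)$. A direct calculation of the conditional joint PMF shows that, conditioned on the event $\sum_{i=1}^S Y_i = n$, the random vector $(Y_1, \ldots, Y_S)$ has exactly the $\mathsf{Multinomial}(n, P)$ distribution. This Poissonization trick is the standard bridge between the two models.

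Next, I would verify the hypothesis of Lemma~\ref{lem_NA} by checking that the Poisson PMF is log-concave. This is the elementary calculation
\begin{equation}
\frac{\mathbb{P}(Y_i = k)^2}{\mathbb{P}(Y_i = k-1)\, \mathbb{P}(Y_i = k+1)} = \frac{k+1}{k} \geq 1, \quad k \geq 1,
\end{equation}
which establishes log-concavity of the Poisson mass function. Although Lemma~\ref{lem_NA} is phrased for log-concave densities, the argument of Joag-Dev and Proschan applies equally well to discrete random variables with log-concave PMFs; this is how the multinomial example is originally treated in that reference.

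Combining the two steps, Lemma~\ref{lem_NA} applied to the independent Poissons $(Y_1,\ldots, Y_S)$ conditioned on their sum $\sum_i Y_i = n$ yields that the conditional law is negatively associated. Since this conditional law coincides with $\mathsf{Multinomial}(n, P)$, the components $(X_1, \ldots, X_S)$ are negatively associated, which is the claim.

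I do not anticipate a serious obstacle. The only delicate point is the discrete-versus-continuous distinction in the statement of Lemma~\ref{lem_NA}; this is handled by noting that the discrete analogue holds by the same proof, and in fact the negative association of the multinomial distribution is one of the canonical examples established in Joag-Dev and Proschan.
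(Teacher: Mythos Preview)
Your proposal is correct and follows exactly the same route as the paper: introduce independent Poissons $Y_i\sim\mathsf{Poi}(np_i)$, note their log-concavity, and apply Lemma~\ref{lem_NA} to the conditional distribution given $\sum_i Y_i=n$, which is $\mathsf{Multinomial}(n,P)$. Your write-up is in fact slightly more detailed than the paper's, since you spell out the log-concavity check and flag the discrete-versus-continuous point explicitly.
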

\begin{proof}
  Consider the Poissonized model $Y_i\sim \mathsf{Poi}(np_i), 1\le i\le S$ with all $Y_i$ independent, it is straightforward to verify that each $Y_i$ possesses a log-concave distribution. Then conditioning on $\sum_{i=1}^S Y_i=n$, we know that $(Y_1,Y_2,\cdots,Y_S)|(\sum_{i=1}^S Y_i=n) \sim \mathsf{multi}(n;P)$, hence Lemma \ref{lem_NA} yields the desired result.
\end{proof}

The next lemma gives bounds on the digamma functions $\psi(z) = \frac{\Gamma'(z)}{\Gamma(z)}$. 

\begin{lemma} \label{lemma.digammafunctionbound} \cite[Lemma 1.7]{batir2008inequalities}
The digamma function $\psi(z)$ is the only solution of the functional equation $F(x+1) = F(x) + \frac{1}{x}$ that is monotone, strictly concave on $\mathbb{R}_+$ and satisfies $F(1) = -\gamma$, where $\gamma \approx 0.5772$ is the Euler--Mascheroni constant.

Let $x$ be a positive real number. Then, 
\begin{align}
\ln(x + 1/2) < \psi(x+1) \leq \ln(x+ e^{-\gamma}).
\end{align}
If $x\geq 1$, then
\begin{align}
\ln(x+1/2) < \psi(x+1) \leq \ln(x + e^{1-\gamma} -1). 
\end{align}
\end{lemma}

The following lemma gives some tail bounds for Poisson or Binomial random variables.
\begin{lemma}\label{lem_chernoff}
\cite[Exercise 4.7]{mitzenmacher2005probability} If $X\sim \mathsf{Poi}(\lambda)$ or $X\sim \mathsf{B}(n,\frac{\lambda}{n})$, then for any $\delta>0$, we have
\begin{align}
\bP(X \geq (1+\delta) \lambda) & \leq \left( \frac{e^\delta}{(1+\delta)^{1+\delta}} \right)^\lambda, \\
\bP(X \leq (1-\delta) \lambda) & \leq  \left( \frac{e^{-\delta}}{(1-\delta)^{1-\delta}} \right)^\lambda\leq  e^{-\delta^2\lambda/2}.
\end{align}
\end{lemma}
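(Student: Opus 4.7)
The plan is to establish both the upper-tail and lower-tail bounds by the classical Chernoff method: applying Markov's inequality to the exponentiated variable $e^{tX}$ and optimizing over $t$. The key input is that the moment generating functions of $\mathsf{Poi}(\lambda)$ and $\mathsf{B}(n,\lambda/n)$ satisfy a uniform inequality, so that both cases can be handled in a single pass.

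First, I would compute $\bE e^{tX} = e^{\lambda(e^t-1)}$ when $X \sim \mathsf{Poi}(\lambda)$, which is immediate from the Poisson power series. When $X \sim \mathsf{B}(n,\lambda/n)$, I would use independence to write $\bE e^{tX} = \left(1 + \tfrac{\lambda}{n}(e^t-1)\right)^n$ and then apply the elementary inequality $1+y \leq e^y$ with $y = \tfrac{\lambda}{n}(e^t-1)$, which yields $\bE e^{tX} \leq e^{\lambda(e^t-1)}$. Thus both cases are dominated by the Poisson MGF, and I only need to carry out one optimization.

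For the upper tail, for any $t>0$ Markov's inequality gives $\bP(X \geq (1+\delta)\lambda) \leq e^{-t(1+\delta)\lambda}\,e^{\lambda(e^t-1)}$. I would choose $t = \ln(1+\delta) > 0$, which makes $e^t-1 = \delta$ and collapses the exponent to $\lambda(\delta - (1+\delta)\ln(1+\delta))$, delivering the first displayed bound. For the lower tail, I would repeat the argument with $t<0$, using $\bP(X \leq (1-\delta)\lambda) \leq e^{t(1-\delta)\lambda}\,\bE e^{-tX}$, the same Poisson-domination of the MGF (now with $-t$ in place of $t$, which still satisfies $1+y \leq e^y$ since $-\tfrac{\lambda}{n}(1-e^{-t})$ can be any real), and the optimal choice $e^{-t} = 1-\delta$, i.e.\ $t = -\ln(1-\delta)$, to arrive at the first form of the lower-tail bound.

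The remaining task is to derive the sub-Gaussian form $\left(\tfrac{e^{-\delta}}{(1-\delta)^{1-\delta}}\right)^\lambda \leq e^{-\delta^2\lambda/2}$. Taking logarithms, this reduces to proving $\varphi(\delta) \triangleq -\delta - (1-\delta)\ln(1-\delta) + \delta^2/2 \leq 0$ for $\delta \in (0,1)$. I would verify this by checking $\varphi(0) = 0$ and then showing $\varphi'(\delta) = \ln(1-\delta) + \delta \leq 0$, which itself follows from $\varphi'(0) = 0$ and $\varphi''(\delta) = -\tfrac{1}{1-\delta} + 1 \leq 0$ on $[0,1)$. The main obstacle is minor and essentially bookkeeping: ensuring that the optimization of $t$ lies in the allowed range (strictly positive for the upper tail and strictly negative for the lower tail, which is automatic from $\delta > 0$ and $0 < \delta < 1$) and that the Poisson-MGF domination applies to the binomial case in both directions of $t$; once these are in place, the result follows by direct substitution.
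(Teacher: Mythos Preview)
Your argument is correct. The paper does not supply its own proof of this lemma; it is stated as a quotation from \cite[Exercise 4.7]{mitzenmacher2005probability} and used as a black box, so there is nothing to compare against. Your Chernoff-method derivation is the standard one and is exactly what the cited exercise has in mind: bound the binomial MGF by the Poisson MGF via $1+y\le e^y$, optimize $t$, and then verify the sub-Gaussian simplification by the two-derivative check of $\varphi(\delta)=-\delta-(1-\delta)\ln(1-\delta)+\delta^2/2$. One minor expository slip: you say ``repeat the argument with $t<0$'' for the lower tail but then take $t=-\ln(1-\delta)>0$ and work with $\bE e^{-tX}$; the computation is right, just keep the sign convention consistent when you write it up.
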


To establish the upper bound of the variance obtained by the plug-in estimator $F_\alpha(P_n)$, we split $p$ into two different regimes $p\le 1/n$ or $p>1/n$, and the following lemmas give the corresponding variance bounds.
\begin{lemma}\label{lem_variance_p_small}
  For $nX\sim\mathsf{B}(n,p),p\le 1/n$, we have
  \begin{align}
    \mathsf{Var}(X^\alpha) \le  \frac{2}{n^{2\alpha}}\wedge \frac{2p}{n^{2\alpha-1}}  \quad 0<\alpha<1 .
  \end{align}
\end{lemma}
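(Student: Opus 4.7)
\textbf{Proof plan for Lemma \ref{lem_variance_p_small}.}

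The plan is to discard the variance formula in favor of the crude second-moment upper bound and then exploit the hypothesis $np \le 1$ to obtain both branches of the minimum simultaneously. Writing $Y = nX \sim \mathsf{B}(n,p)$, the first step is
\[
\mathsf{Var}(X^\alpha) \;\le\; \bE[X^{2\alpha}] \;=\; \frac{1}{n^{2\alpha}}\bE[Y^{2\alpha}],
\]
so the whole task reduces to bounding $\bE[Y^{2\alpha}]$ by $2 \wedge 2np$.

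Next I would use the crucial fact that $Y$ is integer-valued together with $0 < 2\alpha < 2$ to replace the non-integer exponent by an integer one: for every nonnegative integer $k$ one has $k^{2\alpha} \le k^{2}$ (with $0^{2\alpha}=0$ interpreted in the usual way, and $k^{2\alpha}\le k^{2}$ for $k\ge 1$ since $2\alpha<2$). Taking expectations,
\[
\bE[Y^{2\alpha}] \;\le\; \bE[Y^{2}] \;=\; np(1-p) + (np)^{2} \;\le\; np + (np)^{2}.
\]
Now I would apply the hypothesis $np \le 1$ in two different ways. On one hand, $np+(np)^2 \le 1+1 = 2$, which yields the bound $2/n^{2\alpha}$. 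On the other hand, $np + (np)^2 \le np + np = 2np$, which yields $\mathsf{Var}(X^\alpha) \le 2np/n^{2\alpha} = 2p/n^{2\alpha-1}$. Taking the minimum of the two produces exactly the claimed estimate.

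There is no real obstacle here: the only subtlety is recognizing that one should not try to bound $\bE[Y^{2\alpha}]$ by interpolation between $\bE[Y]$ and $\bE[Y^{2}]$ via Jensen (which fails cleanly because $2\alpha$ straddles $1$ depending on whether $\alpha\lessgtr 1/2$), and instead use the pointwise dominance $Y^{2\alpha} \le Y^{2}$ that holds uniformly on the integer lattice. Once this observation is in place the rest is a one-line consequence of $np\le 1$.
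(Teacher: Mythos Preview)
Your proposal is correct and is essentially identical to the paper's own proof: bound the variance by the second moment, use the pointwise integer-lattice inequality $(nX)^{2\alpha}\le (nX)^2$, compute $\bE[(nX)^2]=(np)^2+np(1-p)$, and finish with $np\le 1$ to get both $2$ and $2np$. There is nothing to add.
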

\begin{lemma}\label{lem_variance_p_big}
  For $nX\sim\mathsf{B}(n,p),p \geq  1/n,0<\alpha<1$, we have
  \begin{align}
\mathsf{Var}(X^\alpha) & \le  \frac{10p^{2\alpha-1}}{n} + \frac{3}{2\alpha}\left(\frac{16\alpha}{en}\right)^{2\alpha} + \frac{2}{n^{2\alpha}} + \frac{1}{8\alpha^2}\left(\frac{8\alpha}{en}\right)^{2\alpha}.  
\end{align}
\end{lemma}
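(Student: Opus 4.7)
The plan is to bound $\mathsf{Var}(X^\alpha)\le \bE[(X^\alpha-p^\alpha)^2]$ using the fact that the variance is the minimum mean-squared deviation over all constants, and then to analyze the right-hand side by splitting on an event that controls the size of $Y=nX$ relative to its mean $np$. The non-smoothness of $x\mapsto x^\alpha$ at the origin forces a two-regime treatment: a ``good'' regime where $X$ is bounded away from $0$ so that a derivative estimate is available, and a ``bad'' regime handled by tail probability estimates from Lemma~\ref{lem_chernoff}.

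Set the good event $A=\{Y\ge np/2\}$. On $A$ we have $X\ge p/2$, and since $x\mapsto x^\alpha$ has positive but decreasing derivative $\alpha x^{\alpha-1}$ on $(0,\infty)$ for $0<\alpha<1$, the mean value theorem gives $|X^\alpha-p^\alpha|\le \alpha(p/2)^{\alpha-1}|X-p|$. Squaring, taking expectation, and using $\mathsf{Var}(X)=p(1-p)/n\le p/n$ yields a contribution bounded by $\alpha^2\,2^{2-2\alpha}\,p^{2\alpha-2}\cdot p/n \le 4\alpha^2 p^{2\alpha-1}/n$, which fits (with room to spare, since $4\alpha^2<4<10$) inside the target term $\tfrac{10 p^{2\alpha-1}}{n}$.

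On $A^c$ I would further decompose into $\{Y=0\}$ and $\{1\le Y<np/2\}$ to peel off the two distinct Chernoff-type terms in the bound. In each sub-case the crude pointwise estimate $(X^\alpha-p^\alpha)^2\le p^{2\alpha}$ (valid because $X^\alpha<p^\alpha$ on $A^c$) is combined with a tail probability of the form $e^{-np/c}$: Lemma~\ref{lem_chernoff} with $\delta=1/2$ gives $\bP(Y<np/2)\le e^{-np/8}$, while $\bP(Y=0)=(1-p)^n\le e^{-np}$ for the singleton event. Each resulting product $p^{2\alpha}e^{-np/c}$ is maximized over $p>0$ by elementary calculus: the maximum is attained at $p^\star=2\alpha c/n$ with value $(2\alpha c/(en))^{2\alpha}$, which matches the shape of the terms $(16\alpha/(en))^{2\alpha}$ (corresponding to $c=8$) and $(8\alpha/(en))^{2\alpha}$ (corresponding to $c=4$) in the target bound. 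For the sub-regime where $Y\ge 1$, the floor $X^{2\alpha}\ge n^{-2\alpha}$ can be used to sharpen the crude $p^{2\alpha}$ estimate and to contribute the $\tfrac{2}{n^{2\alpha}}$ term.

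The main technical obstacle is tracking the precise numerical constants (in particular the factors $\tfrac{3}{2\alpha}$ and $\tfrac{1}{8\alpha^2}$, which blow up as $\alpha\downarrow 0$) through the multi-way case split, since each cross-term between the MVT regime and the tail regime must be dominated by exactly one of the four summands in the statement without accruing an additional $p$-dependence that does not appear in the final bound. Balancing the threshold separating the good and bad events (here $np/2$) is what forces the particular Chernoff constants $c=8$ and $c=4$; deviating from this threshold would require re-tuning the constants in the target inequality. Once this bookkeeping is in place, the final bound follows by summing the four estimates and using monotonicity of $4\alpha^2$ in $\alpha$ to absorb the MVT contribution into $\tfrac{10p^{2\alpha-1}}{n}$.
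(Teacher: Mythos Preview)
Your core approach is correct and in fact yields a sharper inequality than the one stated: the split on $A=\{X\ge p/2\}$, the mean-value bound $|X^\alpha-p^\alpha|\le\alpha(p/2)^{\alpha-1}|X-p|$ on $A$, and the crude estimate $(X^\alpha-p^\alpha)^2\le p^{2\alpha}$ on $A^c$ together with $\bP(A^c)\le e^{-np/8}$ already give
\[
\mathsf{Var}(X^\alpha)\ \le\ \bE[(X^\alpha-p^\alpha)^2]\ \le\ 4\alpha^2\,\frac{p^{2\alpha-1}}{n}\;+\;\left(\frac{16\alpha}{en}\right)^{2\alpha},
\]
which is dominated termwise by the four-term bound in the lemma (since $4\alpha^2\le 10$ and $1\le 3/(2\alpha)$, and the remaining two terms are nonnegative). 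So you are done after the first split; no further decomposition of $A^c$ is needed.

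The paper takes a genuinely different route. It writes $\mathsf{Var}(f(X))\le |\bE f^2(X)-f^2(p)|+|\bE f(X)-f(p)|^2+2f(p)\,|\bE f(X)-f(p)|$ and controls the two bias-type quantities $|\bE f^2(X)-f^2(p)|$ and $|\bE f(X)-f(p)|$ via second-order Taylor expansions of $x^{2\alpha}$ and $x^\alpha$ around $p$, again splitting on $\{X\ge p/2\}$. The four terms in the statement are artifacts of this algebraic decomposition: $10p^{2\alpha-1}/n$ collects the ``good-event'' pieces of $|\bE R_1|$ and $2f(p)|\bE R_2|$; $\tfrac{3}{2\alpha}(16\alpha/(en))^{2\alpha}$ collects their tail pieces with exponent $e^{-np/8}$; $2/n^{2\alpha}$ comes from $2(p^{\alpha-1}/n)^2$ in $|\bE R_2|^2$ using $p>1/n$; and $\tfrac{1}{8\alpha^2}(8\alpha/(en))^{2\alpha}$ comes from \emph{squaring} the tail piece $e^{-np/8}$ (hence $c=4$), not from $\bP(Y=0)$ as you guessed. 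Your reverse-engineering of the constants $c=4$, $\tfrac{3}{2\alpha}$, $\tfrac{1}{8\alpha^2}$, and the $2/n^{2\alpha}$ term is therefore off in its attributions, but this is immaterial: your more elementary argument proves the lemma with room to spare, while the paper's decomposition is more elaborate and gives a weaker bound.
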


\section{Proofs of main lemmas}
\subsection{Proof of Lemma~\ref{lemma.biasgeneralf}}
We compute the first moment of $F(P_n)$.
\begin{equation}
\bE F(P_n) = \sum_{j = 0}^n f \left( \frac{j}{n} \right) \bE h_j,
\end{equation}
and
\begin{align}
\bE h_j & = \bE \sum_{i = 1}^S \mathbbm{1}(X_i = j) \\
&  = \sum_{i = 1}^S \bP(X_i = j) \\
& = \sum_{i = 1}^S \binom{n}{j} p_i^j(1-p_i)^{n-j}.
\end{align}
Thus, we have
\begin{align}
\bE F(P_n) & = \sum_{j = 0}^n f \left( \frac{j}{n} \right)\sum_{i = 1}^S \binom{n}{j} p_i^j(1-p_i)^{n-j} \\
& = \sum_{j = 0}^n \sum_{i = 1}^S f \left( \frac{j}{n} \right) \binom{n}{j} p_i^j(1-p_i)^{n-j}.
\end{align}

The bias of $F(P_n)$ is
\begin{align}
& \mathsf{Bias}(F(P_n)) \nonumber \\
& \quad = \bE F(P_n) - F(P)\\
& \quad =  \sum_{i = 1}^S \left( \sum_{j = 0}^n f \left( \frac{j}{n} \right) \binom{n}{j} p_i^j(1-p_i)^{n-j} - f(p_i) \right).
\end{align}

\subsection{Proof of Lemma~\ref{lemma.dtmoduluscomputation}}

We first compute the second-order modulus. Fix $t, 0< t\leq 1/2$. Defining $M \triangleq \frac{u+v}{2}$, then the computation of second-order modulus is equivalent to maximization of $|f(M-t) - 2 f(M) + f(M+t)|$ over interval $M \in [t, 1-t]$, since all the functions we consider are strictly convex or concave over $[0,1]$.

For $f(x) = x^\alpha,0<\alpha<1$, $f(x)$ is strictly concave on $[0,1]$. It follows from Jensen's inequality that
\begin{equation}
g(M) = (M-t)^\alpha - 2M^\alpha + (M+t)^\alpha \leq 0,
\end{equation}
and it suffices to minimize this function of $M$ in order to obtain the modulus. Taking derivative of $g(M)$, we have
\begin{equation}
g'(M) = \alpha \left( (M-t)^{\alpha-1} - 2 M^{\alpha-1} + (M+t)^{\alpha-1} \right) \geq 0,
\end{equation}
since $x^{\alpha-1}$ is a convex function on $[t,1-t]$. It implies that the function $g(M)$ is non-decreasing, and the minimum of $g(M)$ over $M\in [t,1-t]$ is attained at $M = t$, and the minimum value is $g(t) = (2^\alpha-2) t^\alpha$. Hence, the corresponding second-order modulus is $|2-2^\alpha| t^\alpha$.

Analogous procedures computes the second-order modulus for $x^\alpha,1<\alpha<2$ and $-x\ln x$.

Now we consider the computation of Ditzian--Totik second-order modulus. Fix $t, 0<t\leq 1$. Again denote $M \triangleq \frac{u+v}{2} \in [0,1]$. Then the optimization is over the regime $|u-v| \leq 2t\varphi(M) = 2t \sqrt{M(1-M)}$. Equivalently, it is the interval $[M-t\sqrt{M(1-M)},M+t\sqrt{M(1-M)}] \cap [0,1]$.

Since the function $f(x) = -x\ln x$ is strictly convex on $[0,1]$, the maximum of $\left|f(u)-2f\left(\frac{u+v}{2}\right)+f(v)\right|$ is definitely attained when $u$ and $v$ take the boundary values of the feasible interval $[M-t\sqrt{M(1-M)},M+t\sqrt{M(1-M)}] \cap [0,1]$.

Define $\Delta \triangleq t\sqrt{\frac{1-M}{M}}$. The feasible interval can be equivalently written as $[M-\Delta M, M+\Delta M] \cap [0,1]$. We have
\begin{equation}
M - t\sqrt{M(1-M)} \geq 0 \Leftrightarrow M \geq \frac{t^2}{1+t^2},
\end{equation}
as well as
\begin{equation}
M + t\sqrt{M(1-M)} \leq 1 \Leftrightarrow M \leq \frac{1}{1+t^2}.
\end{equation}

Hence, it is equivalent to maximize over three regimes:
\begin{enumerate}
\item Regime A:

$u = 0, v = 2M, 0\leq M \leq \frac{t^2}{1+t^2}$.
\item Regime B:

$u = M-\Delta M, v = M + \Delta M, M \in \left[ \frac{t^2}{1+t^2}, \frac{1}{1+t^2} \right]$
\item Regime C:

$u = 2M-1, v = 1, 1 \geq M \geq \frac{1}{1+t^2}$.
\end{enumerate}

Over regime A, we have
\begin{equation}
\left|f(u)-2f\left(\frac{u+v}{2}\right)+f(v)\right| = 2M \ln 2.
\end{equation}
Maximizing over $0\leq M \leq \frac{t^2}{1+t^2}$, the maximum value is $\frac{t^2 \ln 4}{1+t^2}$, attained at $M = \frac{t^2}{1+t^2}$.

Over regime C, we have
\begin{equation}
\left|f(u)-2f\left(\frac{u+v}{2}\right)+f(v)\right| = \left | 2M \ln M - (2M-1) \ln (2M-1) \right|.
\end{equation}
Maximizing over $\frac{1}{1+t^2} \leq M \leq 1$, the maximum is attained at $M = \frac{1}{1+t^2}$, and the maximum value is no more than $\frac{t^2 \ln 4}{1+t^2}$.

Now we consider regime B. Since $M  \in \left[ \frac{t^2}{1+t^2}, \frac{1}{1+t^2} \right]$ in regime B, we know $\Delta = t\sqrt{\frac{1-M}{M}} \in [t^2,1]$. We have

\begin{align}
& \left|f(u)-2f\left(\frac{u+v}{2}\right)+f(v)\right| \nonumber \\
& \quad = M\left | (1-\Delta)\ln(1-\Delta) + (1+\Delta)\ln(1+\Delta) \right|.
\end{align}
Since $\Delta = t\sqrt{\frac{1-M}{M}}$ implies $M = \frac{t^2}{t^2 + \Delta^2}$, we can recast the corresponding optimization problem as maximizing
\begin{equation}
\frac{t^2}{\Delta^2 + t^2} \left | (1-\Delta)\ln(1-\Delta) + (1+\Delta)\ln(1+\Delta) \right|
\end{equation}
subject to constraint $\Delta \in [t^2, 1]$. One can show that the maximum is always attained at $\Delta = 1$, with the maximum value $\frac{t^2 \ln 4}{1+t^2}$.

To sum up, we conclude that when $0< t\leq 1$, the maximum of the optimization problem defining $\omega^2_\varphi(-x\ln x,t)$ is always attained at $u = 0, v = \frac{2t^2}{1+t^2}$, with the resulting modulus $\frac{t^2 \ln 4}{1+t^2}$.

Analogous computation can also be done for function $x^\alpha,0<\alpha<1$. For the function $x^\alpha,1<\alpha<2$, it is hard to compute the modulus exactly, but it is easy to show that it is of order $t^2$.

\subsection{Proof of Lemma~\ref{lemma.biasalphaonetwo}}
The bias of $F_\alpha(P_n), 1<\alpha<2$ can be expressed as follows:
\begin{align}
| \mathsf{Bias}(F_\alpha(P_n))| & = | \bE \sum_{i = 1}^S P_n^\alpha(i) - p_i^\alpha | \\
& \leq \left| \bE \sum_{i: p_i \leq \frac{1}{n}} P_n^\alpha(i) - p_i^\alpha \right| \nonumber \\
& \quad  + \left| \bE \sum_{i: p_i > \frac{1}{n}} P_n^\alpha(i) - p_i^\alpha \right| \\
& \triangleq B_1 + B_2.
\end{align}

Now we bound $B_1$ and $B_2$ separately. It follows from Jensen's inequality that for any $i$,
\begin{equation}
\bE P_n^\alpha(i) \geq p_i^\alpha, \quad 1\leq i \leq S.
\end{equation}

Hence, we have
\begin{align}
B_1 & = \bE \sum_{i: p_i \leq \frac{1}{n}} P_n^\alpha(i) - p_i^\alpha \\
& = \bE \sum_{i: p_i \leq \frac{1}{n}} P_n^\alpha(i) -  \frac{(np_i)^\alpha}{n^\alpha}  \\
& \leq \bE \sum_{i: p_i \leq \frac{1}{n}} P_n^\alpha(i) -  \frac{(np_i)^2}{n^\alpha}  \\
& = \bE \sum_{i: p_i \leq \frac{1}{n}} \frac{(n P_n(i))^\alpha}{n^\alpha}-  \frac{(np_i)^2}{n^\alpha}  \\
& \leq \bE \sum_{i: p_i \leq \frac{1}{n}} \frac{(n P_n(i))^2}{n^\alpha} -  \frac{(np_i)^2}{n^\alpha} \\
& = \sum_{i: p_i \leq \frac{1}{n}}  \frac{\bE(n P_n(i))^2}{n^\alpha}-  \frac{(np_i)^2}{n^\alpha}  \\
& = \sum_{i: p_i \leq \frac{1}{n}} \frac{(n p_i)^2 + np_i(1-p_i)}{n^\alpha}-  \frac{(np_i)^2}{n^\alpha} \\
& \leq \sum_{i: p_i \leq \frac{1}{n}} \frac{np_i}{n^\alpha}\\
& = \frac{1}{n^{\alpha-1}},
\end{align}
where we have used the fact that $nP_n(i) \geq 1$ for any $P_n(i)\neq 0$.

Regarding $B_2$, we have the following bounds:
\begin{align}
B_2 & = \left| \bE \sum_{i: p_i >\frac{1}{n}} P_n^\alpha(i) - p_i^\alpha \right| \\
& \leq \sum_{i: p_i > \frac{1}{n}} \bE | P_n^\alpha(i) - p_i^\alpha | \\
& \leq \sum_{i: p_i > \frac{1}{n}}  \frac{3}{2}|2-2^\alpha| \left( \frac{p_i(1-p_i)}{n} \right)^{\alpha/2} \\
& \leq \frac{3|2-2^\alpha|}{2n^{\alpha/2}} \sum_{i: p_i > \frac{1}{n}} p_i^{\alpha/2},
\end{align}
where the second inequality follows from the pointwise estimate in Lemma~\ref{lemma.bernsteinerror}.

Denoting $| \{ i: p_i > \frac{1}{n} \} | = K \leq n$, we know
\begin{equation}
\sum_{i: p_i > \frac{1}{n}} p_i^{\alpha/2} \leq K^{1-\alpha/2} \leq  n^{1-\alpha/2},
\end{equation}
which implies that
\begin{equation}
B_2 \leq \frac{3|2-2^\alpha|}{2n^{\alpha/2}}n^{1-\alpha/2} = \frac{3|2-2^\alpha|}{2n^{\alpha-1}}.
\end{equation}

Therefore, we have
\begin{align}
| \mathsf{Bias}(F_\alpha(P_n))| & \leq  B_1 + B_2 \\
& \leq \frac{1}{n^{\alpha-1}}+ \frac{3|2-2^\alpha|}{2n^{\alpha-1}} \\
& \leq \frac{3|2-2^\alpha| + 2}{2n^{\alpha-1}} \\
& \leq \frac{4}{n^{\alpha-1}}.
\end{align}

\subsection{Proof of Lemma~\ref{lemma.hbbiasgeneral}}

We apply Lemma~\ref{lemma_mod}. Note that $h_2 = \frac{\sqrt{n}}{n+Sa}$. In order to ensure that $h_2 \leq 1/2$, it suffices to take $n\geq 4$. Also, since $n\geq Sa$, for any $i, 1\leq i\leq S$,
  \begin{align}
    \frac{|1-p_iS|a}{n+Sa} \le \frac{Sa}{n+Sa} \le \frac{1}{2}.
  \end{align}  

Meanwhile, since the function $\sum_{i=1}^S \frac{|1-p_i S|a}{n+Sa}$ is a convex function of $P = (p_1,p_2,\ldots,p_S)$, it attains its maximum at one of the corner points of the simplex. Hence,
\begin{align}\label{eq:convexcornerpoints}
\sum_{i=1}^S \frac{|1-p_i S|a}{n+Sa} & \leq \frac{|1-S|a}{n+Sa} + (S-1) \cdot \frac{a}{n+Sa} \\
& = \frac{2(S-1)a}{n+Sa}. 
\end{align}

  In light of Lemma \ref{lemma_mod}, we have
  \begin{align}
 &  | \bE_P H(\hat{P}_B) - H(P)| \nonumber \\
 & \quad\le \sum_{i=1}^S \left(\omega^1\left(f,\frac{|1-p_iS|a}{n+Sa};p_i\right) + \frac{5n\ln 2}{(n+Sa)^2}\right)\\
    & \quad \stackrel{(a)}{\le}  -\left(\sum_{i=1}^S\frac{|1-p_iS|a}{n+Sa}\right)\ln\left(\frac{1}{S}\sum_{i=1}^S\frac{|1-p_iS|a}{n+Sa}\right) \nonumber \\
    & \qquad \quad + \frac{5nS\ln2}{(n+Sa)^2}\\
    & \quad \stackrel{(b)}{\le} \frac{2Sa}{n+Sa}\ln\left(\frac{n+Sa}{2a}\right) + \frac{5nS\ln2}{(n+Sa)^2},
  \end{align}
where $(a)$ follows from the fact that if $|x-y|\leq 1/2, x,y\in [0,1]$, then $|x\ln x - y\ln y|\leq -|x-y|\ln |x-y|$~\cite[Thm. 17.3.3]{Cover--Thomas2006} and Jensen's inequality. Step~$(b)$ follows from the fact that the function $-y\ln y$ is monotonically increasing on the interval $[0,e^{-1}]$, and 
\begin{align}
\frac{1}{S} \sum_{i = 1}^S \frac{|1-p_i S|a }{n+Sa} & \leq \frac{2a}{n+Sa} \\
& \leq \frac{2a}{n} \\
& \leq e^{-1},
\end{align}
where in the last step we used the assumption that $n \geq 2ea$.

\subsection{Proof of Lemma~\ref{lemma.hbbiasentropy}}

We have
\begin{align}
H(\hat{P}_B) & = \sum_{i = 1}^S-\hat{p}_{B,i} \ln \hat{p}_{B,i} \\
& = H(P_B) + \sum_{i =1}^S (p_{B,i} - \hat{p}_{B,i}) \ln p_{B,i} - \sum_{i = 1}^S \hat{p}_{B,i} \ln \frac{\hat{p}_{B,i}}{p_{B,i}}.
\end{align}

Taking expectations on both sides, we have
\begin{equation}
\bE H(\hat{P}_B) - H(P) = H(P_B) - H(P) - \bE D(\hat{P}_B \| P_B),
\end{equation}
where $D(P\|Q) = \sum_{i = 1}^S p_i \ln \frac{p_i}{q_i}$ is the KL divergence between distributions $P$ and $Q$. Since $H(P_B) = H \left( \frac{n}{n+Sa} P + \frac{Sa}{n+Sa} U_S \right)$, where $U_S$ denotes the uniform distribution with alphabet size $S$, it follows from Jensen's inequality and the concavity of the entropy function that
\begin{align}
H(P_B) & \geq \frac{n}{n+Sa}H(P) + \frac{Sa}{n+Sa} H(U_S) \\
& \geq H(P).
\end{align}
Hence,
\begin{align}
\left| \bE H(\hat{P}_B) - H(P) \right| & \leq \max\{H(P_B) - H(P), \bE D(\hat{P}_B \| P_B)\}. 
\end{align}

In order to analyze the bias, it suffices to analyze the two terms separately. We first analyze $\bE D(\hat{P}_B \| P_B)$.

It follows from Jensen's inequality that
\begin{equation}
D(P \| Q) = \sum_{i = 1}^S p_i \ln \frac{p_i}{q_i} \leq \ln \left( \sum_{i = 1}^S \frac{p_i^2}{q_i} \right),
\end{equation}
whose derivation here follows from Tsybakov~\cite[Lemma 2.7]{Tsybakov2008}.

By Jensen's inequality, we have
\begin{equation}
\bE D(\hat{P}_B \| P_B) \leq \bE \ln \left( \sum_{i = 1}^S \frac{\hat{p}_{B,i}^2}{p_{B,i}} \right) \leq \ln \left( \sum_{i = 1}^S \frac{\bE \hat{p}_{B,i}^2 }{p_{B,i}} \right).
\end{equation}

We also have
\begin{equation}
\sum_{i = 1}^S \frac{p_i^2}{q_i} = 1 + \sum_{i =1}^S \frac{(p_i - q_i)^2}{q_i}, 
\end{equation}
and that
\begin{equation}
\bE (\hat{p}_{B,i} - p_{B,i})^2 = \frac{n^2}{(n+Sa)^2} \bE (\hat{p}_i - p_i)^2 = \frac{n p_i(1-p_i)}{(n+Sa)^2}.
\end{equation}

Hence,
\begin{align}
\bE D(\hat{P}_B \| P_B) & \leq \ln \left( 1+ \sum_{i = 1}^S \frac{n p_i(1-p_i)}{(n+Sa)(np_i+a)} \right) \\
& = \ln \left( 1+ \sum_{i = 1}^S \frac{n p_i(1-p_i)}{(n+Sa)np_i} \frac{np_i}{np_i+a} \right) \\
& \leq \ln \left( 1+ \sum_{i = 1}^S \frac{1-p_i}{(n+Sa)}  \right),
\end{align}
which implies that
\begin{equation}
\bE D(\hat{P}_B \| P_B)  \leq \ln \left( 1+\frac{S-1}{n+Sa} \right).
\end{equation}

Now we consider the deterministic gap $H(P_B) - H(P)$. It follows from a refinement result of Cover and Thomas~\cite[Thm. 17.3.3]{Cover--Thomas2006} that when $| p_{B,i} - p_i | \leq 1/2$ for all $i$, we have
\begin{align}
|H(P_B) - H(P)| & \leq - \| P_B - P \|_1 \ln \frac{\| P_B - P \|_1}{S} \\
& = S \cdot f\left( \frac{\| P_B - P \|_1}{S} \right),
\end{align}
where $f(x) = -x\ln x, x\in [0,1]$. Note that the condition $n\ge Sa$ ensures that $| p_{B,i} - p_i | \leq 1/2$.

We have
\begin{align}
\frac{1}{S} \| P_B - P \|_1 & = \frac{1}{S} \sum_{i = 1}^S \frac{Sa}{n+Sa} |p_i - 1/S| \\
& = \frac{1}{S} \sum_{i = 1}^S \frac{|1-p_i S|a}{n+Sa} \\
& \leq \frac{2a}{n+Sa},
\end{align}
where the last step follows from~(\ref{eq:convexcornerpoints}). Since we have assumed $n \geq 2ea$, we have $\frac{2a}{n+Sa} \leq \frac{2a}{n} \leq e^{-1}$. Since the function $f(x) = -x\ln x$ is monotonically increasing on the interval $[0,e^{-1}]$, we know
\begin{align}
|H(P_B) - H(P)| & \leq \frac{2Sa}{n+Sa} \ln \frac{n+Sa}{2a}. 
\end{align}

\subsection{Proof of Lemma~\ref{cor.vargeneralf}}

In our case, apparently $F(P_n)$ is a function of $n$ independent random variables $\{Z_i\}_{1\leq i\leq n}$ taking values in $\mathcal{Z} = \{1,2,\ldots,S\}$. Changing one location of the sample would make some symbol with count $j$ to have count $j+1$, and another symbol with count $i$ to have count $i-1$. Then the total change in the functional estimator is
\begin{equation}
f\left(\frac{j+1}{n}\right) - f \left( \frac{j}{n}\right) - f\left(\frac{i}{n}\right) + f\left(\frac{i-1}{n}\right).
\end{equation}

If $f$ is monotone, then the total change would be upper bounded by $\max_{0\leq j<n} |f((j+1)/n) - f(j/n)|$. If $f$ is not monotone, the total change can be upper bounded by $2 \cdot\max_{0\leq j<n} |f((j+1)/n) - f(j/n)|$. Applying Lemma~\ref{lemma.es}, we have the desired bounds.

\subsection{Proof of Lemma~\ref{lemma.varianceboundfalphabetween01}}

In light of Lemma \ref{lem_variance_p_small} and \ref{lem_variance_p_big}, we have
\begin{align}
& \sum_{i=1}^S \mathsf{Var}(P_n(i)^\alpha) \nonumber \\
  &= \sum_{i: p_i\le 1/n} \mathsf{Var}(P_n(i)^\alpha) + \sum_{i: p_i> 1/n} \mathsf{Var}(P_n(i)^\alpha)\\
  &\le \sum_{i: p_i\le 1/n} \frac{2}{n^{2\alpha}}\wedge \frac{2p_i}{n^{2\alpha-1}} \nonumber \\
  & \quad \quad + \sum_{i: p_i>1/n}\Bigg(\frac{10p_i^{2\alpha-1}}{n} + \frac{3}{2\alpha}\left(\frac{16\alpha}{en}\right)^{2\alpha} \nonumber \\
  & \qquad \qquad + \frac{2}{n^{2\alpha}} + \frac{1}{8\alpha^2}\left(\frac{8\alpha}{en}\right)^{2\alpha}\Bigg).
\end{align}
We obtain the desired bounds after using the concavity of $x^{2\alpha-1}$ when $1/2\le \alpha<1$. 

Now we exploit the negative association property of all random variables $P_n(i),1\le i\le S$.  Corollary \ref{cor_NA_Multinomial} and the monotonically increasing property of $x^\alpha$ yield
\begin{align}
  \mathsf{Var}(F_\alpha(P_n)) & = \sum_{i=1}^S \mathsf{Var}(P_n(i)^\alpha) \nonumber \\
  & \quad + 2\sum_{1\le i<j\le S}\mathsf{Cov}(P_n(i)^\alpha,P_n(j)^\alpha) \\
  & \le \sum_{i=1}^S \mathsf{Var}(P_n(i)^\alpha),
\end{align}
which finishes the proof of Lemma~\ref{lemma.varianceboundfalphabetween01}.

\subsection{Proof of Lemma~\ref{lemma.varianceboundentropymle}}

The upper bound $(\ln n)^2/n$ follows from Lemma~\ref{cor.vargeneralf}. We apply the Efron--Stein inequality (Lemma~\ref{lemma.esnew}) to obtain the other bound. Denote the $n$ i.i.d. samples from distribution $P$ as $Z_1,Z_2,\ldots,Z_n \in \mathcal{Z}$. Denoting the MLE $H(P_n)$ as $\hat{H}(Z_1,Z_2,\ldots,Z_n)$, since it is invariant to any permutation of $\{Z_1,Z_2,\ldots,Z_n\}$, we know that the Efron--Stein inequality implies
\begin{equation}\label{eqn.entropyesnew}
\mathsf{Var}(H(P_n)) \leq \frac{n}{2} \bE \left( \hat{H}(Z_1,Z_2,\ldots,Z_n) - \hat{H}(Z_1',Z_2,\ldots,Z_n) \right)^2,
\end{equation}
where $Z_1'$ is an i.i.d. copy of $Z_1$.

Recall that
\begin{equation}
X_i = \sum_{j = 1}^n \mathbbm{1}(Z_j = i),\quad 1\leq i \leq S.
\end{equation}

For notional brevity, we denote the $S$-tuple $(X_1,X_2,\ldots,X_S)$ as $X_1^S$, and the $n$-tuple $(Z_1,Z_2,\ldots,Z_n)$ as $Z_1^n$. A specific realization of $(X_1,X_2,\ldots,X_S)$ is denoted by $x_1^S = (x_1,x_2,\ldots,x_S)$, and a specific realization of $(Z_1,Z_2,\ldots,Z_n)$ is denoted by $z_1^n = (z_1,z_2,\ldots,z_n)$.

In order to upper bound the right hand side of (\ref{eqn.entropyesnew}), we first condition on $\{X_1,X_2,\ldots,X_S\}$. In other words, we use
\begin{align}
& \bE \left( \hat{H}(Z_1,Z_2,\ldots,Z_n) - \hat{H}(Z_1',Z_2,\ldots,Z_n) \right)^2 \\
 & \quad =\mathbb{E}\left[ \mathbb{E}\left[ \left( \hat{H}(Z_1,Z_2,\ldots,Z_n) - \hat{H}(Z_1',Z_2,\ldots,Z_n) \right)^2 \Bigg | X_1^S \right] \right]. 
\end{align}

The following lemma calculates the conditional distribution of $Z_1$ conditioned on $(X_1,X_2,\ldots,X_S)$.
\begin{lemma}\label{lemma.conditionalpaninski}
The conditional distribution of $Z_1$ conditioned on $(X_1,X_2,\ldots,X_S)$ is given by the following discrete distribution:
\begin{equation}
(X_1/n,X_2/n,\ldots,X_S/n).
\end{equation}
\end{lemma}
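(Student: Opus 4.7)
The plan is to prove this simple distributional identity by direct calculation, using the multinomial probability mass function together with Bayes' rule. An alternative route via symmetry/exchangeability is also available, and I would mention it as a sanity check.

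First I would fix an arbitrary realization $x_1^S = (x_1,\ldots,x_S)$ of the multinomial counts (with $\sum_i x_i = n$) and an arbitrary symbol $k \in \{1,2,\ldots,S\}$. By the definition of conditional probability,
\begin{equation}
\bP(Z_1 = k \mid X_1^S = x_1^S) = \frac{\bP(Z_1 = k, X_1^S = x_1^S)}{\bP(X_1^S = x_1^S)}.
\end{equation}
The denominator is the standard multinomial mass $\frac{n!}{x_1!\cdots x_S!}\prod_{i=1}^S p_i^{x_i}$. For the numerator, I would note that conditional on $Z_1 = k$, the remaining $n-1$ samples $Z_2,\ldots,Z_n$ must produce counts $(x_1,\ldots,x_k-1,\ldots,x_S)$ (which forces the numerator to vanish when $x_k = 0$, consistent with the claim); this gives
\begin{equation}
\bP(Z_1 = k, X_1^S = x_1^S) = p_k \cdot \frac{(n-1)!}{x_1!\cdots (x_k-1)!\cdots x_S!}\, p_1^{x_1}\cdots p_k^{x_k - 1}\cdots p_S^{x_S}.
\end{equation}

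Next I would take the ratio. The factors $\prod_i p_i^{x_i}$ cancel (the extra $p_k$ from $Z_1 = k$ combines with $p_k^{x_k-1}$ to give $p_k^{x_k}$), and the factorial ratio simplifies as $\frac{(n-1)!}{n!} \cdot \frac{x_k!}{(x_k-1)!} = \frac{x_k}{n}$, yielding $\bP(Z_1 = k \mid X_1^S = x_1^S) = x_k/n$, exactly as asserted. Since this holds for every $k$, the conditional law of $Z_1$ given $X_1^S$ is the discrete distribution $(X_1/n,\ldots,X_S/n)$.

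There is no real obstacle here: the argument is essentially the multinomial analogue of the classical fact that a uniformly random draw from a labeled collection chooses label $k$ with probability proportional to its count. As a consistency check, I would observe that the same conclusion follows from exchangeability of $Z_1,\ldots,Z_n$: since $X_1^S$ is a symmetric functional of the sample, $\bP(Z_j = k \mid X_1^S)$ does not depend on $j$, while $\sum_{j=1}^n \mathbbm{1}(Z_j = k) = X_k$ identically, so summing over $j$ forces each conditional probability to equal $X_k/n$. Either route is short; I would present the direct calculation as the main proof and mention the exchangeability argument in a brief remark.
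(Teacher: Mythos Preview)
Your proposal is correct and follows essentially the same route as the paper: both compute $\bP(Z_1=k\mid X_1^S=x_1^S)$ directly via the multinomial mass function and cancel the product $\prod_i p_i^{x_i}$ and the factorial ratio to obtain $x_k/n$. The exchangeability remark you add is a nice extra sanity check not present in the paper, but the core argument is identical.
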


\begin{proof}
By definition of conditional distribution, for any $k, 1\leq k \leq S$, we have
\begin{align}
& \bP(Z_1=k| X_1^S = x_1^S) \nonumber \\
& \quad = \frac{\bP(Z_1 = k, X_1^S = x_1^S)}{\bP(X_1^S = x_1^S)} \\
& \quad = \frac{\bP(Z_1 = k)\bP(X_1^S = x_1^S|Z_1 = k)}{\bP(X_1^S = x_1^S)} \\
& \quad = \frac{p_k \binom{n-1}{x_1,x_2,\ldots,x_k-1,\ldots,x_S} p_k^{x_k-1} \prod_{i \neq k} p_i^{x_i}}{\binom{n}{x_1,x_2,\ldots,x_S} \prod_{1\leq i\leq S} p_i^{x_i}} \\
& \quad= \frac{\binom{n-1}{x_1,x_2,\ldots,x_k-1,\ldots,x_S}}{\binom{n}{x_1,x_2,\ldots,x_S}} \\
& \quad = \frac{x_k}{n},
\end{align}
where the multinomial coefficient $\binom{n}{x_1,x_2,\ldots,x_S}$ is defined as
\begin{equation}
\binom{n}{x_1,x_2,\ldots,x_S} = \frac{n!}{\prod_{i = 1}^S x_i!}.
\end{equation}
\end{proof}

Denoting $r(p) = -p\ln p$, we have $r(j/n) \triangleq\frac{-j}{n}\ln \frac{j}{n}$. We rewrite
\begin{equation}
\hat{H}(Z_1',Z_2,\ldots,Z_n)-\hat{H}(Z_1,Z_2,\ldots,Z_n)  = D_- + D_+,
\end{equation}
where
\begin{align}
D_- & = r\left( \frac{X_{Z_1}-1}{n} \right) - r\left(\frac{X_{Z_1}}{n}\right)\\
D_+ & = \begin{cases} r\left(\frac{X_{Z_1'+1}}{n} \right) - r\left(\frac{X_{Z_1'}}{n}\right) & Z_1 \neq Z_1'\\ r\left(\frac{X_{Z_1'}}{n} \right) - r\left(\frac{X_{Z_1'-1}}{n}\right) & Z_1 = Z_1' \end{cases}
\end{align}

Here, $D_-$ is the change in $\hat{H}$ that occurs when $Z_1$ is removed according to the distribution specified in Lemma~\ref{lemma.conditionalpaninski}, and $D_+$ is the change in $\hat{H}$ that occurs when $Z_1'$ is added back according to the true distribution $P$.

Now we compute $\bE [D_-^2|X_1^S]$ and $\bE [D_+^2| X_1^S]$. We have
\begin{align}
\bE[D_-^2|X_1^S] & = \sum_{1\leq i\leq S} \frac{X_i}{n} \left( r\left(\frac{X_i-1}{n}\right) - r\left(\frac{X_i}{n}\right) \right)^2,
\end{align}
and
\begin{align}
& \bE[D_+^2|X_1^S] \nonumber \\
 & \quad = \sum_{1\leq i\leq S} p_i  \frac{X_i}{n} \left(  r\left(\frac{X_i}{n}\right) -r\left(\frac{X_i-1}{n}\right) \right)^2 \nonumber \\
 & \qquad +\sum_{1\leq i\leq S} p_i\left(1-\frac{X_i}{n}\right) \left(  r\left(\frac{X_i + 1}{n} \right)-r\left(\frac{X_i}{n}\right) \right)^2 .
\end{align}

Note that we interpret $\frac{X_i}{n} \left(  r\left(\frac{X_i}{n}\right) -r\left(\frac{X_i-1}{n}\right) \right)^2$ as $0$ when $X_i = 0$. Taking expectations of $\bE [D_-^2|X_1^S]$ and $\bE [D_+^2| X_1^S]$ with respect to $X_1^S$, we have
\begin{align}
\bE [D_-^2] & = \sum_{1\leq i\leq S} \sum_{1\leq j\leq n} \frac{j}{n} \left( r\left(\frac{j-1}{n} \right) - r\left(\frac{j}{n}\right)\right)^2 \nonumber \\
& \quad \quad \times  \bP(\mathsf{B}(n,p_i) = j)
\end{align}
and
\begin{align}
& \bE[D_+^2] \nonumber \\
 & \quad = \sum_{1\leq i\leq S} \sum_{0\leq j\leq n} \Bigg( \frac{j}{n} \left(r\left(\frac{j-1}{n} \right)- r\left(\frac{j}{n}\right)\right)^2 \nonumber \\
 & \quad \quad + \left(1-\frac{j}{n}\right) \left(r\left(\frac{j}{n}\right) - r\left(\frac{j+1}{n}\right)\right)^2 \Bigg) \nonumber \\
 & \quad \quad \times p_i \bP(\mathsf{B}(n,p_i) = j).
\end{align}

After some algebra, one can show that $\bE[D_+^2] = \bE[D_-^2]$. It then follows from (\ref{eqn.entropyesnew}) that
\begin{align}
& \mathsf{Var}(H(P_n)) \nonumber \\
& \quad \leq \frac{n}{2} \cdot \bE \left( \hat{H}(Z_1,Z_2,\ldots,Z_n) - \hat{H}(Z_1',Z_2,\ldots,Z_n) \right)^2 \\
& \quad= \frac{n}{2} \cdot \bE \left( D_- + D_+ \right)^2 \\
& \quad\leq n \cdot \bE (D_-^2 + D_+^2) \\
& \quad\leq 2n \bE D_-^2 \\
& \quad = 2n \cdot \sum_{1\leq i\leq S} \bE P_n(i) \left( r(P_n(i)) - r(P_n(i)-\frac{1}{n}) \right)^2 \label{eqn.espaninskifurther}
\end{align}

The proof above is an elaborate version of that in \cite[App. B.3]{Paninski2003}. Now we proceed to obtain non-asymptotic upper bounds of (\ref{eqn.espaninskifurther}). For $x\geq 1/n$, it follows from Taylor expansion with integral form residue that
\begin{align}
(x-\frac{1}{n}) \ln (x-\frac{1}{n}) & = x\ln x + (\ln x + 1) (-\frac{1}{n}) \nonumber \\
& \quad + \int_x^{x-\frac{1}{n}} (x-\frac{1}{n}-u) \frac{1}{u} du.
\end{align}
Then, we have
\begin{align}
& \left|r(P_n(i)) - r\left(P_n(i) -\frac{1}{n}\right)\right| \nonumber \\
& \quad\leq \frac{|\ln P_n(i) + 1|}{n} + \left| \int_{P_n(i)}^{P_n(i)-\frac{1}{n}} \frac{P_n(i)-\frac{1}{n}}{u}du \right| + \frac{1}{n} \\
& \quad \leq \frac{|\ln P_n(i) + 1|+2}{n}.
\end{align}

Hence, we have
\begin{equation}
\mathsf{Var}(H(P_n)) \leq 2n\cdot\sum_{1\leq i\leq S} \bE P_n(i) \left( \frac{|\ln P_n(i) +1|+2}{n}\right)^2.
\end{equation}
Noting that $\ln P_n(i) \leq 0$, hence $0\leq |\ln P_n(i) + 1| \leq 1 - \ln P_n(i)$. We have
\begin{align}
\mathsf{Var}(H(P_n)) & \leq \frac{2}{n}\cdot\sum_{1\leq i\leq S} \bE P_n(i) \left( \ln P_n(i)-3 \right)^2 \\
& \leq \frac{2}{n} \sum_{1\leq i\leq S} p_i(\ln p_i - 3)^2 \\
& \leq \frac{2}{n} S \cdot \frac{1}{S} (-\ln S -3)^2 \\
& = \frac{2(\ln S + 3)^2}{n},
\end{align}
where we have used the fact that $x(\ln x -3)^2$ is a concave function on $[0,1]$.

\subsection{Proof of Lemma~\ref{lemma.hbvariance}}
  We apply the bounded differences inequality (Lemma~\ref{lemma.es}). In our case, $F(\hat{P}_B)$ is a function of $n$ independent random variables $\{Z_i\}_{1\le i\le n}$ taking values in $\cZ=\{1,2,\cdots,S\}$. Changing one location of the sample would make some symbol with count $j$ to have count $j+1$, and another symbol with count $i$ to have count $i-1$. Then the absolute value of the total change in the functional estimator is
  \begin{align}
 &    \Bigg|f\left(\frac{j+1+a}{n+Sa}\right) - f\left(\frac{j+a}{n+Sa}\right) \nonumber \\
 & \quad - f\left(\frac{i+a}{n+Sa}\right) + f\left(\frac{i-1+a}{n+Sa}\right)\Bigg| \\
 & \quad   \le 2\max_{1\le k\le n} \left|f\left(\frac{k+a}{n+Sa}\right) - f\left(\frac{k-1+a}{n+Sa}\right)\right|.
  \end{align}

 In light of the Taylor expansion with integral form residue, we have that for $1\ge x\ge t>0$,
 \begin{align}
   (x-t)\ln(x-t) = x\ln x -t (\ln x+1) + \int_x^{x-t} \frac{x-t-u}{u}du
 \end{align}
 so
 \begin{align}
   |(x-t)\ln(x-t) - x\ln x| & \le t|\ln x+1| \nonumber \\
   & \quad + \left|\int_x^{x-t} \frac{x-t}{u}du\right| + t \\
   & \le t|\ln x+1| + 2t \\
   & \le t(3-\ln x).\label{eq:entropy_inequality}
 \end{align}
As a result,
  \begin{align}
    & \max_{1\le k\le n}\left|f\left(\frac{k+a}{n+Sa}\right) - f\left(\frac{k-1+a}{n+Sa}\right)\right| \nonumber \\
    &\quad\le \max_{1\le k\le n}\frac{1}{n+Sa}\left(3-\ln\left(\frac{k+a}{n+Sa}\right) \right) \\
    &\quad \le \frac{1}{n+Sa}\left(3+\ln \left(\frac{n+Sa}{a+1}\right)\right).
  \end{align}
  Hence, the bounded differences inequality shows that
  \begin{align}
   & \mathsf{Var}\left(H(\hat{P}_B)\right) \nonumber \\
   & \quad\le n\max_{2\le k\le n}\left(f\left(\frac{k+a}{n+Sa}\right) - f\left(\frac{k-1+a}{n+Sa}\right)\right)^2 \\
   & \quad \le \frac{n}{(n+Sa)^2}\left(3+\ln\left(\frac{n+Sa}{a+1}\right)\right)^2,
  \end{align}
  which completes the proof of the first part.

To prove the second part, we use the Efron--Stein inequality~(Lemma~\ref{lemma.esnew}). Since $H(\hat{P}_B)=\hat{H}_B(Z_1,\cdots,Z_n)$ is invariant to any permutation of $(Z_1,Z_2,\cdots,Z_n)$, we know that the Efron--Stein inequality implies
 \begin{align}
 &  \mathsf{Var}\left(H(\hat{P}_B)\right)  \nonumber \\
 & \quad \le \frac{n}{2}\bE\left(\hat{H}_B(Z_1',Z_2,\cdots,Z_n)-\hat{H}_B(Z_1,Z_2,\cdots,Z_n)\right)^2,
 \end{align}
 where $Z_1'$ is an i.i.d. copy of $Z_1$.

Recall that
 \begin{align}
   X_i = \sum_{j=1}^n \mathbbm{1}(Z_j=i),\qquad 1\le i\le S.
 \end{align}

 For brevity, we denote the $S$-tuple $(X_1,\cdots,X_S)$ as $X_1^S$, and the $n$-tuple $(Z_1,\cdots,Z_n)$ as $Z_1^n$. A specific realization of $(X_1,\cdots,X_S)$ is denoted by $x_1^S=(x_1,\cdots,x_S)$, and a specific realization of $(Z_1,\cdots,Z_n)$ is denoted by $z_1^n=(z_1,\cdots,z_n)$. Then we have
 \begin{align}
   &\bE\left(\hat{H}_B(Z_1',Z_2,\cdots,Z_n)-\hat{H}_B(Z_1,Z_2,\cdots,Z_n)\right)^2\\ &=
   \sum_{x_1^S} \bP(X_1^S=x_1^S) \nonumber \\
   & \quad \times \bE\left[\left(\hat{H}_B(Z_1',Z_2,\cdots,Z_n)-\hat{H}_B(Z_1,Z_2,\cdots,Z_n)\right)^2\left|X_1^S=x_1^S\right.\right].
 \end{align}

 In light of Lemma~\ref{lemma.conditionalpaninski}, we know that the conditional distribution of $Z_1$ conditioned on $(X_1,\cdots,X_S)$ is the discrete distribution $(X_1/n,X_2/n,\cdots,X_S/n)$. Denoting $r(p)=f(\frac{np+a}{n+Sa})$, we can rewrite
 \begin{align}
   \hat{H}_B(Z_1',Z_2,\cdots,Z_n)-\hat{H}_B(Z_1,Z_2,\cdots,Z_n) = D_- + D_+
 \end{align}
 where
 \begin{align}
   D_- &= r\left(\frac{X_{Z_1}-1}{n}\right) - r\left(\frac{X_{Z_1}}{n}\right)\\
   D_+ &= \begin{cases}
     r\left(\frac{X_{Z_1'}+1}{n}\right) - r\left(\frac{X_{Z_1'}}{n}\right)&Z_1\neq Z_1'\\
     r\left(\frac{X_{Z_1'}}{n}\right) - r\left(\frac{X_{Z_1'}-1}{n}\right)&Z_1= Z_1'
   \end{cases}.
 \end{align}

 Here, $D_-$ is the change in $\hat{H}_B$ that occurs when $Z_1$ is removed according to the distribution $(X_1/n,X_2/n,\cdots,X_S/n)$, and $D_+$ is the change in $\hat{H}$ that occurs when $Z_1'$ is added back according to the true distribution $P$. Now we have
 \begin{align}
   \bE[D_-^2|X_1^S] &= \sum_{i=1}^S \frac{X_i}{n}\left(r\left(\frac{X_{i}-1}{n}\right)-r\left(\frac{X_{i}}{n}\right)\right)^2\\
   \bE[D_+^2|X_1^S] &= \sum_{i=1}^S p_i\frac{X_i}{n}\left(r\left(\frac{X_{i}}{n}\right)-r\left(\frac{X_{i}-1}{n}\right)\right)^2 \nonumber \\
   & \quad   + \sum_{i=1}^S p_i\left(1-\frac{X_i}{n}\right)\left(r\left(\frac{X_{i}+1}{n}\right)-r\left(\frac{X_{i}}{n}\right)\right)^2
 \end{align}
 where we define $r(x)=0$ when $x\notin[0,1]$. Then, by the law of iterated expectation, we know that
 \begin{align}
   \bE[D_-^2] &= \sum_{i=1}^S \sum_{j=0}^n\frac{j}{n}\left(r\left(\frac{j-1}{n}\right)-r\left(\frac{j}{n}\right)\right)^2 \nonumber \\
   & \quad \times\bP(\mathsf{B}(n,p_i)=j)\\
   \bE[D_+^2] &= \sum_{i=1}^S \sum_{j=0}^n\Bigg(\frac{j}{n}\left(r\left(\frac{j-1}{n}\right)-r\left(\frac{j}{n}\right)\right)^2  \nonumber \\
   & \quad + \left(1-\frac{j}{n}\right) \left(r\left(\frac{j+1}{n}\right)-r\left(\frac{j}{n}\right)\right)^2 \Bigg) \nonumber \\
   & \quad \times p_i\bP(\mathsf{B}(n,p_i)=j). 
 \end{align}

 After some algebra we can show that $\bE[D_-^2]=\bE[D_+^2]$. Hence, we have
 \begin{align}
 &  \mathsf{Var}\left(H(\hat{P}_B)\right) \nonumber \\
  & \quad\le \frac{n}{2}\bE\left(D_-+D_+\right)^2\le n\bE\left(D_-^2+D_+^2\right) = 2n\bE D_-^2 \\
   &\quad= 2n\sum_{i=1}^S\bE P_n(i)\left(r(P_n(i)-\frac{1}{n})-r(P_n(i))\right)^2\\
   &\quad\le 2n\sum_{i=1}^S \bE P_n(i)\left(\frac{1}{n+Sa}\left[3-\ln \left(\frac{nP_n(i)+a}{n+Sa}\right)\right]\right)^2\\
   &\quad= \frac{2n}{(n+Sa)^2}\sum_{i=1}^S \bE P_n(i)\left(3-\ln \left(\frac{nP_n(i)+a}{n+Sa}\right)\right)^2\\
   &\quad\le \frac{2n}{(n+Sa)^2}\sum_{i=1}^S p_i\left(3-\ln \left(\frac{np_i+a}{n+Sa}\right)\right)^2\\
   &\quad\le \frac{2n}{(n+Sa)^2}S\cdot \frac{1}{S}\left(3-\ln \left(\frac{n/S+a}{n+Sa}\right)\right)^2\\
   &\quad= \frac{2n(3+\ln S)^2}{(n+Sa)^2}
 \end{align}
 where we have used the inequality (\ref{eq:entropy_inequality}) and Jensen's inequality due to
 \begin{align}
 &  \frac{d^2}{dx^2}\left[x\left(\ln \left(\frac{nx+a}{n+Sa}\right)-3\right)^2\right] \nonumber \\
 & \quad = \frac{n}{nx+a}\Bigg[3\left(\ln \left(\frac{nx+a}{n+Sa}\right)-3\right)\nonumber \\
 & \qquad +\frac{nx}{nx+a}\left(4-\ln \left(\frac{nx+a}{n+Sa}\right)\right)\Bigg] \\
 & \quad <0.
 \end{align}

\subsection{Proof of Lemma~\ref{lemma.biasalphalarge}}

It is well known (see, e.g. \cite[Cor. 10.4.2]{Devore--Lorentz1993}) that if $f$ is concave in $(0,1)$, then
\begin{equation}
f(x) - B_n[f](x) \geq 0,\quad 0\leq x\leq 1.
\end{equation}

Hence we focus on deriving the other bound. For concave function $f_\alpha(x) = -x^\alpha,\alpha\in (1,2) $, Taylor's polynomial of degree $5$ at $x = x_0$ takes the form
\begin{align*}
Q_5(x) & = -\frac{\alpha(\alpha-1)}{2} x_0^{\alpha-2} (x-x_0)^2 \nonumber \\
& \quad - \frac{\alpha(\alpha-1)(\alpha-2)}{6} x_0^{\alpha-3} (x-x_0)^3 \nonumber  \\
& \quad -\frac{\alpha(\alpha-1)(\alpha-2)(\alpha-3)}{24} x_0^{\alpha-4}(x-x_0)^4 \nonumber \\
& \quad - \frac{\alpha(\alpha-1)(\alpha-2)(\alpha-3)(\alpha-4)}{120} x_0^{\alpha-5} (x-x_0)^5 \nonumber \\
& \quad + \textrm{affine terms of }x
\end{align*}

We know that the Bernstein polynomial of any affine function on $[0,1]$ is the affine function itself, hence it suffices to consider the non-affine part of $Q_5(x)$. \cite[Prop. 4]{Braess--Sauer2004} showed the following results for Bernstein polynomials:
\begin{lemma}\label{lemma.brastaylor}
Let $0\leq x_0 \leq 1$. Then we have
\begin{align}
B_n[(x-x_0)^2](x_0) & =  \frac{x_0(1-x_0)}{n} \\
B_n[(x-x_0)^3](x_0) & = \frac{x_0(1-x_0)}{n^2}(1-2x_0) \\
B_n[(x-x_0)^4](x_0) & = 3 \frac{x_0^2(1-x_0)^2}{n^2} \nonumber \\
& \quad + \frac{x_0(1-x_0)}{n^3} [ 1-6x_0(1-x_0)] \\
B_n[(x-x_0)^5](x_0) & = \Bigg( 10 \frac{x_0^2(1-x_0)^2}{n^3} \nonumber \\
& \quad + \frac{x_0(1-x_0)}{n^4}[1-12 x_0(1-x_0)] \Bigg) \nonumber \\
& \qquad \times (1-2x_0)
\end{align}
\end{lemma}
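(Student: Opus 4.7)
The key observation underlying the proof is a probabilistic reinterpretation of the Bernstein operator: if $X \sim \mathsf{B}(n, x_0)$, then
\[
B_n[g](x_0) = \sum_{j=0}^n g(j/n)\binom{n}{j}x_0^j(1-x_0)^{n-j} = \mathbb{E}\,g(X/n).
\]
Applying this to $g(x) = (x-x_0)^k$ gives $B_n[(x-x_0)^k](x_0) = n^{-k}\,\mathbb{E}(X-nx_0)^k = n^{-k}\mu_k$, where $\mu_k$ is the $k$-th central moment of the Binomial distribution. Thus the whole lemma reduces to computing the first few central moments of $\mathsf{B}(n,x_0)$ and dividing by the appropriate power of $n$.

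The plan is to extract the central moments via the cumulants, which are far easier to compute for a Binomial law than the moments themselves. The cumulant generating function of $\mathsf{B}(n,x_0)$ is $K(t) = n\log(1-x_0+x_0 e^t)$. Writing $p = x_0$, $q = 1-x_0$ throughout, a short calculation of $K^{(j)}(0)$ for $j = 2,3,4,5$ (either by direct differentiation or by recognising $\mathsf{B}(n,p)$ as a sum of $n$ i.i.d.\ Bernoulli$(p)$ variables whose cumulants are $n$ times those of Bernoulli$(p)$) yields
\[
\kappa_2 = npq,\quad \kappa_3 = npq(q-p),\quad \kappa_4 = npq(1-6pq),\quad \kappa_5 = npq(q-p)(1-12pq).
\]
I would then convert cumulants to central moments using the standard identities $\mu_2 = \kappa_2$, $\mu_3 = \kappa_3$, $\mu_4 = \kappa_4 + 3\kappa_2^2$, and $\mu_5 = \kappa_5 + 10\kappa_2\kappa_3$, obtaining
\[
\mu_4 = npq(1-6pq) + 3n^2p^2q^2, \qquad \mu_5 = npq(q-p)\bigl[(1-12pq) + 10npq\bigr].
\]
Dividing each $\mu_k$ by $n^k$ and substituting $p = x_0$, $q = 1-x_0$ reproduces the four formulas in the lemma exactly.

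As an alternative, one could avoid cumulants and instead use Romanovsky's recursion for the central moments of a Binomial,
\[
\mu_{k+1} = pq\Bigl(nk\,\mu_{k-1} + \frac{d\mu_k}{dp}\Bigr),
\]
starting from $\mu_0 = 1$, $\mu_1 = 0$; this produces the same closed forms with only elementary calculations. A third, purely combinatorial, route is to write $X - np = \sum_{i=1}^n(Z_i - p)$ for i.i.d.\ Bernoulli $Z_i$, expand $(X-np)^k$, and count the surviving multi-indices in which every index appears at least twice --- but this is more tedious for $k = 5$.

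There is no real obstacle here: the lemma is essentially a moment computation for the Binomial distribution, and the only nontrivial step is the arithmetic of turning $\kappa_4, \kappa_5$ into $\mu_4, \mu_5$. The $n^{-k}$ normalization then aligns powers of $n$ so that, for example, the leading order of $B_n[(x-x_0)^4](x_0)$ is $3x_0^2(1-x_0)^2/n^2$ (from $3\kappa_2^2/n^4$) with a lower-order correction $x_0(1-x_0)[1-6x_0(1-x_0)]/n^3$, matching the stated expression. This concludes the proof sketch.
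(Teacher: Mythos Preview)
Your proposal is correct. The paper does not prove this lemma at all: it simply quotes it as \cite[Prop.~4]{Braess--Sauer2004} and uses the formulas as a black box. Your argument therefore goes beyond what the paper supplies.

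The route you take --- identifying $B_n[(x-x_0)^k](x_0)$ with $n^{-k}$ times the $k$-th central Binomial moment and then reading off $\mu_2,\ldots,\mu_5$ from the cumulants $\kappa_j$ via $\mu_4=\kappa_4+3\kappa_2^2$ and $\mu_5=\kappa_5+10\kappa_2\kappa_3$ --- is clean and entirely standard; the alternative you mention (Romanovsky's recursion $\mu_{k+1}=pq(nk\mu_{k-1}+d\mu_k/dp)$) is essentially what Braess and Sauer use in their paper, so either route is acceptable. The arithmetic checks out in every case, including the $k=5$ identity once one notes $q-p=1-2x_0$.
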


Applying Lemma~\ref{lemma.qn1lower} and Lemma~\ref{lemma.brastaylor}, taking $x_0 =x$, we have the desired bound. 

\subsection{Proof of Lemma~\ref{lemma.alphalower}}

It is well known (see, e.g. \cite[Cor. 10.4.2]{Devore--Lorentz1993}) that if $f$ is concave in $(0,1)$, then
\begin{equation}
f(x) - B_n[f](x) \geq 0,\quad 0\leq x\leq 1.
\end{equation}

Hence we focus on deriving the other bound. For function $f_\alpha(x) = x^\alpha$, Taylor's polynomial of degree $3$ at $x = x_0$ takes the form
\begin{align}
Q_3(x) & = \frac{\alpha(\alpha-1)}{2} x_0^{\alpha-2} (x-x_0)^2 \nonumber \\
& \quad  + \frac{\alpha(\alpha-1)(\alpha-2)}{6} x_0^{\alpha-3} (x-x_0)^3 \nonumber \\
& \quad + \textrm{ affine terms of }x.
\end{align}

Applying Lemma~\ref{lemma.qn1lower} and Lemma~\ref{lemma.brastaylor}, taking $x_0 =x$, we have
\begin{align}
& Q_3(x) - B_n[Q_3](x) \nonumber \\
& \quad = - \frac{\alpha(\alpha-1)}{2}x^{\alpha-2} \frac{x(1-x)}{n} \nonumber \\
& \quad \quad - \frac{\alpha(\alpha-1)(\alpha-2)}{6}x^{\alpha-3} \frac{x(1-x)}{n^2}(1-2x) \\
& \quad= \frac{\alpha(1-\alpha)}{2n} x^{\alpha-2}(1-x) \left( x - \frac{2-\alpha}{3n} (1-2x) \right) \\
& \quad \geq \frac{\alpha(1-\alpha)}{2n}x^{\alpha-2} (1-x) \left( x - \frac{2-\alpha}{3n} \right).
\end{align}

Hence, we have
\begin{align}
f_\alpha(x) - B_n[f_\alpha](x) &  \geq Q_3(x) - B_n[Q_3](x) \\
&  \geq \frac{\alpha(1-\alpha)}{2n}x^{\alpha-2} (1-x) \left( x - \frac{2-\alpha}{3n} \right).
\end{align}

\subsection{Proof of Lemma~\ref{lemma.hbbiaslowerbound}}

  By setting $P=(1,0,0,\cdots,0)$, we have $H(P)=0$ and
  \begin{align}
    H(\hat{P}_B) & = -\frac{(S-1)a}{n+Sa}\ln\left(\frac{a}{n+Sa}\right) -\frac{n+a}{n+Sa}\ln\left(\frac{n+a}{n+Sa}\right) \\
    & \ge \frac{(S-1)a}{n+Sa}\ln\left(\frac{n+Sa}{a}\right),
  \end{align}
  hence we have obtained the first lower bound
  \begin{align}
  \sup_{P\in\mathcal{M}_S}\left|\bE_P H(\hat{P}_B) - H(P)\right| \ge \frac{(S-1)a}{n+Sa}\ln\left(\frac{n+Sa}{a}\right).
  \end{align}
  
If $n<Sa$, then
\begin{align}
  \sup_{P\in\mathcal{M}_S}\left|\bE_P H(\hat{P}_B) - H(P)\right|& \ge \frac{(S-1)a}{2Sa}\ln S \\
  & \geq  \frac{S-1}{2S} \ln S. 
\end{align}

If $n > 2ea$, then
\begin{align}
 \sup_{P\in\mathcal{M}_S}\left|\bE_P H(\hat{P}_B) - H(P)\right| & \ge \frac{(S-1)a}{Sa + 2ea}\ln S \\
 & = \frac{S-1}{S+2e} \ln S.  
\end{align}

From now on we assume $n\geq Sa, n\geq 2ea$. For $n\ge 15S$, it follows from applying Lemma~\ref{lemma.glower} that
  \begin{align} \label{eqn.mlepaperlowerbound}
    \sup_{P\in\mathcal{M}_S}\left|\bE_P H(\hat{P}) - H(P)\right| \ge \frac{S-1}{2n} + \frac{S^2}{20n^2} - \frac{1}{12n^2}.
  \end{align}
If $n<15S$, then it follows from applying Lemma~\ref{lemma.glower} that one can essentially take $S = \lfloor n/15 \rfloor$ in (\ref{eqn.mlepaperlowerbound}), and obtain
\begin{equation}
    \sup_{P\in\mathcal{M}_S}\left|\bE_P H(\hat{P}) - H(P)\right| \ge \frac{\lfloor n/15 \rfloor}{2n} - \frac{1}{4n}.
\end{equation}  

It follows from a refinement result of Cover and Thomas~\cite[Thm. 17.3.3]{Cover--Thomas2006} that when $| \hat{p}_{B,i} - \hat{p}_i | \leq 1/2$ for all $i$ (which is ensured by condition $n\geq Sa$), we have
\begin{align}
 |H(\hat{P}_B) - H(\hat{P})| \leq S f \left( \frac{\| \hat{P}_B - \hat{P} \|_1}{S} \right),
\end{align}
where $f(x) = -x\ln x, x \in [0,1]$. We have
\begin{align}
\frac{1}{S} \| \hat{P}_B - \hat{P} \|_1 & = \frac{1}{S} \sum_{i = 1}^S \frac{|S \hat{p}_i - 1|a}{n+Sa} \\
& \leq \frac{2(S-1)a}{S(n+Sa)}, 
\end{align}
where the last step follows from~(\ref{eq:convexcornerpoints}). Since we have assumed $n\geq 2ea$, we have $\frac{2a}{n+Sa} \leq \frac{2a}{n}\leq e^{-1}$. Since the function $f(x) = -x\ln x,x\in [0,1]$ is monotonically increasing when $x\in [0,e^{-1}]$, we have
\begin{align}
 |H(\hat{P}_B) - H(\hat{P})| \le \frac{2(S-1)a}{n+Sa}\ln\left(\frac{n+Sa}{a}\right).
\end{align}

A combination of these two inequalities yield the second lower bound
\begin{align}
&  \sup_{P\in\mathcal{M}_S}\left|\bE_P H(\hat{P}_B) - H(P)\right | \nonumber \\
& \quad \ge  \frac{S-1}{2n} + \frac{S^2}{20n^2} - \frac{1}{12n^2} - \frac{2(S-1)a}{n+Sa}\ln\left(\frac{n+Sa}{a}\right)
\end{align}
when $n\geq 15S$, and the second lower bound
\begin{align}
&  \sup_{P\in\mathcal{M}_S}\left|\bE_P H(\hat{P}_B) - H(P)\right | \nonumber  \\
& \quad \ge  \frac{\lfloor n/15 \rfloor}{2n} - \frac{1}{4n} - \frac{2(S-1)a}{n+Sa}\ln\left(\frac{n+Sa}{a}\right)
\end{align}
when $n<15S$. 

  Hence we are done by using these two lower bounds and the inequality $\max\{a,b\}\ge \frac{3a+b}{4}$.
\section{Proofs of auxiliary lemmas}

\subsection{Proof of Lemma \ref{lem_variance_p_small}}
Since $nX$ is an integer, we have $(nX)^2\ge (nX)^{2\alpha},0<\alpha<1$. Hence, for $p\leq 1/n$,
\begin{align}
  \mathsf{Var}(X^\alpha) & \le \bE X^{2\alpha} \\
  & = \frac{\bE (nX)^{2\alpha}}{n^{2\alpha}} \\
  &  \le  \frac{\bE (nX)^{2}}{n^{2\alpha}} \\
  & = \frac{(np)^2+np(1-p)}{n^{2\alpha}} \\
  &  \le \frac{2}{n^{2\alpha}}\wedge \frac{2p}{n^{2\alpha-1}},
\end{align}
which completes the proof.

\subsection{Proof of Lemma \ref{lem_variance_p_big}}
Denoting $f(p)=p^\alpha,0<\alpha<1$, we have
\begin{align}
& \mathsf{Var}(f(X)) \nonumber \\
& \quad = \bE f^2(X) - (\bE f(X))^2 \\
& \quad= \bE f^2(X) - f^2(p) + f^2(p) - (\bE f(X))^2 \\
& \quad \leq | \bE f^2(X) - f^2(p)|+ | f^2(p) - (\bE f(X)-f(p) + f(p))^2 |\\
& \quad = | \bE f^2(X) - f^2(p)|+ |(\bE f(X)-f(p))^2 \nonumber \\
& \qquad+ 2f(p) (\bE f(X)-f(p))| \\
& \quad \leq | \bE f^2(X) - f^2(p)| + |\bE f(X)-f(p)|^2 \nonumber \\
& \qquad + 2f(p)|\bE f(X)-f(p)|. \label{eqn.varianceestimatelarge}
\end{align}

Hence, it suffices to obtain bounds on $| \bE f^2(X) - f^2(p)|$ and $|\bE f(X) - f(p)|$. Denoting $r(x) = f^2(x)$, it follows from Taylor's formula and the integral representation of the remainder term that
\begin{align}
r(X) & = f^2(p) + r'(p)(X-p) + R_1(X;p) \\
R_1(X;p)& =\int_p^X (X-u)r''(u)du = \frac{1}{2}r''(\eta_X)(X-p)^2
\end{align}
where $\eta_X \in [\min\{X,p\},\max\{X,p\}]$. 

Similarly, we have
\begin{align}
f(X) & = f(p) + f'(p)(X-p) + R_2(X;p) \\
R_2(X;p) & = \int_p^X (X-u)f''(u)du = \frac{1}{2} f''(\nu_X)(X-p)^2,
\end{align}
where $\nu_X\in [\min\{X,p\},\max\{X,p\}]$. 

Taking expectation on both sides with respect to $X$, where $nX \sim \mathsf{B}(n,p)$, we have
\begin{equation}\label{eqn.u2biaslarge}
|\bE f^2(X) - f^2(p)| = |\bE R_1(X;p) |.
\end{equation}
Similarly, we have
\begin{equation}
|\bE f(X) - f(p)| =  | \bE R_2(X;p)|.
\end{equation}

It is straightforward to show that
\begin{align}
  |r''(x)|&=2\alpha(2\alpha-1)x^{2\alpha-2} \le 2x^{2\alpha-2},\\
  |f''(x)|&=\alpha(1-\alpha)x^{\alpha-2} \le \frac{1}{4}x^{\alpha-2}.
\end{align}

Now we are in the position to bound $|\bE R_1(X;p)|$ and $|\bE R_2(X;p)|$. For $|\bE R_1(X;p)|$, we have
  \begin{align}
& |\bE R_1(X;p)| \nonumber \\
& \quad\leq \bE |R_1(X;p)| \\
& \quad= \bE [|R_1(X;p)\mathbbm{1}(X\geq p/2)|] + \bE[R_1(X;p)\mathbbm{1}(X<p/2)] \\
& \quad\leq \bE \left[ 2(p/2)^{2\alpha-2} (X-p)^2\right] + \bE[R_1(X;p)\mathbbm{1}(X<p/2)] \\
& \quad\leq 8\frac{p^{2\alpha-1}}{n} + \sup_{x\leq p/2}|R_1(x;p)|\bP(nX<np/2) \\
& \quad \leq 8\frac{p^{2\alpha-1}}{n} + \sup_{x\leq p/2}|R_1(x;p)| e^{-np/8},
\end{align}
where in the last step we have used Lemma \ref{lem_chernoff}. Regarding $\sup_{x\leq p/2}|R_1(x;p)|$, for any $x\leq p/2$, we have
\begin{align}
R_1(x;p) & = \int_x^p (u-x)r''(u)du \leq \int_x^p (u-x) 2u^{2\alpha-2}du \\
 &\leq 2 \int_x^p u^{2\alpha-1}du \\
 &\leq 2 \int_0^p u^{2\alpha-1}du\\
 &= \frac{p^{2\alpha}}{\alpha}.
\end{align}

Hence, we have
\begin{equation}
|\bE R_1(X;p)| \leq \frac{8p^{2\alpha-1}}{n} + \frac{1}{\alpha}p^{2\alpha}e^{-np/8}.
\end{equation}

Analogously, we obtain the following bound for $|\bE R_2(X;p)|$:
\begin{equation}
|\bE R_2(X;p)|\leq \frac{p^{\alpha-1}}{n} + \frac{1}{4\alpha}p^\alpha e^{-np/8}.
\end{equation}

Plugging these estimates of $|\bE R_1(X;p)|$ and $|\bE R_2(X;p)|$ into (\ref{eqn.varianceestimatelarge}), we have for $p\geq 1/n$,
\begin{align}
\mathsf{Var}(X^\alpha)  &\leq \frac{8p^{2\alpha-1}}{n} + \frac{1}{\alpha}p^{2\alpha}e^{-np/8} \nonumber \\
& \quad +  \left(\frac{p^{\alpha-1}}{n} + \frac{1}{4\alpha} p^\alpha e^{-np/8} \right)^2 \nonumber \\
& \quad +2f(p)\left(\frac{p^{\alpha-1}}{n} + \frac{1}{4\alpha}p^\alpha e^{-np/8} \right)\\
&\le \frac{8p^{2\alpha-1}}{n} + \frac{1}{\alpha}p^{2\alpha}e^{-np/8} + \frac{2p^{2(\alpha-1)}}{n^2} \nonumber \\
& \quad  + \frac{1}{8\alpha^2}p^{2\alpha} e^{-np/4}+2p^\alpha\left(\frac{p^{\alpha-1}}{n} + \frac{1}{4\alpha}p^\alpha e^{-np/8} \right)\\
&\le \frac{10p^{2\alpha-1}}{n} + \frac{3}{2\alpha}p^{2\alpha}e^{-np/8} + \frac{2}{n^{2\alpha}} + \frac{1}{8\alpha^2}p^{2\alpha} e^{-np/4}\\
&\le \frac{10p^{2\alpha-1}}{n} + \frac{3}{2\alpha}\left(\frac{16\alpha}{en}\right)^{2\alpha} + \frac{2}{n^{2\alpha}} + \frac{1}{8\alpha^2}\left(\frac{8\alpha}{en}\right)^{2\alpha}.
\end{align}
where we have used the following inequality in the last step: for $x\in(0,1)$ and any $c>0$,
\begin{align}
  x^{2\alpha}e^{-cnx} \le \left(\frac{2\alpha}{cen}\right)^{2\alpha}.
\end{align}

Note that if $0<\alpha<1/2$, we can upper bound $\frac{10p^{2\alpha-1}}{n}$ by $\frac{10}{n^{2\alpha}}$, since we have constrained $p\geq \frac{1}{n}$.

\bibliographystyle{IEEEtran}
\bibliography{di}

\newcommand{\noopsort}[1]{}
\begin{thebibliography}{10}
\providecommand{\url}[1]{#1}
\csname url@samestyle\endcsname
\providecommand{\newblock}{\relax}
\providecommand{\bibinfo}[2]{#2}
\providecommand{\BIBentrySTDinterwordspacing}{\spaceskip=0pt\relax}
\providecommand{\BIBentryALTinterwordstretchfactor}{4}
\providecommand{\BIBentryALTinterwordspacing}{\spaceskip=\fontdimen2\font plus
\BIBentryALTinterwordstretchfactor\fontdimen3\font minus
  \fontdimen4\font\relax}
\providecommand{\BIBforeignlanguage}[2]{{%
\expandafter\ifx\csname l@#1\endcsname\relax
\typeout{** WARNING: IEEEtran.bst: No hyphenation pattern has been}%
\typeout{** loaded for the language `#1'. Using the pattern for}%
\typeout{** the default language instead.}%
\else
\language=\csname l@#1\endcsname
\fi
#2}}
\providecommand{\BIBdecl}{\relax}
\BIBdecl

\bibitem{Olsen--Meyer--Bontempi2009impact}
C.~Olsen, P.~E. Meyer, and G.~Bontempi, ``On the impact of entropy estimation
  on transcriptional regulatory network inference based on mutual
  information,'' \emph{EURASIP Journal on Bioinformatics and Systems Biology},
  vol. 2009, no.~1, p. 308959, 2009.

\bibitem{Pluim--Maintz--Viergever2003mutual}
J.~P. Pluim, J.~A. Maintz, and M.~A. Viergever, ``Mutual-information-based
  registration of medical images: a survey,'' \emph{Medical Imaging, IEEE
  Transactions on}, vol.~22, no.~8, pp. 986--1004, 2003.

\bibitem{Viola--Wells1997alignment}
P.~Viola and W.~M. Wells~III, ``Alignment by maximization of mutual
  information,'' \emph{International journal of computer vision}, vol.~24,
  no.~2, pp. 137--154, 1997.

\bibitem{Batina--Gierlichs--Prouff--Rivain--Standaert--Veyrat2011mutual}
L.~Batina, B.~Gierlichs, E.~Prouff, M.~Rivain, F.-X. Standaert, and
  N.~Veyrat-Charvillon, ``Mutual information analysis: a comprehensive study,''
  \emph{Journal of Cryptology}, vol.~24, no.~2, pp. 269--291, 2011.

\bibitem{Hill1973diversity}
M.~O. Hill, ``Diversity and evenness: a unifying notation and its
  consequences,'' \emph{Ecology}, vol.~54, no.~2, pp. 427--432, 1973.

\bibitem{Franchini--Its--Korepin2008Renyi}
F.~Franchini, A.~Its, and V.~Korepin, ``R\'enyi entropy of the {XY} spin
  chain,'' \emph{Journal of Physics A: Mathematical and Theoretical}, vol.~41,
  no.~2, p. 025302, 2008.

\bibitem{Shannon1948}
C.~E. Shannon, ``A mathematical theory of communication,'' \emph{The Bell
  System Technical Journal}, vol.~27, pp. 379--423, 623--656, 1948.

\bibitem{Breiman--Friedman--Stone--Olshen1984classification}
L.~Breiman, J.~Friedman, C.~J. Stone, and R.~A. Olshen, \emph{Classification
  and regression trees}.\hskip 1em plus 0.5em minus 0.4em\relax CRC press,
  1984.

\bibitem{Renyi1961measures}
A.~R\'enyi, ``On measures of entropy and information,'' in \emph{Fourth
  Berkeley Symposium on Mathematical Statistics and Probability}, 1961, pp.
  547--561.

\bibitem{Jiao--Venkat--Han--Weissman2015minimax}
J.~Jiao, K.~Venkat, Y.~Han, and T.~Weissman, ``Minimax estimation of
  functionals of discrete distributions,'' \emph{Information Theory, IEEE
  Transactions on}, vol.~61, no.~5, pp. 2835--2885, 2015.

\bibitem{Hajek1970characterization}
J.~H{\'a}jek, ``A characterization of limiting distributions of regular
  estimates,'' \emph{Zeitschrift f{\"u}r Wahrscheinlichkeitstheorie und
  verwandte Gebiete}, vol.~14, no.~4, pp. 323--330, 1970.

\bibitem{Hajek1972local}
------, ``Local asymptotic minimax and admissibility in estimation,'' in
  \emph{Proceedings of the sixth Berkeley symposium on mathematical statistics
  and probability}, vol.~1, 1972, pp. 175--194.

\bibitem{LeCam1986asymptotic}
L.~Le~Cam, \emph{Asymptotic methods in statistical decision theory}.\hskip 1em
  plus 0.5em minus 0.4em\relax Springer, 1986.

\bibitem{Paninski2003}
L.~Paninski, ``Estimation of entropy and mutual information,'' \emph{Neural
  Computation}, vol.~15, no.~6, pp. 1191--1253, 2003.

\bibitem{Paninski2004}
------, ``Estimating entropy on $m$ bins given fewer than $m$ samples,''
  \emph{Information Theory, IEEE Transactions on}, vol.~50, no.~9, pp.
  2200--2203, 2004.

\bibitem{Valiant--Valiant2011}
G.~Valiant and P.~Valiant, ``Estimating the unseen: an $n/\log n$-sample
  estimator for entropy and support size, shown optimal via new {CLT}s,'' in
  \emph{Proceedings of the 43rd annual ACM symposium on Theory of
  computing}.\hskip 1em plus 0.5em minus 0.4em\relax ACM, 2011, pp. 685--694.

\bibitem{Valiant--Valiant2013estimating}
P.~Valiant and G.~Valiant, ``Estimating the unseen: improved estimators for
  entropy and other properties,'' in \emph{Advances in Neural Information
  Processing Systems}, 2013, pp. 2157--2165.

\bibitem{Valiant--Valiant2011power}
G.~Valiant and P.~Valiant, ``The power of linear estimators,'' in
  \emph{Foundations of Computer Science (FOCS), 2011 IEEE 52nd Annual Symposium
  on}.\hskip 1em plus 0.5em minus 0.4em\relax IEEE, 2011, pp. 403--412.

\bibitem{Jiao--Venkat--Han--Weissman2014beyond}
J.~Jiao, K.~Venkat, Y.~Han, and T.~Weissman, ``Beyond maximum likelihood: from
  theory to practice,'' \emph{arXiv preprint arXiv:1409.7458}, 2014.

\bibitem{jiao2016beyond}
J.~Jiao, Y.~Han, and T.~Weissman, ``Beyond maximum likelihood: Boosting the
  {C}how-{L}iu algorithm for large alphabets,'' in \emph{Signals, Systems and
  Computers, 2016 50th Asilomar Conference on}.\hskip 1em plus 0.5em minus
  0.4em\relax IEEE, 2016, pp. 321--325.

\bibitem{Wu--Yang2014minimax}
Y.~Wu and P.~Yang, ``Minimax rates of entropy estimation on large alphabets via
  best polynomial approximation,'' \emph{IEEE Transactions on Information
  Theory}, vol.~62, no.~6, pp. 3702--3720, 2016.

\bibitem{Acharya--Orlitsky--Suresh--Tyagi2014complexity}
J.~Acharya, A.~Orlitsky, A.~T. Suresh, and H.~Tyagi, ``Estimating {R}\'enyi
  entropy of discrete distributions,'' \emph{IEEE Transactions on Information
  Theory}, vol.~63, no.~1, pp. 38--56, 2017.

\bibitem{wu2015chebyshev}
Y.~Wu and P.~Yang, ``Chebyshev polynomials, moment matching, and optimal
  estimation of the unseen,'' \emph{arXiv preprint arXiv:1504.01227}, 2015.

\bibitem{Han--Jiao--Weissman2016minimaxdivergence}
Y.~Han, J.~Jiao, and T.~Weissman, ``Minimax rate-optimal estimation of
  divergences between discrete distributions,'' \emph{arXiv preprint
  arXiv:1605.09124}, 2016.

\bibitem{jiao2016minimaxl1distance}
J.~Jiao, Y.~Han, and T.~Weissman, ``Minimax estimation of the ${L}_1$
  distance,'' in \emph{Information Theory (ISIT), 2016 IEEE International
  Symposium on}.\hskip 1em plus 0.5em minus 0.4em\relax IEEE, 2016, pp.
  750--754.

\bibitem{bu2016estimation}
Y.~Bu, S.~Zou, Y.~Liang, and V.~V. Veeravalli, ``Estimation of {KL} divergence:
  Optimal minimax rate,'' \emph{arXiv preprint arXiv:1607.02653}, 2016.

\bibitem{orlitsky2016optimal}
A.~Orlitsky, A.~T. Suresh, and Y.~Wu, ``Optimal prediction of the number of
  unseen species,'' \emph{Proceedings of the National Academy of Sciences}, p.
  201607774, 2016.

\bibitem{wu2016sample}
Y.~Wu and P.~Yang, ``Sample complexity of the distinct elements problem,''
  \emph{arXiv preprint arXiv:1612.03375}, 2016.

\bibitem{Miller1955}
G.~A. Miller, ``Note on the bias of information estimates,'' \emph{Information
  {T}heory in {P}sychology: {P}roblems and {M}ethods}, vol.~2, pp. 95--100,
  1955.

\bibitem{Carlton1969bias}
A.~Carlton, ``On the bias of information estimates.'' \emph{Psychological
  Bulletin}, vol.~71, no.~2, p. 108, 1969.

\bibitem{Grassberger1988finite}
P.~Grassberger, ``Finite sample corrections to entropy and dimension
  estimates,'' \emph{Physics Letters A}, vol. 128, no.~6, pp. 369--373, 1988.

\bibitem{Zahl1977jackknifing}
S.~Zahl, ``Jackknifing an index of diversity,'' \emph{Ecology}, vol.~58, no.~4,
  pp. 907--913, 1977.

\bibitem{Hausser--Strimmer2009entropy}
J.~Hausser and K.~Strimmer, ``Entropy inference and the {J}ames-{S}tein
  estimator, with application to nonlinear gene association networks,''
  \emph{The Journal of Machine Learning Research}, vol.~10, pp. 1469--1484,
  2009.

\bibitem{Chao--Shen2003nonparametric}
A.~Chao and T.-J. Shen, ``Nonparametric estimation of {S}hannon's index of
  diversity when there are unseen species in sample,'' \emph{Environmental and
  ecological statistics}, vol.~10, no.~4, pp. 429--443, 2003.

\bibitem{Daub--Steuer--Selbig--Kloska2004}
C.~O. Daub, R.~Steuer, J.~Selbig, and S.~Kloska, ``Estimating mutual
  information using {B}-spline functions--an improved similarity measure for
  analysing gene expression data,'' \emph{BMC bioinformatics}, vol.~5, no.~1,
  p. 118, 2004.

\bibitem{Grassberger2008entropy}
P.~Grassberger, ``Entropy estimates from insufficient samplings,'' \emph{arXiv
  preprint physics/0307138}, 2008.

\bibitem{Vinck--Battaglia--Balakirsky--Vinck--Pennartz2012estimation}
M.~Vinck, F.~P. Battaglia, V.~B. Balakirsky, A.~H. Vinck, and C.~M. Pennartz,
  ``Estimation of the entropy based on its polynomial representation,''
  \emph{Physical Review E}, vol.~85, no.~5, p. 051139, 2012.

\bibitem{Schurmann--Grassberger1996entropy}
T.~Sch{\"u}rmann and P.~Grassberger, ``Entropy estimation of symbol
  sequences,'' \emph{Chaos: An Interdisciplinary Journal of Nonlinear Science},
  vol.~6, no.~3, pp. 414--427, 1996.

\bibitem{Schober2013some}
S.~Schober, ``Some worst-case bounds for {B}ayesian estimators of discrete
  distributions,'' in \emph{Information Theory Proceedings (ISIT), 2013 IEEE
  International Symposium on}.\hskip 1em plus 0.5em minus 0.4em\relax IEEE,
  2013, pp. 2194--2198.

\bibitem{Wolpert--Wolf1995}
D.~H. Wolpert and D.~R. Wolf, ``Estimating functions of probability
  distributions from a finite set of samples,'' \emph{Physical Review E},
  vol.~52, no.~6, p. 6841, 1995.

\bibitem{Holste--Grosse--Herzel1998bayes}
D.~Holste, I.~Grosse, and H.~Herzel, ``Bayes' estimators of generalized
  entropies,'' \emph{Journal of Physics A: Mathematical and General}, vol.~31,
  no.~11, p. 2551, 1998.

\bibitem{Nemenman--Shafee--Bialek2002entropy}
I.~Nemenman, F.~Shafee, and W.~Bialek, ``Entropy and inference, revisited,''
  \emph{Advances in neural information processing systems}, vol.~1, pp.
  471--478, 2002.

\bibitem{Archer--Park--Pillow2012bayesian}
E.~Archer, I.~M. Park, and J.~W. Pillow, ``Bayesian estimation of discrete
  entropy with mixtures of stick-breaking priors,'' in \emph{Advances in Neural
  Information Processing Systems}, 2012, pp. 2015--2023.

\bibitem{Nemenman--Bialek--vanSteveninck2004entropy}
I.~Nemenman, W.~Bialek, and R.~d.~R. van Steveninck, ``Entropy and information
  in neural spike trains: Progress on the sampling problem,'' \emph{Physical
  Review E}, vol.~69, no.~5, p. 056111, 2004.

\bibitem{Nemenman2011coincidences}
I.~Nemenman, ``Coincidences and estimation of entropies of random variables
  with large cardinalities,'' \emph{Entropy}, vol.~13, no.~12, pp. 2013--2023,
  2011.

\bibitem{Lehmann--Casella1998theory}
E.~L. Lehmann and G.~Casella, \emph{Theory of point estimation}.\hskip 1em plus
  0.5em minus 0.4em\relax Springer, 1998, vol.~31.

\bibitem{Han--Jiao--Weissman2015minimaxdistribution}
Y.~Han, J.~Jiao, and T.~Weissman, ``Minimax estimation of discrete
  distributions under $\ell_1$ loss,'' \emph{IEEE Transactions on Information
  Theory}, vol.~61, no.~11, pp. 6343--6354, 2015.

\bibitem{Ibragimov--Hasminskii1981}
I.~A. Ibragimov and R.~Z. Has'~minskii, \emph{Statistical estimation:
  asymptotic theory}.\hskip 1em plus 0.5em minus 0.4em\relax Springer-Verlag
  New York, 1981, vol.~2.

\bibitem{Antos--Kontoyiannis2001convergence}
A.~Antos and I.~Kontoyiannis, ``Convergence properties of functional estimates
  for discrete distributions,'' \emph{Random Structures \& Algorithms},
  vol.~19, no. 3-4, pp. 163--193, 2001.

\bibitem{Efron--Thisted1976}
\BIBentryALTinterwordspacing
B.~Efron and R.~Thisted, ``\BIBforeignlanguage{English}{Estimating the number
  of unsen species: How many words did {S}hakespeare know?}''
  \emph{\BIBforeignlanguage{English}{Biometrika}}, vol.~63, no.~3, pp. pp.
  435--447, 1976. [Online]. Available:
  \url{http://www.jstor.org/stable/2335721}
\BIBentrySTDinterwordspacing

\bibitem{Efron--Stein1981jackknife}
B.~Efron and C.~Stein, ``The jackknife estimate of variance,'' \emph{The Annals
  of Statistics}, pp. 586--596, 1981.

\bibitem{Boucheron--Lugosi--Massart2013}
S.~Boucheron, G.~Lugosi, and P.~Massart, \emph{Concentration inequalities: A
  nonasymptotic theory of independence}.\hskip 1em plus 0.5em minus 0.4em\relax
  Oxford University Press, 2013.

\bibitem{Barndorff--Cox1989asymptotic}
O.~E. Barndorff-Nielsen and D.~R. Cox, \emph{Asymptotic techniques for use in
  statistics}.\hskip 1em plus 0.5em minus 0.4em\relax Chapman \& Hall, 1989.

\bibitem{Small2010expansions}
C.~G. Small, \emph{Expansions and asymptotics for statistics}.\hskip 1em plus
  0.5em minus 0.4em\relax CRC Press, 2010.

\bibitem{Efron1979bootstrap}
B.~Efron, ``Bootstrap methods: another look at the {J}ackknife,'' \emph{The
  Annals of Statistics}, pp. 1--26, 1979.

\bibitem{Hall1992bootstrap}
P.~Hall, \emph{The bootstrap and Edgeworth expansion}.\hskip 1em plus 0.5em
  minus 0.4em\relax Springer Science \& Business Media, 1992.

\bibitem{Harris1975}
B.~Harris, ``The statistical estimation of entropy in the non-parametric
  case,'' DTIC Document, Tech. Rep., 1975.

\bibitem{Jacquet--Szpankowski1999entropy}
P.~Jacquet and W.~Szpankowski, ``Entropy computations via analytic
  depoissonization,'' \emph{Information Theory, IEEE Transactions on}, vol.~45,
  no.~4, pp. 1072--1081, 1999.

\bibitem{Flajolet1999singularity}
P.~Flajolet, ``Singularity analysis and asymptotics of {B}ernoulli sums,''
  \emph{Theoretical Computer Science}, vol. 215, no.~1, pp. 371--381, 1999.

\bibitem{Cichon--Golkbiewski--Kardas--Klonowski}
J.~Cicho{\'n}, Z.~Golkebiewski, M.~Kardas, and M.~Klonowski, ``On delta-method
  of moments and probabilistic sums,'' 2013.

\bibitem{Bernstein1958collected}
S.~Bernstein, ``Collected works: Vol 1. constructive theory of functions
  (1905-1930), {E}nglish translation,'' \emph{Atomic Energy Commission,
  Springfield, Va}, 1958.

\bibitem{Paltanea2004}
R.~Paltanea, \emph{Approximation theory using positive linear operators}.\hskip
  1em plus 0.5em minus 0.4em\relax Springer, 2004.

\bibitem{Ditzian--Totik1987}
Z.~Ditzian and V.~Totik, \emph{Moduli of smoothness}.\hskip 1em plus 0.5em
  minus 0.4em\relax Springer, 1987.

\bibitem{Gonska1979quantitative}
H.~H. Gonska, \emph{Quantitative Aussagen zur Approximation durch positive
  lineare Operatoren}.\hskip 1em plus 0.5em minus 0.4em\relax Gesamthochschule
  Duisburg, 1979.

\bibitem{Han--Jiao--Weissman2015adaptive}
Y.~Han, J.~Jiao, and T.~Weissman, ``Adaptive estimation of {S}hannon entropy,''
  in \emph{Information Theory (ISIT), 2015 IEEE International Symposium
  on}.\hskip 1em plus 0.5em minus 0.4em\relax IEEE, 2015, pp. 1372--1376.

\bibitem{Strukov--Timan1977mathematical}
L.~Strukov and A.~Timan, ``Mathematical expectation of continuous functions of
  random variables. smoothness and variance,'' \emph{Siberian Mathematical
  Journal}, vol.~18, no.~3, pp. 469--474, 1977.

\bibitem{Walk1980probabilistic}
H.~Walk, ``Probabilistic methods in the approximation by linear positive
  operators,'' in \emph{Indagationes Mathematicae (Proceedings)}, vol.~83,
  no.~4.\hskip 1em plus 0.5em minus 0.4em\relax Elsevier, 1980, pp. 445--455.

\bibitem{Hahn1981note}
L.~Hahn \emph{et~al.}, ``A note on stochastic methods in connection with
  approximation theorems for positive linear operators,'' \emph{Pacific J.
  Math}, vol. 101, pp. 307--319, 1981.

\bibitem{Braess--Sauer2004}
D.~Braess and T.~Sauer, ``Bernstein polynomials and learning theory,''
  \emph{Journal of Approximation Theory}, vol. 128, no.~2, pp. 187--206, 2004.

\bibitem{Diaconis--Zabell1991closed}
P.~Diaconis and S.~Zabell, ``Closed form summation for classical distributions:
  variations on a theme of de moivre,'' \emph{Statistical Science}, pp.
  284--302, 1991.

\bibitem{Feller2008introduction}
W.~Feller, \emph{An introduction to probability theory and its
  applications}.\hskip 1em plus 0.5em minus 0.4em\relax John Wiley \& Sons,
  2008, vol.~2.

\bibitem{Tsybakov2008}
A.~Tsybakov, \emph{Introduction to Nonparametric Estimation}.\hskip 1em plus
  0.5em minus 0.4em\relax Springer-Verlag, 2008.

\bibitem{Gill--Levit1995applications}
R.~D. Gill and B.~Y. Levit, ``Applications of the van {T}rees inequality: a
  {B}ayesian {C}ram{\'e}r-{R}ao bound,'' \emph{Bernoulli}, pp. 59--79, 1995.

\bibitem{Paltanea2008some}
R.~Paltanea, ``On some constants in approximation by {B}ernstein operators,''
  \emph{General Mathematics}, vol.~16, no.~4, p. 137–148, 2008.

\bibitem{Devore--Lorentz1993}
R.~A. DeVore and G.~G. Lorentz, \emph{Constructive approximation}.\hskip 1em
  plus 0.5em minus 0.4em\relax Springer, 1993, vol. 303.

\bibitem{Totik1994approximation}
V.~Totik, ``Approximation by {B}ernstein polynomials,'' \emph{American Journal
  of Mathematics}, pp. 995--1018, 1994.

\bibitem{Ditzian2007}
Z.~Ditzian, ``Polynomial approximation and $\omega^r_\varphi(f,t)$ twenty years
  later,'' \emph{Surveys in Approximation Theory}, vol.~3, pp. 106--151, 2007.

\bibitem{Wald1950statistical}
A.~Wald, \emph{Statistical decision functions.}\hskip 1em plus 0.5em minus
  0.4em\relax Wiley, 1950.

\bibitem{joag-dev1983}
K.~Joag-Dev and F.~Proschan, ``Negative association of random variables with
  applications,'' \emph{The Annals of Statistics}, vol.~11, no.~1, pp.
  286--295, March 1983.

\bibitem{batir2008inequalities}
N.~Batir, ``Inequalities for the gamma function,'' \emph{Archiv der
  Mathematik}, vol.~91, no.~6, pp. 554--563, 2008.

\bibitem{mitzenmacher2005probability}
M.~Mitzenmacher and E.~Upfal, \emph{Probability and computing: Randomized
  algorithms and probabilistic analysis}.\hskip 1em plus 0.5em minus
  0.4em\relax Cambridge University Press, 2005.

\bibitem{Cover--Thomas2006}
T.~M. Cover and J.~A. Thomas, \emph{Elements of Information Theory},
  2nd~ed.\hskip 1em plus 0.5em minus 0.4em\relax New York: Wiley, 2006.

\end{thebibliography}

\begin{IEEEbiographynophoto}{Jiantao Jiao}
(S'13) received the B.Eng. degree with the highest honor in Electronic Engineering from Tsinghua University, Beijing, China in 2012, and a Master's degree in Electrical Engineering from Stanford University in 2014. He is currently working towards the Ph.D. degree in the Department of Electrical Engineering at Stanford University. He is a recipient of the Stanford Graduate Fellowship (SGF). His research interests include information theory and statistical signal processing, with applications in communication, control,
computation, networking, data compression, and learning. 
\end{IEEEbiographynophoto}

\begin{IEEEbiographynophoto}{Kartik Venkat}
(S'12) is currently a Research Associate in New York at PDT Partners, a quantitative investment manager.  Kartik's research interests include statistical inference, information theory, machine learning and their inter-connections. Kartik received his Ph.D. in Electrical Engineering from Stanford University in 2015. Kartik also received a Masterâ€™s Degree from the same university in 2012, and a Bachelorâ€™s degree from the Indian Institute of Technology Kanpur in 2010, both in Electrical Engineering. Kartik was the recipient of the Thomas M. Cover Dissertation Award in 2016 awarded by the IEEE. Kartik was named the 2015 Marconi Society Paul Baran Young Scholar. His other honors include the Jack Keil Wolf Student Best Paper Award at the 2012 International Symposium on Information Theory, a Stanford Graduate Fellowship for Engineering and Sciences, the Numerical Technologies Founders Graduate Prize, and the National Talent Scholarship awarded by the Government of India. 
\end{IEEEbiographynophoto}

\begin{IEEEbiographynophoto}{Yanjun Han}
(S'14) received his B.Eng. degree with the highest honor in
Electronic Engineering from Tsinghua University, Beijing, China in 2015, and a Master's degree in Electrical Engineering from Stanford University in 2017. He is currently working towards the Ph.D. degree in the Department of Electrical Engineering at Stanford University. His research interests include information theory and statistics, with applications in communications, data compression, and learning.
\end{IEEEbiographynophoto}

\begin{IEEEbiographynophoto}{Tsachy Weissman}
(S'99-M'02-SM'07-F'13) graduated summa cum laude with a
B.Sc. in electrical engineering from the Technion in 1997, and earned
his Ph.D. at the same place in 2001. He then worked at Hewlett-Packard
Laboratories with the information theory group until 2003, when he joined
Stanford University, where he is Professor of Electrical
Engineering and incumbent of the
STMicroelectronics chair in the School of Engineering.
He has spent leaves at the Technion, and at ETH Zurich.

Tsachy's research is focused on information theory, statistical signal
processing, the interplay between them, and their applications.

He is recipient of several best paper awards, and prizes for excellence in research.

He served on the editorial board of the \textsc{IEEE Transactions on Information Theory} from Sept. 2010 to Aug. 2013, and currently serves on the editorial board of Foundations and Trends in Communications and Information Theory.
\end{IEEEbiographynophoto}

\end{document}